\definecolor{darkblue}{rgb}{0,0,0.5}
\tikzstyle{gate}=[fill={rgb,255: red,222; green,222; blue,222}, draw=black, shape=rectangle, minimum width=0.75cm, minimum height=0.75cm]
\tikzstyle{medgate}=[fill={rgb,255: red,222; green,222; blue,222}, draw=black, shape=rectangle, minimum width=0.5cm, minimum height=0.6cm]
\tikzstyle{minigate}=[fill={rgb,255: red,222; green,222; blue,222}, draw=black, shape=rectangle, minimum width=0.5cm, minimum height=0.5cm]
\tikzstyle{state}=[fill={rgb,255: red,222; green,222; blue,222}, regular polygon, regular polygon sides=3, draw, text width=1em, inner sep=0.5mm, outer sep=0mm, shape border rotate=90]
\tikzstyle{ministate}=[fill={rgb,255: red,222; green,222; blue,222}, regular polygon, regular polygon sides=3, draw, font={\footnotesize}, text width=0.7em, inner sep=0.2mm, outer sep=0mm, shape border rotate=90]
\tikzstyle{miniket}=[fill={rgb,255: red,222; green,222; blue,222}, regular polygon, regular polygon sides=3, draw, font={\footnotesize}, text width=0.7em, inner sep=0.2mm, outer sep=0mm, shape border rotate=270]
\tikzstyle{tinystate}=[fill={rgb,255: red,222; green,222; blue,222}, regular polygon, regular polygon sides=3, draw, font={\footnotesize}, text width=0.6em, inner sep=0.2mm, outer sep=0mm, shape border rotate=90]
\tikzstyle{tinyket}=[fill={rgb,255: red,222; green,222; blue,222}, regular polygon, regular polygon sides=3, draw, font={\footnotesize}, text width=0.6em, inner sep=0.2mm, outer sep=0mm, shape border rotate=270]
\tikzstyle{permred}=[fill={rgb,255: red,248; green,206; blue,204}, draw=black, shape=circle, font={\scriptsize}, text width=0.7em, inner sep=0.3mm, outer sep=0mm]
\tikzstyle{permblue}=[fill={rgb,255: red,218; green,232; blue,252}, draw=black, shape=circle, font={\scriptsize}, text width=0.7em, inner sep=0.3mm, outer sep=0mm]
\tikzstyle{permgreen}=[fill={rgb,255: red,218; green,252; blue,232}, draw=black, shape=circle, font={\scriptsize}, text width=0.7em, inner sep=0.3mm, outer sep=0mm]
\tikzstyle{plauette}=[fill={rgb,255: red,169; green,196; blue,235}, regular polygon, regular polygon sides=3, draw, font={\footnotesize}, text width=1.3em, inner sep=0.3mm, outer sep=0mm, shape border rotate=90]
\tikzstyle{plauetterotate}=[fill={rgb,255: red,169; green,196; blue,235}, regular polygon, regular polygon sides=3, draw, font={\footnotesize}, text width=1.3em, inner sep=0.3mm, outer sep=0mm, shape border rotate=270]
\tikzstyle{miniplauette}=[fill={rgb,255: red,169; green,196; blue,235}, regular polygon, regular polygon sides=3, draw, font={\footnotesize}, text width=0.5em, inner sep=0.2mm, outer sep=0mm, shape border rotate=90]
\apptocmd{\sloppy}{\hbadness 9999\relax}{}{}
\DeclareMathOperator{\E}{\mathbb{E}}
\DeclareMathOperator{\Prob}{\mathbb{P}}
\DeclarePairedDelimiterX\parvertent[2]{\lparen}{\rparen}%
{#1\,\delimsize\vert\,\mathopen{}#2}
\DeclareRobustCommand{\shortto}{%
  \mathrel{\mathpalette\short@to\relax}%
}
\newcommand{\short@to}[2]{%
  \mkern2mu
  \clipbox{{.5\width} 0 0 0}{$\m@th#1\vphantom{+}{\shortrightarrow}$}%
  }
\newcommand*\diff{\mathop{}\!\mathrm{d}}
\newtheorem{theorem}{Theorem}
\newtheorem{corollary}[theorem]{Corollary}
\newtheorem{lemma}[theorem]{Lemma}
\newenvironment{proof}[1][Proof]{\noindent\textbf{#1.} }{\ \rule{0.5em}{0.5em}}
\newcommand{\calC}{{\cal C}}
\newcommand{\calE}{{\cal E}}
\newcommand{\calF}{{\cal F}}
\newcommand{\calI}{{\cal I}}
\newcommand{\calN}{{\cal N}}
\newcommand{\calO}{{\cal O}}
\newcommand{\calH}{{\cal H}}
\newcommand{\calU}{{\cal U}}
\newcommand{\1}{^{(1)}}
\newcommand{\bI}{\boldsymbol I}
\newcommand{\bmz}{{\bm z}}
\newlength{\ketketwidth}
\newlength{\ketwidth}
\newcommand{\kettstylesep}[3]{
    \settowidth{\ketwidth}{$#2\left|#1\right\rangle$}
    \settowidth{\ketketwidth}{$#2\left.\left|#1\right\rangle\right\rangle$}
    \left|#1\right\rangle#3\hspace{\ketwidth}\hspace{-\ketketwidth}
}
\newcommand{\kett}[1]{
    \left.\mathchoice
        {\kettstylesep{#1}{\displaystyle}{\hspace{0.3em}}}
        {\kettstylesep{#1}{\textstyle}{\hspace{0.3em}}}
        {\kettstylesep{#1}{\scriptstyle}{\hspace{0.3em}}}
        {\kettstylesep{#1}{\scriptscriptstyle}{\hspace{0.25em}}}
    \right\rangle
}
\newcommand{\bbrastylesep}[3]{
    \settowidth{\ketwidth}{$#2\left\langle#1\right|$}
    \settowidth{\ketketwidth}{$#2\left\langle\left\langle#1\right|\right.$}
    #3\hspace{\ketwidth}\hspace{-\ketketwidth}\left\langle#1\right|
}
\newcommand{\bbra}[1]{
    \left\langle\mathchoice
        {\bbrastylesep{#1}{\displaystyle}{\hspace{0.3em}}}
        {\bbrastylesep{#1}{\textstyle}{\hspace{0.3em}}}
        {\bbrastylesep{#1}{\scriptstyle}{\hspace{0.3em}}}
        {\bbrastylesep{#1}{\scriptscriptstyle}{\hspace{0.25em}}}
    \right.
}
\newcommand{\bbrakettstylesep}[4]{
    \settowidth{\ketwidth}{$#3\left\langle#1\middle|#2\right\rangle$}
    \settowidth{\ketketwidth}{$#3\left\langle\left\langle#1\middle|#2\right\rangle\right.$}
    #4\hspace{\ketwidth}\hspace{-\ketketwidth}\left\langle#1\middle|#2\right\rangle#4\hspace{\ketwidth}\hspace{-\ketketwidth}
}
\newcommand{\bbrakett}[2]{
    \left\langle\mathchoice
        {\bbrakettstylesep{#1}{#2}{\displaystyle}{\hspace{0.3em}}}
        {\bbrakettstylesep{#1}{#2}{\textstyle}{\hspace{0.3em}}}
        {\bbrakettstylesep{#1}{#2}{\scriptstyle}{\hspace{0.3em}}}
        {\bbrakettstylesep{#1}{#2}{\scriptscriptstyle}{\hspace{0.25em}}}
    \right\rangle
}
\def\be{\begin{equation}}
\def\ee{\end{equation}}
\def\ba{\begin{eqnarray}}
\def\ea{\end{eqnarray}}
\newcommand{\QZ}[1]{{{\textcolor{black}{#1}}}}
\newcommand{\BZ}[1]{{{\textcolor{black}{#1}}}}
\begin{document}

%25 words: A holographic random circuit sampling algorithm leverages circuit depth with mid-circuit measurements, enabling 10 physical qubits to emulate 85 and demonstrate quantum advantage on IBM devices.

\title{\QZ{Anticoncentrated $n$-bit distribution from $\log(n)$ qubits}}

%Dalzell: Random Quantum Circuits Anticoncentrate in Log Depth

\author{Bingzhi Zhang$^{1,\dag}$}
\author{Quntao Zhuang$^{1,2,\ddag}$}
\affiliation{
$^{1}$Ming Hsieh Department of Electrical and Computer Engineering, University of Southern California, Los
Angeles, California 90089, USA
\\
$^{2}$Department of Physics and Astronomy, University of Southern California, Los
Angeles, California 90089, USA\\
$^\dag$bingzhiz@usc.edu;
$^\ddag$qzhuang@usc.edu
}

%TC:ignore
\begin{abstract}
\QZ{
Random circuit sampling (RCS) is a leading approach to demonstrate quantum advantage, with its believed classical hardness rooted in anticoncentration of output distributions and average-case hardness of probability estimation. Here we show that this association is not fundamental. We introduce holographic random circuit sampling (HRCS), a spatiotemporal protocol that interleaves random unitary evolution with mid-circuit measurements. We prove that $n$ classical bits exhibiting $\epsilon$-approximate anticoncentration of Haar random states can be generated using only $\calO(\log n)$ physical qubits and linear depth, establishing a precise space–time trade-off and indicating efficient classical simulation. Our analyses is built upon exact formulas for collision probability and higher-order power sums. \BZ{Our experimental validation on IBM Quantum devices demonstrates sampling up to $200$ classical bits using only $20$ qubits.}
}
\end{abstract}

%TC:endignore

 \maketitle

%A fully referenced ~200 word summary paragraph; main text of 2,500 words and 4 modest display items (figures, tables) for a typical 6 page article 

%4300 words and 5-6 modest display items for a typical 8 page article; as a guideline up to 50 references if needed and within the allocated page budget. Sections can be separated with subheadings to aid navigation.
\section{Introduction}
\QZ{Demonstrating quantum advantage with near-term devices is a central goal in quantum science. 
Random circuit sampling (RCS)~\cite{hangleiter2023} has emerged as a leading paradigm for this purpose, serving both as a proposed demonstration of quantum supremacy and as a benchmark for near-term quantum processors~\cite{boixo2018characterizing}. Experimental realizations across multiple platforms~\cite{arute2019quantum,wu2021strong,zhu2022quantum,morvan2024phase, gao2025establishing,deCross2025, abanin2025observation, ransford2025helios} have established RCS as a central testbed for quantum computational complexity.}

\QZ{The believed classical hardness of RCS relies on two structural ingredients: anticoncentration of the output distribution and the average-case hardness of approximating output probabilities~\cite{bouland2019complexity}. \BZ{Anticoncentration, the spreading of probability mass over exponentially many outcomes, is widely viewed as a hallmark of simulation hardness and is often implicitly associated with the quantum state entanglement, one of the key quantum resources.} While exact probability evaluation for typical polynomial-depth circuits has been proven average-case hard~\cite{movassagh2023hardness}, anticoncentration and constant-order moment matching have been rigorously established even for shallow random circuits~\cite{barak2020spoofing, dalzell2022random,schuster2025random}.}

\QZ{A common implicit assumption is that generating $n$ classical bits exhibiting anticoncentration of Haar random states requires an $n$-qubit quantum system. 
\BZ{In other words, the attainable level of anticoncentration is believed to (inversely) scale with the spatial Hilbert space dimension.}}
% degree of anticoncentration is believed to scale with spatial Hilbert space dimension.}
\QZ{Here, we overturn this assumption by introducing \emph{holographic random circuit sampling} (HRCS), a spatiotemporal protocol that interleaves random unitary evolution with mid-circuit measurements. \BZ{For a fixed number of physical qubits, HRCS allows an expansion of sampling size in exponential scaling of number of qubits while maintaining Haar-level anticoncentration, showing a huge gap of coherent quantum memory against independent sampling. In turn, we prove that the distribution of $n$ classical bits exhibiting $\epsilon$-approximate Haar-level anticoncentration can be generated using only $\calO(\ln n)$ physical qubits at the cost of $n$ temporal steps for a total depth of $\calO(n\ln n)$, establishing a precise space–time trade-off in random circuit sampling. Experimentally, we implement HRCS on IBM Quantum devices and demonstrate effective sampling up to $200$ classical bits, while using only $20$ physical qubits.}
}

% \QZ{By deriving exact analytical formulas for the ensemble-averaged collision probability and higher-order power sums, we show that Haar-level anticoncentration persists up to exponential time scaling of number of qubits.} 
% \BZ{Crucially, despite this growth in effective sampling dimension, HRCS remains efficiently classically simulable, yielding an explicit classical algorithm for logarithmic-order moment matching. [This does not seem to be correct]}

\QZ{Our results reveal a conceptual separation between structural randomness and computational hardness: anticoncentration and finite-moment statistics do not by themselves require large quantum systems or classical intractability. \BZ{We also demonstrate HRCS for benchmarking the combined quality of circuit gates and mid-circuit measurements in experiments.}
}

\begin{figure*}[t]
    \centering
    \includegraphics[width=\textwidth]{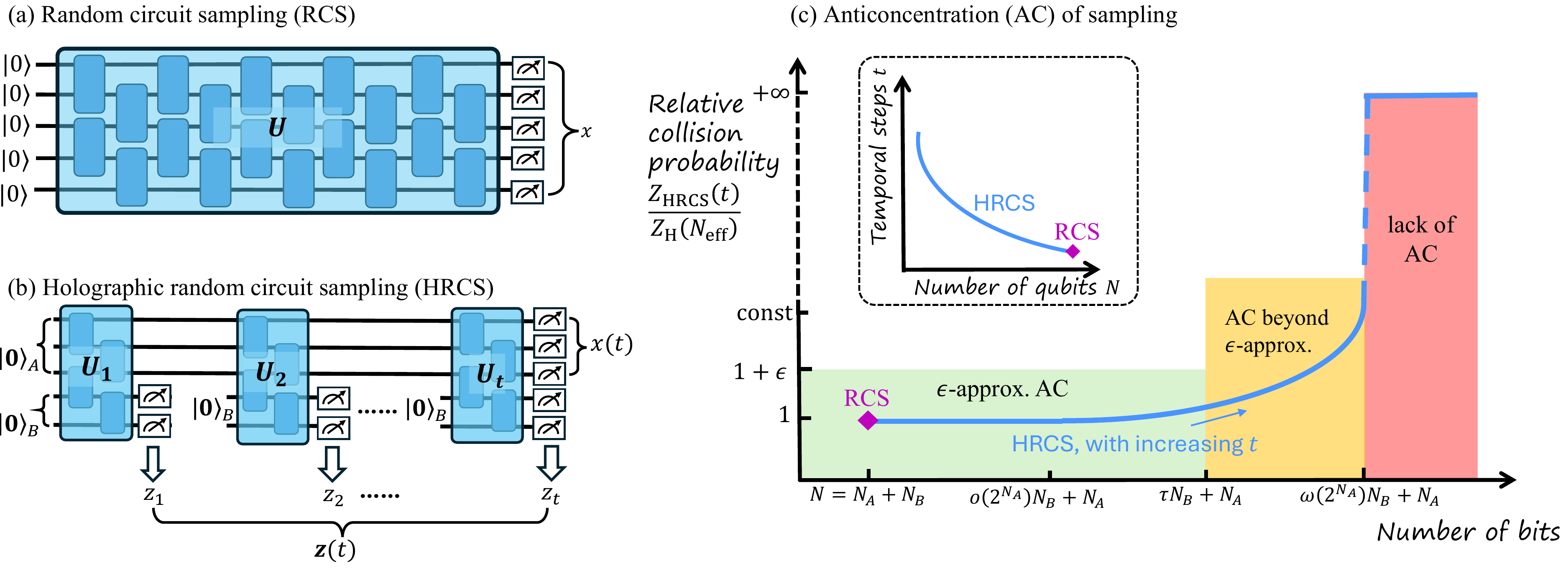}
    \caption{A schematic for quantum circuit sampling methods and main results. In (a) random circuit sampling (RCS), a random unitary circuit is applied on a trivial initial state and one performs computational basis measurements on the output state. (b) In holographic random circuit sampling (HRCS), computational basis measurements are performed on the bath $B$ in each step following a unitary circuit, and on the system $A$ in the end. The sampling task is targeted at the joint distribution of temporal mid-circuit measurements $\bmz(t)$ and the final state measurements $x(t)$. In (c), we summarize the anticoncentration (AC) of sampling in HRCS in the asymptotic limit of $N_A \to \infty$ for increasing effective system size in HRCS. While RCS is a single operating point (magenta diamond), HRCS allows the increase of effective system size (blue curve). The inset shows the spacetime tradeoff of HRCS holding AC. }
    \label{fig:concept}
\end{figure*}

%In this section, we review the sampling in random unitary circuit, introduce the sampling task in hybrid circuit with mid-circuit measurements, and summarize our main results.

%Random circuit sampling (RCS) is a well studied task for demonstration of quantum advantage in near-term quantum devices and a useful tool for quantum device benchmark. 

\

\section{Random Circuit Sampling problem}
As shown in Fig.~\ref{fig:concept}a, in RCS, the quantum system of $N$ qubits is initialized in a trivial product state, i.e. $\ket{\bm 0}=\ket{0}^{\otimes N}$, and a random unitary circuit $U$ is applied on the system. Finally, one performs measurements in a fixed set of basis, i.e. the computational basis, on the output state and collects the corresponding measurement result $x \in \{0,1\}^{N}$. In this regard, each measurement outcome appears with probability $p(x)= |\braket{x|U|\bm 0}|^2$.

%The intuition of the hardness comes from the notion of anticoncentration (AC), as we explain in the following. 

A key ingredient and necessary condition for the hardness of RCS is the anticoncentration (AC) of the measurement distribution. For shallow circuits like constant-depth ones, due to the information lightcone, the output state is only supported on a few computational bases. As a result, the simulation of sampling from $p(x)$ remains relatively easy. 
However, when the circuit depth increases and the unitary becomes complex, $p(x)$ develops AC---widely supported over a number of bases exponential in the number of qubits, leading to a hard task for reproduction on classical devices. Indeed, it has been proven that the computation of $p(x) $ for a typical Haar random unitary is an NP-hard problem~\cite{bouland2019complexity}.

To quantify AC, we consider the widely-adopted~\cite{barak2020spoofing, dalzell2022random, dalzell2024random, fefferman2024effect} metric of collision probability (CP) defined by
\be
    Z \coloneqq \sum_x p(x)^2.
    \label{eq:CP_def}
\ee
Given a shallow quantum circuit, large values of CP indicate concentration---the corresponding RCS is easy. With the increase of layers of gates, the CP of an $N$-qubit state approaches the Haar value of 
\be 
Z_{\rm H}(N)= 2/\left(2^N+1\right), 
\label{ZH_overview}
\ee 
inversely proportional to the system dimension $d=2^N$, leading to a hard RCS problem~\footnote{Note that $Z=Z_H$ is the sweet spot for RCS complexity, as further smaller CP such as the extremely case of $Z_{\rm uni}=1/d$ for a uniform distribution, RCS becomes simple again.}.

%The convergence of CP towards the Haar value in Eq.~\eqref{ZH_overview} indicates that the maximum level of AC in a conventional RCS is entirely determined by the system dimension, regardless of circuit depth. 

% The convergence of CP to the Haar value in Eq.~\eqref{ZH_overview} implies that, 

% In particular, recent results show that such a convergence happens at only a logarithmic-depth quantum circuit. 
% This naturally poses a question:  
\QZ{The central question we address is: to generate $n$ random bits following distribution $\epsilon$-close to AC of Haar random states, $Z \le (1+\epsilon)Z_{\rm H}(n)$, what is the minimum required resource (number of qubits, and circuit depth) for a quantum algorithm?} \BZ{Conventional RCS requires $N = n$ qubits and $\ln(n/\epsilon)$ circuit depth to achieve the $\epsilon$-approximate AC~\cite{barak2020spoofing,dalzell2022random}, but can a relatively small quantum system make use of a large quantum circuit depth to achieve AC?}
%\QZ{Resource can generally include both quantum computation and classical computation, and can be quantified by the required number of physical qubits (space), number of quantum gates/quantum circuit depth (time), and amount of classical computation. Here, we don't focus on classical simulation and therefore are not further distinguishing classical memory and classical computation time. }
% \QZ{Conventional RCS requires $N=n$ qubits and $\log(n/\epsilon)$-depth~\cite{dalzell2022random} to achieve $\epsilon$-approximate Haar value of AC.} 

\begin{figure}[t]
    \centering
    \includegraphics[width=0.45\textwidth]{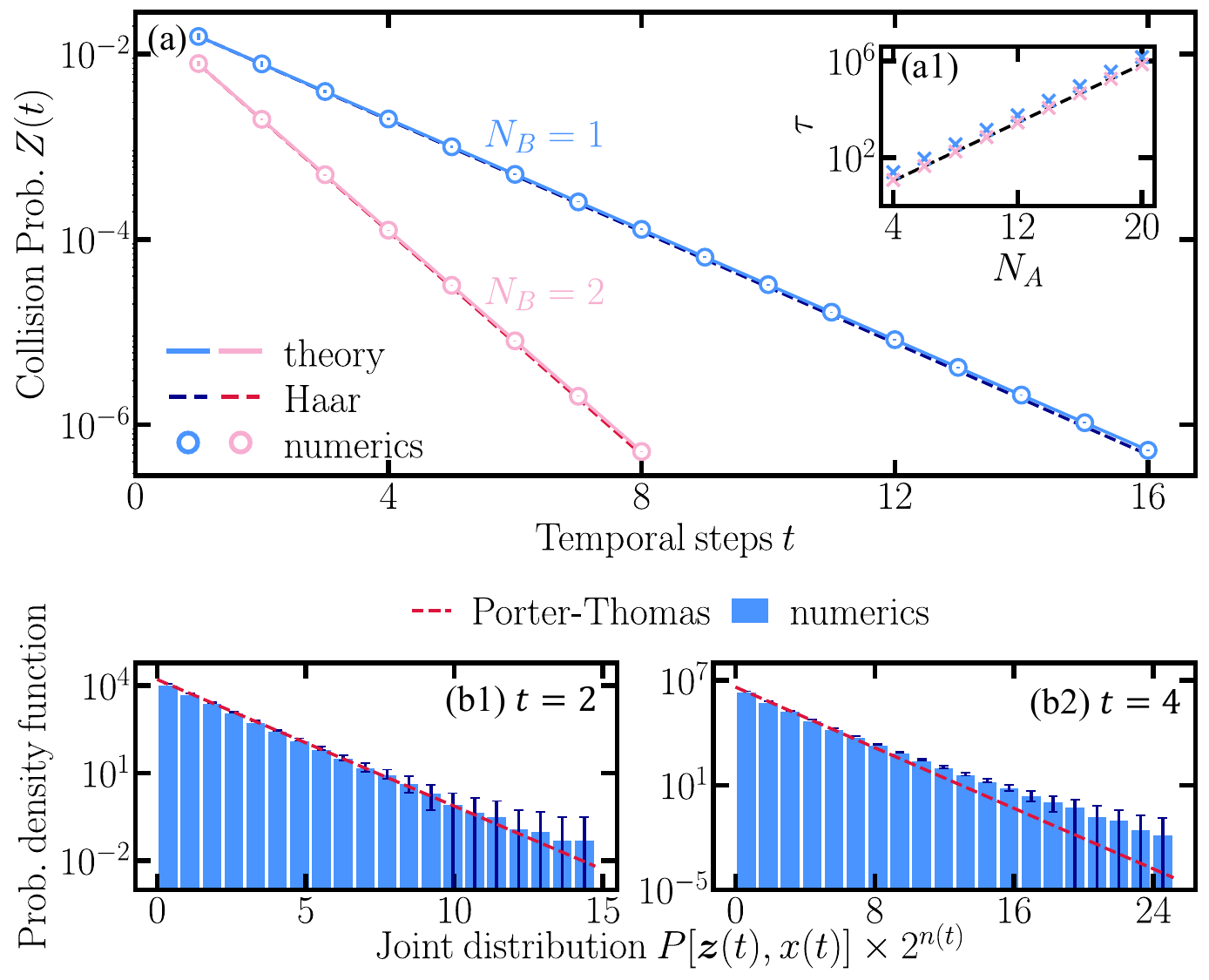}
    \caption{
    \BZ{
    (a) Ensemble-averaged collision probability (CP) $Z(t)$ for HRCS versus temporal steps in a system of $N_A=6$, $N_B=1, 2$ (blue and pink circles) qubits. Colored solid lines represent theoretical result of $Z_{\rm HRCS}(t)$ of Eq.~\eqref{eq:HRCS_CP_spt} in Theorem~\ref{HRCS_CP_spt}. Dark-colored dashed lines are CP for Haar random states $Z_{\rm H}(n)$ in Eq.~\eqref{ZH_overview} with $n = N_A + t N_B$ qubits. Inset (a1) shows the growth of critical temporal steps $\tau$ versus system size $N_A$, with $\epsilon=1$. Blue and pink crosses represent numerical solutions of $Z_{\rm HRCS}(t) = (1+\epsilon)Z_{\rm H}(N_{\rm eff})$ for $N_B=1$ and $N_B=8$ separately. The black line is Eq.~\eqref{eq:max_steps_overview}.
    (b1-b2) Ensemble-averaged probability density function of the joint sampling distribution $P[\bmz(t), x(t)]$ in HRCS of $N_A=6, N_B=4$ qubits at $t=2$ and $t=4$. Red dashed line is the Porter-Thomas distribution corresponding to Haar case.
    The ensemble average is over $50$ HRCS instances with corresponding error bars.
    }
    }
    \label{fig:HRCS_theory}
\end{figure}

\ 

\section{Holographic random circuit sampling}
In this work, we propose holographic random circuit sampling (HRCS) to go beyond the limitation of conventional RCS and \QZ{allows a trade-off between space and time resources---only using as small as $\ln(n)$ qubits to enable anticoncentration of $n$ bits.}

As shown in Fig.~\ref{fig:concept}b, HRCS adopts layers of quantum circuits $U_1,U_2,\cdots, U_t$ jointly on a system of $N_A$ qubits and a bath of $N_B$ qubits, interleaved with mid-circuit measurements on the bath to obtain measurement results, in addition to the final measurements on all qubits. After each measurement, one can optionally choose reset or not reset the bath, as the unitaries are assumed to be random. Besides collecting the `temporal' measurement results $\bmz(t)= (z_1,\cdots,z_t)$ from bath, the final measurement on the system $A$ gives additional `spatial' contribution to the measurement results $x(t)$. The entire measurement outcome, $\bmz(t)$ and $x(t)$, \BZ{has in total of $n(t) \coloneqq N_A + t N_B$ classical bits} growing with the number of rounds of mid-circuit measurements $t$. 
% has an effective dimension of $2^{N_{\rm eff}(t)}$, corresponding to $N_{\rm eff}(t)\coloneqq {N_A+tN_B}$ effective qubits 

Our main result is that AC remains for the \BZ{$2^{n}$}-dimensional measurement results up to an exponential time scale of $t\sim 2^{N_A}$, as sketched in blue line of Fig.~\ref{fig:concept}c. 
% As shown in the inset, the CP of the joint spatiotemporal sampling distribution $P[\bmz(t), x(t)]$ in HRCS (blue solid) is close to the Haar value $Z_{\rm H}(N_A + t N_B)$ (gray solid); 
\BZ{Moreover, HRCS allows a spacetime tradeoff between the number of qubits $N$ and temporal steps $t$ achieving $\epsilon$-approximate AC, illustrated in the inset. On the other hand, conventional RCS can only achieve $Z_{\rm H}(N)$ for a fixed constant number of bits $N = N_A+N_B$ with $N$ qubits and $t=1$ step (magenta diamonds). Therefore, HRCS is able to effective utilize the `quantum volume' of a device to exponentially increase the sampling dimension while keeping AC.} 
%The growth of effective system size in HRCS remains AC till a time-scale $\sim 2^{N_A}$ exponential in system size, as sketched in Fig.~\ref{fig:concept}c. 
% Therefore, HRCS is able to effectively utilize the `quantum volume' of a device to exponentially increase the complexity of a sampling task. 

%As depicted in Fig.~\ref{fig:concept}(b), HRCS applies repeated unitary interactions $U_1,U_2,\cdots, U_t$ on $N_A$ system qubits and $N_B$ bath qubits, interleaved with mid-circuit measurements on the $N_B$ bath qubits. 

%for typical unitaries $U_1,U_2,\cdots, U_t$, the joint distribution $P[\bmz(t), x(t)]$ of the spatiotemporal measurement results has CP close to the Haar value in Eq.~\eqref{ZH_overview} for $N_{\rm eff}$ qubits $Z_{\rm H}(N_{\rm eff})$, as stated in the following theorem.

Formally, we have the following theorem.
\begin{theorem}
\label{HRCS_CP_spt}
    For holographic random circuit sampling with each unitary $U_t$ in $2$-design, the ensemble-averaged collision probability at step $t\ge 1$ is
    \be
        Z_{\rm HRCS}(t) = \frac{2(d_A+1)^{t-1}}{ (1+d_A d_B)^t},
        \label{eq:HRCS_CP_spt}
    \ee
    where $d_A = 2^{N_A}, d_B = 2^{N_B}$ are Hilbert space dimensions of the system and bath. In the large-system limit $d_A \gg 1$,
    \begin{align}
        &Z_{\rm HRCS}(t) \nonumber\\
        &= Z_{\rm H}(\BZ{n})\exp\left[\frac{t\left(1-d_B^{-1}\right) + d_B^{-t}-1}{d_A} + \calO\left(\frac{1}{d_A^2}\right)\right].
        \label{eq:HRCS_CP_spt_asymp}
    \end{align}
\end{theorem}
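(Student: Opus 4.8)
The plan is to recast the ensemble-averaged collision probability as a second-moment (two-copy) quantity and to propagate it one temporal step at a time. With the bath reset to $\ket{0}_B$ after each round (the no-reset case is identical, since each $U_t$ is random and independent), define the Kraus operators $K_z = \bra{z}_B U_t \ket{0}_B$ acting on $\mathcal{H}_A$, so that the subnormalized post-selected system state is $\ket{\psi_t(\bmz)} = K_{z_t}\cdots K_{z_1}\ket{0}_A$ and the joint probability is $P[\bmz,x] = |\braket{x|\psi_t(\bmz)}|^2$. First I would introduce the two-copy operator $\sigma_t = \sum_{\bmz} (\state{\psi_t(\bmz)})^{\otimes 2}$ on $\mathcal{H}_A^{\otimes 2}$ together with the diagonal projector $\Pi_A = \sum_x \state{xx}$, and verify the identity $Z_{\rm HRCS}(t) = \mathrm{Tr}[\Pi_A\,\sigma_t]$. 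The crucial observation is that although $P[\bmz,x]^2$ is quartic in amplitudes, it contains exactly two factors of $U_t$ and two of $U_t^\dagger$ per step, so the second moment supplied by a $2$-design is already sufficient and no $4$-design is needed.

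Next I would derive the recursion $\sigma_t = \overline{\calT}(\sigma_{t-1})$ with the averaged transfer superoperator $\overline{\calT}(O) = \E_{U}\sum_z K_z^{\otimes 2}\, O\, K_z^{\dagger\otimes 2}$, and evaluate it with the standard $2$-design twirl $\E_{U}[U^{\otimes 2} X U^{\dagger\otimes 2}] = a\, I_{(AB)^{\otimes2}} + b\,\mathrm{SWAP}_{AB}$, where $a,b$ are fixed by $\mathrm{Tr}[X]$ and $\mathrm{Tr}[\mathrm{SWAP}_{AB}\,X]$ with $X = O\otimes \state{00}_B$. Feeding in the bath input $\ket{00}_B$ and contracting the bath output with $\sum_z \bra{zz}_B$ then reduces both twirl terms to operators on $\mathcal{H}_A^{\otimes 2}$: using $\mathrm{SWAP}_{AB} = \mathrm{SWAP}_A\otimes \mathrm{SWAP}_B$ and $\bra{zz}\mathrm{SWAP}_B\ket{zz}=1$, the identity and swap channels each acquire a factor $d_B$, so that $\overline{\calT}(O)\in \mathrm{span}\{I_{A^{\otimes2}}, \mathrm{SWAP}_A\}$. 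This closure of the recursion onto a two-dimensional operator space is the key structural step, and I expect it to be the main conceptual obstacle, together with correctly folding the bath measurement and reset into the twirl.

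With $\sigma_t = p_t\, I + q_t\,\mathrm{SWAP}_A$, I would then express the map as a $2\times2$ transfer matrix acting on $(p_t,q_t)$, obtained by substituting $\mathrm{Tr}[O] = d_A^2 p + d_A q$ and $\mathrm{Tr}[\mathrm{SWAP}_A\, O] = d_A p + d_A^2 q$ into $a,b$. This matrix is symmetric, with eigenvectors $(1,\pm1)$ and eigenvalues that factor cleanly; in particular the symmetric eigenvalue is $\lambda_+ = (d_A+1)/(1+d_A d_B)$. The initial operator $\sigma_0 = \state{00}_A$ (the two-copy vacuum) enters the twirl only through $\mathrm{Tr}[\sigma_0]=\mathrm{Tr}[\mathrm{SWAP}_A\,\sigma_0]=1$, which sends $\sigma_1$ exactly onto the $(1,1)$ eigendirection; consequently only $\lambda_+$ survives and $(p_t,q_t) = \lambda_+^{t-1}(p_1,q_1)$ with $p_1=q_1 = 1/[d_A(1+d_A d_B)]$. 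Finally, since $\mathrm{Tr}[\Pi_A\, I] = \mathrm{Tr}[\Pi_A\,\mathrm{SWAP}_A] = d_A$, one obtains $Z_{\rm HRCS}(t) = d_A(p_t+q_t) = \lambda_+^{t-1}\cdot 2/(1+d_A d_B)$, which is exactly Eq.~\eqref{eq:HRCS_CP_spt}; a consistency check at $t=1$ recovers $2/(1+d_A d_B) = Z_{\rm H}(N_A+N_B)$.

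For the large-system expansion Eq.~\eqref{eq:HRCS_CP_spt_asymp}, I would substitute the effective dimension $d_A d_B^{t}$ into $Z_{\rm H}$ of Eq.~\eqref{ZH_overview}, form the ratio $Z_{\rm HRCS}(t)/Z_{\rm H}(N_{\rm eff})$, and expand its logarithm in powers of $1/d_A$ using $\log(1+x)=x+\calO(x^2)$ on each of the factors $(d_A+1)^{t-1}$, $(1+d_A d_B)^t$, and $1+d_A d_B^{t}$. Collecting the $1/d_A$ terms yields the exponent $[t(1-d_B^{-1}) + d_B^{-t}-1]/d_A$ with remainder of order $1/d_A^2$. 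This last step is purely a Taylor-expansion bookkeeping exercise and presents no essential difficulty.
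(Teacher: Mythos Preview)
Your proposal is correct and is essentially the same argument as the paper's: both reduce the averaged collision probability to a two-dimensional transfer map on the span of $I_{A^{\otimes 2}}$ and $\mathrm{SWAP}_A$ (the paper's $\kett{\hat{e}}_A,\kett{\hat{\tau}}_A$), observe that the first step projects onto the symmetric eigenvector $(1,1)$, and read off the single eigenvalue $(d_A+1)/(1+d_A d_B)$. The only cosmetic difference is that you sum over the bath outcomes $z$ inside the transfer map $\overline{\calT}$ while the paper keeps a per-outcome transition matrix $M$ and multiplies by the outcome count $d_A d_B^t$ at the end; the asymptotic expansion is handled identically.
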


At $t=1$, both Eqs.~\eqref{eq:HRCS_CP_spt} and \eqref{eq:HRCS_CP_spt_asymp} recover the RCS result in Eq.~\eqref{ZH_overview} for a system of $N_A+N_B$ qubits. We emphasize that it is crucial to sample from the joint spatiotemporal distribution---marginal distributions are close to uniform (see Appendix \ref{app:marginal_sampling}). In Fig.~\ref{fig:HRCS_theory}a, we verify Eq.~\eqref{eq:HRCS_CP_spt} (solid) with numerical simulations (circles) where the CP exponentially decays with temporal steps $t$, for both $N_B=1$ (blue) and $N_B=2$ (pink). We also see that the CP of HRCS is close to the Haar results $Z_{\rm H}(n)$ in Eq.~\eqref{ZH_overview}, depicted as the dashed lines. 
The effective sampling size $n(t) = N_A+tN_B$ increases with growing $t$. However, such an increase \BZ{with close approximation to Haar value} does not last forever, as expected. 

To understand the maximum \BZ{sampling size} achievable by HRCS, we define the $\epsilon$-approximate Haar AC by
$Z_{\rm HRCS}(t)\le (1+\epsilon)Z_{\rm H}(n)$, for an arbitrary constant $\epsilon>0$. From Eq.~\eqref{eq:HRCS_CP_spt_asymp}, $\epsilon$-approximate AC of Haar level is satisfied when the number of temporal steps in HRCS (green region in Fig.~\ref{fig:concept}c) is bounded by
\begin{align}
    \BZ{t \le \tau \coloneqq \frac{d_A d_B}{d_B-1}\log(1+\epsilon)\simeq  d_A\ln(1+\epsilon)}.
    \label{eq:max_steps_overview}
\end{align}
Therefore, $\epsilon$-approximate AC survives until an exponentially large time scale \BZ{$t \sim \calO\left(2^{N_A}\right)$}, leading to an effective sampling size of $\calO\left(2^{N_A}\right)N_B+N_A$ exponential in $N_A$, as we illustrate in Fig.~\ref{fig:concept}c. We verify the critical temporal steps $\tau$ with numerical solutions of $Z_{\rm HRCS}(\tau) = (1+\epsilon)Z_{\rm H}(\BZ{n})$ in Fig.~\ref{fig:HRCS_theory}a1, where we see the convergence toward a universal scaling of $2^{N_A}\ln(1+\epsilon)$ as shown by the black line, with increasing $N_B$ (blue to pink crosses). 

Even at $t>\tau$, when $t/2^{N_A} \le c$ remains bounded by a constant, the sampled distribution in HRCS remains AC, despite with a larger constant approximation error (yellow region in Fig.~\ref{fig:concept}c). However, if $t/2^{N_A} \sim \omega(1)$ (red region), then $Z_{\rm HRCS}(t)/Z_{\rm H}(n)$ grows with system size $N_A$, leading to the lack of AC. 
%With the above analyses, we conclude that using $N_A$ system qubits and $N_B$ bath qubits, HRCS can produce a measurement statistics of size as large as exponential $\sim2^{N_A}N_B$, with CP close to Haar value. 

%One may argue that, despite widely adopted in RCS study, CP is only one aspect of the measurement distribution, what about the actual distribution? To answer this question, 

We further confirm the above observation from CP via other statistical tools. We begin with a simple numerical check on the probability density of the joint distribution probability (PoP) in Fig.~\ref{fig:HRCS_theory}b1-b2 for sampling at different steps. The empirical PoP of $P[\bmz(t), x(t)]$ (bars) for a system of $N_A=6, N_B=4$ aligns closely with the Porter-Thomas distribution for Haar random outcomes of dimension $2^{N_A+tN_B}$ (red dashed line) at all steps, up to small deviations in rare outcomes due to finite-size effect. \BZ{The finite-size effect is more visible for larger $t$ as expected from Eq.~\eqref{eq:HRCS_CP_spt_asymp}.
More formally, we extend the analysis to the $K$-th power sum of sampling distribution $Z^{(K)} \coloneqq \sum_x p(x)^K$ focusing on properties of higher-order statistical moments. We show that the $K$-th power sum of HRCS is $Z_{\rm HRCS}^{(K)}(t) \lesssim Z_{\rm H}^{(K)}(n) \exp\left[K^2 t/(2d_A)\right]$ with $Z_{\rm H}^{(K)} \simeq K!/2^{nK}$ as the $K$-th power sum of $n$-qubit Haar random states (see formal results in Methods). Therefore, HRCS holds $\epsilon$-approximate Haar value of $K$-th power sum till $\tau^{(K)} \simeq \calO(\tau/K^2)$, which exhibits the same scaling as CP in Eq.~\eqref{eq:max_steps_overview} but suppressed by statistical moment order $\calO(1/K^2)$. Note that $\epsilon$-approximate $K$-th power sum directly implies approximation of $K'$-th power sums with $K'<K$.
}

\section{Space-time trade-off}
\BZ{
Here we analyze the space-time trade-off in HRCS for sampling $n$ bits, between the number of necessary physical qubits (space) $N \coloneqq N_A + N_B$ and the number of temporal steps $t$ (time). 
To generate the $n$ classical bits, while conventional RCS needs $N=n$ qubits, HRCS only requires the condition $n=N_A t+N_B$ to be satisfied. In the asymptotic limit of $n\to \infty$, for $\epsilon$-approximate Haar AC, we have the following approximate lower bound for necessary number of qubits as
\be
    N \gtrsim \frac{n}{t} + \frac{t-1}{t}\log_2\left(\frac{t}{\ln(1+\epsilon)}\right) + \calO\left(\frac{\ln n}{n}\right),
    \label{eq:N_t_tradeoff}
\ee
which can be derived following the maximum time steps in Ineq.~\eqref{eq:max_steps_overview} (see details in Appendix~\ref{app:tradeoff}). When $t = 1$, we recover the conventional RCS of $N = n$. As the time $t$ increases, the space resource $N$ initially enjoys a $1/t$ decrease; as we increase time more and more to reduce space, the second logarithmic term dominates and increases the space resource. At $t^\star =\ln(2) n + \calO(\ln n/n)$, which corresponds to $N_{B,\rm min} = 1/\ln(2) + \calO(\ln n/n)\simeq 1.4$, $N$ achieves the minimum of 
\be 
N_{\rm min} = \log_2\left(\frac{n}{\log_2(1+\epsilon)}\right) + \frac{1}{\ln(2)} +\calO\left(\frac{\ln n}{n}\right).
\label{eq:N_min}
\ee 
However, as $N_B$ is an integer, we expect the minimum $N_{\rm min}$ to be achieved either with $N_B=1$ or $N_B=2$. We want to highlight that this result only relies on the global $2$-design unitary in each step of Theorem~\ref{HRCS_CP_spt}, which in general holds for arbitrary spatial dimension.}

\BZ{Alternatively, if we simply perform conventional RCS on $N$ qubits for $n/N$ times and collect the final joint distribution of $n$ bits, the overall CP is $Z_{\rm prod}(N)\coloneqq \left[Z_{\rm H}(N)\right]^{n/N} \simeq 2^{n/N-1}Z_{\rm H}(n)$. When $N \sim \calO(\ln n)$ qubits, the corresponding deviation scales as $Z_{\rm prod}(N)/Z_{\rm H}(n) \sim n^{\omega (1)}$, super-polynomial in $n$, which breaks AC condition and shows a sharp contrast against HRCS, thus highlighting the importance of coherent quantum memory. Formally speaking, for $\epsilon$-approximate AC, it requires $N \sim \calO(n/(1+\ln(1+\epsilon)))$, only a constant factor smaller than $n$. This partition approach is also equivalent to the patch circuit method considered in conventional RCS experiment~\cite{arute2019quantum, wu2021strong,zhu2022quantum, gao2025establishing}.}

\begin{figure}[t]
    \centering
    \includegraphics[width=0.45\textwidth]{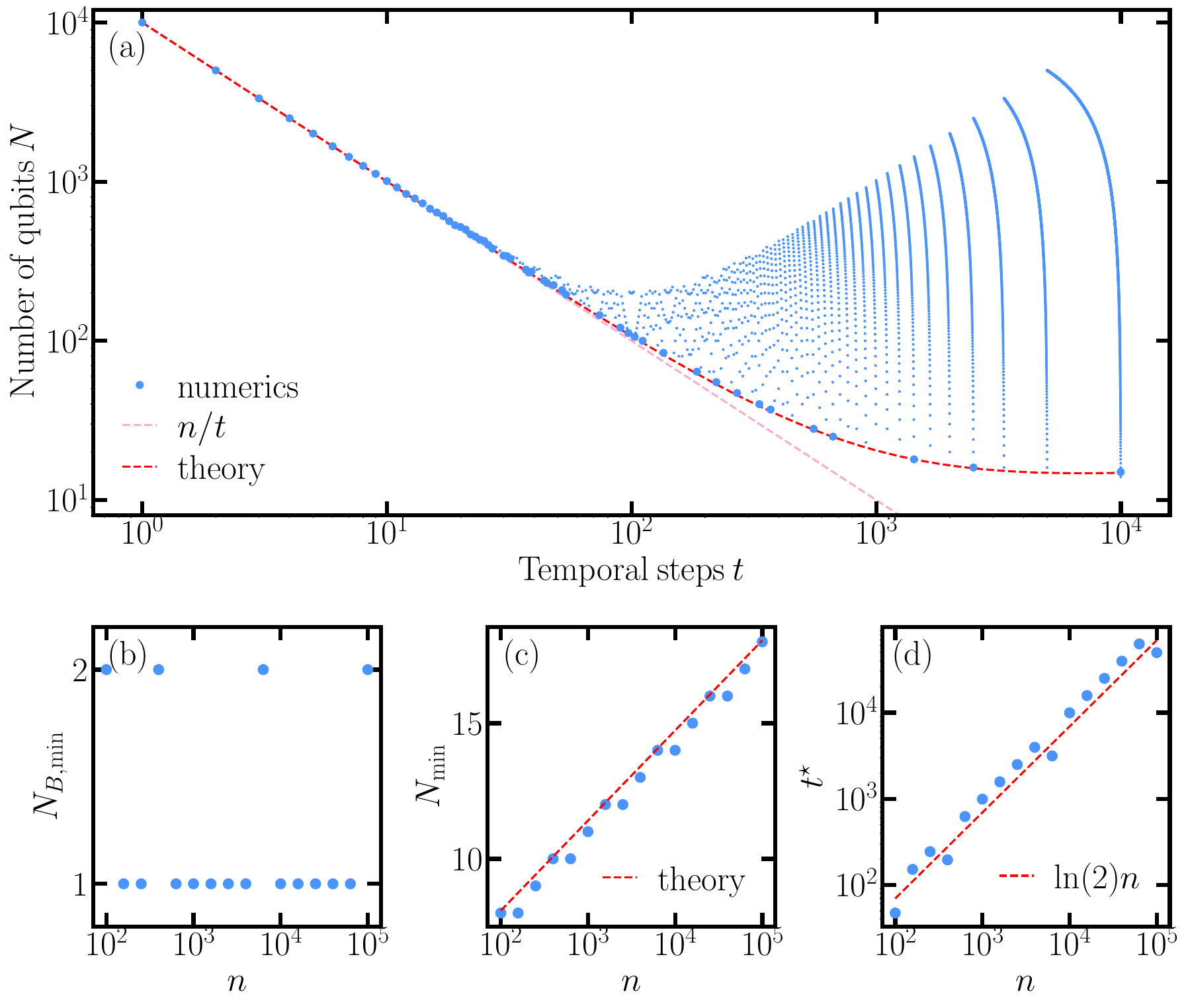}
    \caption{\BZ{(a) Tradeoff between physical qubit number $N$ versus temporal steps $t$ for generating $n = 10^4$ bits. Blue dots represent numerical solutions, and the bottom envelop are enlarged. Red and pink dashed lines show the theoretical approximate lower bound in Eq.~\eqref{eq:N_t_tradeoff} and the asymptotic form of $n/t$, respectively. In (b)-(d), we plot the scaling of the minimum bath size $N_{B, \rm min}$, minimum qubit number $N_{\rm min}$, and corresponding time steps $t^\star$ versus number of bits $n$ separately. Red dashed lines in (c) and (d) represent corresponding theoretical results of Eq.~\eqref{eq:N_min} and $\ln(2)n$. In all cases, we set $\epsilon = 1$.}
    }
    \label{fig:spacetime}
\end{figure}

\BZ{In Fig.~\ref{fig:spacetime}a, we plot the numerical solutions of the minimum number of physical qubits $N$ for every achievable time $t$ satisfying $Z_{\rm HRCS}(t) \le (1+\epsilon)Z_{\rm H}(n)$ with $\epsilon = 1$ and $n = 10^4$. We impose the integer constraints in numerical evaluations as $N_A, N_B, t$ must be positive integers and $n = N_A + t N_B$. For small number of temporal steps (i.e., $t\lesssim 50$), the numerical results (blue circles) agree with our theory of Eq.~\eqref{eq:N_t_tradeoff} (red dashed line) with an asymptotic scaling of $N\sim n/t$ (orange dashed line). On the other hand, for large number of temporal steps (i.e., $t \gtrsim 10^2$), the minimum qubit number exhibits a quasi-comb structure due to the integer constraints. More importantly, our theory of Eq.~\eqref{eq:N_t_tradeoff} (red dashed line) still characterizes the bottom envelop of minimum number of physical qubits highlighted by larger blue circles. Regarding the minimum space resources, we show the numerical minimum number of bath size $N_{B, \rm min}$ in Fig.~\ref{fig:spacetime}b, which is either one or two depending on $n$ and is consistent our theory prediction of $1/\ln(2)$ leveraging integer constraints. In Fig.~\ref{fig:spacetime}c-d, we further confirm the scaling of minimum number of physical qubits $N_{\rm min}$ and corresponding time steps $t^\star$ with our theoretical results in Eq.~\eqref{eq:N_min} and $t^\star = \ln(2) n$ (red dashed lines in b2 and b3). Overall, for generating random classical bit string, HRCS provides an exponential advantage in used physical qubits at the cost of longer time. 
}

\BZ{We can also extend the above analysis to $K$-th order power sums. For generating $n$ bits holding $\epsilon$-approximate Haar $K$-th power sums with $K \lesssim 2^{n/2}$, the minimum number of physical qubits is simply $N_{\rm min}^{(K)} \simeq N_{\rm min} + \log_2(K^2/2)$ with $N_{\rm min}$ defined in Eq.~\eqref{eq:N_min}. The minimum number of physical qubits also scales logarithmically with $n$ but involves an extra correction from the statistical moment order (see formal results in Methods).}

\BZ{We next turn to the total number of gate layers in HRCS assuming the 1D brickwork unitary in each step consisting of local 2-qubit gates on nearest neighbors. To illustrate it, we first have the upper bound for the CP as
\begin{lemma}
\label{lemma_HRCS_CP}
    For holographic random circuit sampling with each unitary $U_t$ to be $L$ layers 1D brickwork circuit where every $2$-qubit unitary satisfies $2$-design, the ensemble-averaged collision probability at step $t \ge 1$ is upper bounded by
    \be
        Z(t) \le \left[(2t-1)\exp\left(\frac{N}{2} \left(\frac{4}{5}\right)^L\right) - (2t-2)\right]Z_{\rm HRCS}(t)
        \label{eq:HRCS_CP_brickwork},
    \ee
    where $Z_{\rm HRCS}(t)$ is the ensemble-averaged CP with global 2-design unitaries in Eq.~\eqref{eq:HRCS_CP_spt}.
\end{lemma}
The proof of lemma utilizes the mapping to a statistical physics model proposed in Ref.~\cite{hunter2019unitary, barak2020spoofing, dalzell2022random} and is briefly sketched in Methods and fully presented in Appendix~\ref{app:brickwork}. Note that the finite circuit depth $L$ introduces an additional factor in Eq.~\eqref{eq:HRCS_CP_brickwork} compared to the global $2$-design unitary result of $Z_{\rm HRCS}(t)$. In particular, for CP remains constant deviation from $Z_{\rm HRCS}(t)$, we require $L \sim \ln(Nt)$.
}

\BZ{In Fig.~\ref{fig:spacetime_d}a, we show the tradeoff between total circuit depth $tL$ versus number of physical qubits $N$. In the large $N$ region, where $t\sim \calO(1)$, we can estimate the necessary layer of gates for each step is $L \sim \ln(n)$, and is confirmed with the pink dashed line of $tL \sim n \ln(n)/N$ following Eq.~\eqref{eq:N_t_tradeoff}. On the other hand, for minimum spacial resources, we expect the solution Eq.~\eqref{eq:N_min} still holds (verified in Appendix~\ref{app:brickwork}), and the corresponding circuit depth for each step is verified in Fig.~\ref{fig:spacetime_d}b with logarithmic scaling of $L^\star \sim \ln(n)$ (red dashed line). We can also conclude similar results from unitary design. When the unitary in each step of HRCS is about $\ln(1+\epsilon)/t$-approximate $2$-design, the ensemble-averaged CP at step $t\ge 1$ exhibits a relative error $\epsilon$ from Haar value. For $t \sim n$ to minimize spatial resources, we thus require the circuit depth in each step to be $L\sim \ln(Nt/\ln(1+\epsilon))\sim \ln(n)$ utilizing the shallow random unitary construction proposed in Ref.~\cite{schuster2025random} though it is slightly different from the brickwork circuit. Same conclusions can be extended to $K$-th order power sums as well.
}

\begin{figure}
    \centering
    \includegraphics[width=0.45\textwidth]{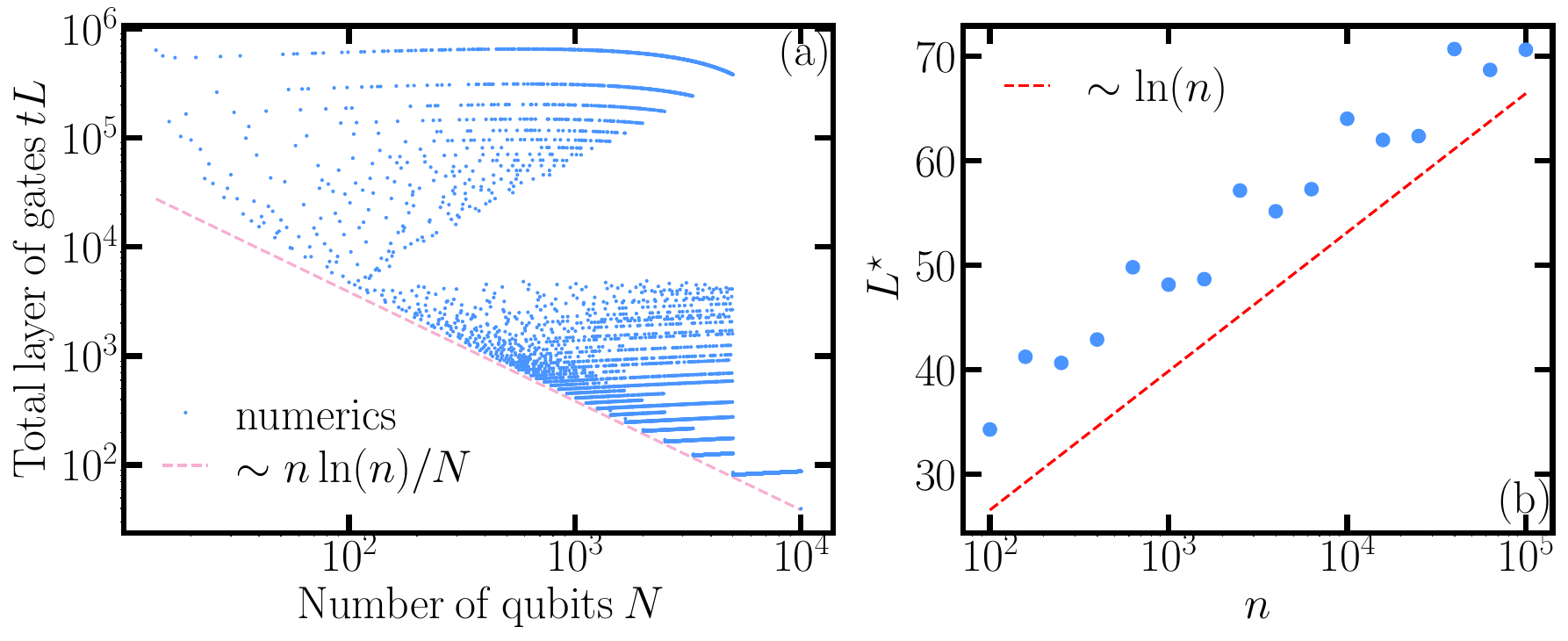}
    \caption{\BZ{(a) The tradeoff between total circuit depth $tL$ versus number of physical qubits $N$. Blue dots represent numerical solutions and pink dashed lines show the asymptotical behavior of $tL \sim n\ln(n)/N$. Here we choose $n = 10^4$. In (b), we plot the scaling of layer of gates in each step $L^\star$ for minimum $N$ versus number of bits $n$. The red dashed line represents $L^\star \sim \ln(n)$. In all cases we set $\epsilon = 1$.}}
    \label{fig:spacetime_d}
\end{figure}

\ 

\section{Experimental validation}
We experimentally implement HRCS on IBM quantum devices, with a first 10-qubit experiment and a second 20-qubit experiment. \QZ{Conventional RCS provides a way to benchmark quantum devices, it fails to capture the quality of mid-circuit measurement that is essential to quantum error correction. Meanwhile, HRCS provides a way to benchmark quantum devices combining quality of gates and mid-circuit measurements.}

\BZ{In conventional RCS, the vanilla cross-entropy benchmarking (XEB) is adopted to verify the experimental results~\cite{boixo2018characterizing, bouland2019complexity,neill2018blueprint}, which directly compares the empirical probability of measured bitstrings in experiments and the ideal one from classical simulation. For measured bitstrings of $n$ bits, the vanilla XEB is defined as
\be
    \chi \coloneqq 2^n \sum_{i} P(x_i)\hat{P}(x_i) - 1,
    \label{eq:XEB_def}
\ee
where $P(x_i), \hat{P}(x_i)$ are the ideal and empirical probability of bitstring $x_i$ from noiseless circuits and experiments, respectively. When the experiment is entirely faithful to the ideal quantum circuit, 
$
\chi=2^{n} Z-1,
$ 
where $Z$ is the CP defined in Eq.~\eqref{eq:CP_def}. For the exact Haar distribution, $\chi_{\rm H} \simeq 1$; while a random uniform sampling independent of the implemented quantum circuit leads to $\chi_{\rm uni}=0$. 
}

\BZ{In experiments, the hardware noise like depolarizing noise can significantly suppress the XEB values from ideal theory. To characterzie the noise effect on the XEB, we develop a coarse-grained noisy circuit model by adding depolarizing channels $\calN_{A \backslash B}(\rho) = \gamma_{A\backslash B} \rho + (1-\gamma_{A\backslash B}) \bI/d$ on system $A$ and bath $B$ separately following the random unitary $U_t$ in every step. In the asymptotic limit of $d_A, d_B \gg 1$, we derive the vanilla XEB for HRCS under the noisy circuit model as (see details and full formula in Appendix \ref{app:noisy_theory})
\begin{align}
    \chi_{\rm HRCS}(t) =\gamma_A^t \gamma_B^t + \frac{\gamma_A \gamma_B (2  + \gamma_B -\gamma_A \gamma_B)}{d_A(1-\gamma_A \gamma_B)},
    \label{eq:noisy_xeb}
\end{align}
In general, $\gamma_A, \gamma_B$ are free parameters, and here we further simplify it by $\gamma_A = q^{N_A}, \gamma_B = q^{N_B}$ where $q$ can be regarded as the fidelity of a single qubit through gates and measurements in one step. In the early time of $t \lesssim N_A$, the XEB decays exponentially with temporal steps $\chi_{\rm XEB}(t) \sim \gamma_A^t \gamma_B^t$, which can be interpreted as the exponential suppression of fidelity through time similar to the conventional RCS~\cite{boixo2018characterizing, arute2019quantum}. The vanilla XEB also coincides with the noisy state fidelity on $N_A + t N_B$ qubits (see proof in Appendix~\ref{app:noisy_theory}), which also implies the power for benchmarking gate and mid-circuit measurement fidelity. However, in the late time $t\gtrsim N_A$, the XEB converges to an approximate constant, and fails to stand as the proxy for any state fidelity due to large accumulated noise, which is also observed in RCS~\cite{gao2024limitations}. 
}

To showcase the capability of HRCS, we begin with only 10 qubits of IBMQ Torino: $N_A=5$ qubits for the system and $N_B=5$ qubits for the bath. 
\BZ{
In HRCS, due to the circuit structure, the ideal XEB $\chi_{\rm HRCS} = 2^n Z_{\rm HRCS}(t)-1 = 2\exp(t/d_A)-1$ also grows with time. For fair comparison across different time, we introduce the normalized XEB as $\tilde{\chi}_{\rm HRCS}(t) \coloneqq \chi_{\rm HRCS}(t)/(2^n Z_{\rm HRCS}(t)-1)$, which reduces to the vanilla one at $t=1$ for conventional RCS.
The results are presented in Fig.~\ref{fig:experiment}a. At $t=1$ ($n=10$), the ideal normalized XEB $\tilde{\chi} \simeq 1.04$ matches the expectation of conventional RCS though with slight deviation due to shot noise and finite circuit depth. Within the experimental accessible time up to $t = 16$ ($n=85$), the ideal normalized XEB is bounded by $\tilde{\chi} \le 1.14$ (pink circles), indicating that the implemented circuits in experiments can approximate random unitaries in theory. Our experimental data (blue circles) exhibits the two-scale exponential decay predicted by the asymptotic noisy theory of Eq.~\eqref{eq:noisy_xeb} (blue dashed line) with a single fitting parameter $q = 0.93$. Remarkably, utilizing only $10$ physical qubits, our HRCS achieves the normalized XEB of $0.045$ and $0.021$ for generating $n = 55$ and $85$ classical bits. By utilizing the average error rates of IBMQ Torino in Table~\ref{tab:ibm_error} and number of gates (see Appendix~\ref{app:experiment}), we obtain an estimation of single qubit fidelity of $q \simeq 0.944$, which is slightly larger than our fitting result due to unmodeled error. Therefore, the XEB score of HRCS can be considered to benchmark the joint quality of circuit gates and mid-circuit measurements.
% Meanwhile in conventional RCS experiments, Google reports an XEB of $0.0074$ for $n=53$ bits~\cite{arute2019quantum} and USTC achieves $0.015$ for $83$ bits~\cite{gao2025establishing}. 
% Our results thus demonstrates a better quality in generating random classical bitstrings compared to conventional RCS.
}

\BZ{We next implement HRCS in a larger scale with $N_A = 10$ system and $N_B = 10$ bath qubits in Fig.~\ref{fig:experiment}b. We utilized the widely-adopted patch circuit method \cite{arute2019quantum, morvan2024phase, wu2021strong, zhu2022quantum, gao2025establishing, deCross2025}---implementing two disjoint HRCS circuits simultaneously to maintain high circuit fidelity. The ideal normalized XEB is bounded by $\tilde{\chi}\le 1.24$ up to $t = 19$ (pink circles).
Good agreement between the experimental data (blue circles) and the noisy theory (blue dashed line) persists with a fitting parameter of $q = 0.92$ slightly smaller than the 10-qubit experiment as more qubits are used. In particular, we report $\tilde{\chi}_{\rm HRCS}(t=9) = 0.025$ and $\tilde{\chi}_{\rm HRCS}(t=19) = 0.0056$ for generating $n=100$ and $n=200$ random bits, respectively.
}

\begin{figure}[t]
    \centering
    \includegraphics[width=0.45\textwidth]{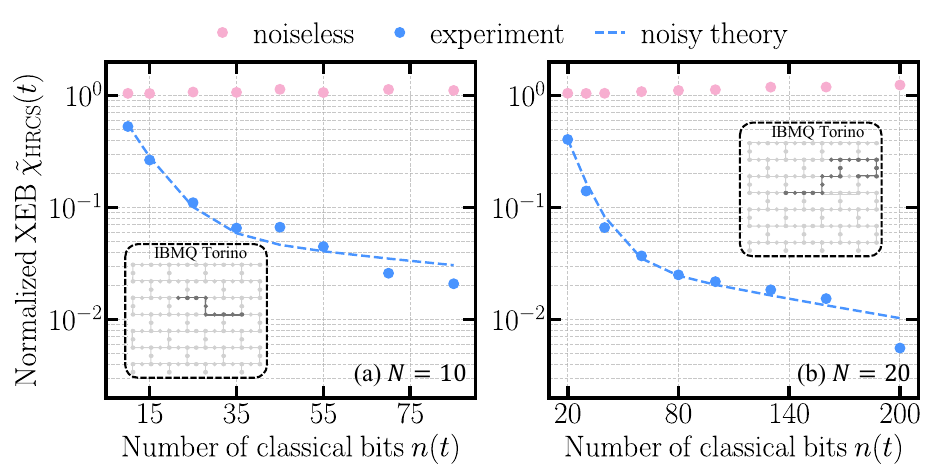}
    \caption{\BZ{{\bf Experimental verification of HRCS}. Normalized cross-entropy benchmark (XEB) of HRCS $\tilde{\chi}_{\rm HRCS}(t) = \chi_{\rm HRCS}(t)/(2^n Z_{\rm HRCS}(t)-1)$ versus number of classical bits $n(t)$ in a system of (a) $N_A = N_B = 5$ qubits and (b) $N_A = N_B = 10$ qubits in two patches. Orange and blue filled circles show the noiseless simulation and experimental results on IBMQ Torino. Blue dashed line represents the noisy theory prediction of Eq.~\eqref{eq:noisy_xeb} with (a) $q=0.93$ and (b) $q = 0.92$, respectively. We perform $10^6$ shots for sampling each circuit instance and each circle is an average of $10$ instances. Dark grey circles in the insets indicate the used qubits in experiments for (a) $t\le 10$ and at (b) $t=19$.}}
    \label{fig:experiment}
\end{figure}

%We expect that the scale and quality advantage of XEB fidelity in HRCS can be further enhanced by appropriately using more qubits in experiments.

\ 

\section{Discussion}
\BZ{We have proposed the HRCS for utilizing circuit depth to `holographically' expand the sample size while maintaining the AC property of the distribution. For fixed number of physical qubits, it leads to an exponential amplification in the sampling size, which shares the same scaling of information in the monitored dynamics~\cite{zhang2025scaling}. On the other hand, HRCS demonstrates that a protocol for generating $n$ classical bits using only $\calO(\ln(n))$ qubits and depth of $\calO(n\ln(n))$ in 1D circuit. Meanwhile, we do not exclude any possibility to further improve this upper bound for reducing the circuit depth in each step, and a lower bound also remains open.
The logarithmic scaling of qubit number is similar to the qubit-reuse compilation method for circuit compression with applications to quantum algorithms~\cite{decross2023qubit}.}

\BZ{
We discuss the simulation complexity of HRCS here. As the circuit of HRCS at each step shares the same architecture, the complexity of computing output probabilities can be fully inferred from the unitary circuit in a single step when the circuit depth in each step is sufficiently deep. For generating $n$ classical bits with Haar-level anticoncentration, as we require $\calO(\ln n)$ physical qubits, the maximum simulation cost is just $2^{\calO(\ln n)} \sim \calO({\rm poly}(n))$, which is polynomial in the number of bits and holds for classical simulation. Note that in the one-dimensional system, Lemma~\ref{lemma_HRCS_CP} indeed suggests that linear depth with maximum entanglement is sufficient for satisfying AC condition, while the necessary circuit depth for each unitary in high-dimensional system remains open. Meanwhile, a lower bound on the minimum number of physical qubits is an open question. We expect that at least $\calO(\ln n)$ qubits number of qubits is needed, because to generate a $2$-design $n$-qubit state, we need a circuit of $\calO(\ln n)$ depth thus generating $\calO(\ln n)$ amount of entanglement. On the other hand, for given $N$ physical qubits, HRCS with linear-depth unitaries could still be challenging for classical simulation when $N$ is sufficiently large, similar to conventional RCS.
}

\BZ{We have demonstrated a verifiable metric like XEB in HRCS can be used as a rough benchmarking for the combined quality of circuit gates and mid-circuit measurements through experimental validation on IBM Quantum devices, which is essential for dynamical circuits and quantum error correction. An important direction for future work is to develop a more complete understanding of the capabilities and limitations of HRCS for benchmarking and noise learning in quantum devices with mid-circuit measurements.}

\section{Methods}
\subsection{Power sums of HRCS}
In the main text, we have mainly focused on the CP of the sampling distribution in HRCS; \BZ{Here, we extend the analyses to the power sums}.

\BZ{The $K$-th order power sum for a sampling distribution $p(x)$ is defined as}
\be
    \BZ{Z^{(K)} \coloneqq \sum_{x} p(x)^K},
\ee
where $K \ge 1$ is an integer. Note that the $K=2$ case reduces to the CP definition in Eq.~\eqref{eq:CP_def}. Therefore, the following theorem is a direct generalization of Theorem~\ref{HRCS_CP_spt} to higher-order statistical moments of the joint sampling distribution (see Appendix \ref{app:sampling_HRCS} for a proof). 
\begin{theorem}
\label{HRCS_PS_spt}
For holographic random circuit sampling with each unitary $U_t$ in $K$-design, the ensemble-averaged $K$-th power sum at step $t\ge 1$ is
    \begin{align}
       & Z_{\rm HRCS}^{(K)}(t) =\nonumber\\
       & K! d_A d_B^t \left(\frac{(d_A+K-1)!}{(d_A-1)!}\right)^{t-1}\left(\frac{(d_A d_B-1)!}{(d_A d_B + K-1)!}\right)^t,
        \label{eq:HRCS_PS_spt}
    \end{align}
    where $d_A=2^{N_A}, d_B=2^{N_B}$ are Hilbert space dimensions of the system and bath. 
    In the large-system limit $d_A \gg 1$,
    \begin{align}
        &Z_{\rm HRCS}^{(K)}(t) = Z_{\rm H}^{(K)}(\BZ{n})\nonumber\\
        &\qquad \times\exp\left[K(K-1)\frac{t\left(1-d_B^{-1}\right)+d_B^{-t}-1}{2d_A} + \calO\left(\frac{1}{d_A^2}\right)\right].
        \label{eq:HRCS_PS_spt_asymp}
    \end{align}
    Here the $Z_{\rm H}^{(K)}(\BZ{n})= K!\left(2^n\right)!/\left(2^n+K-1\right)!$ is the $K$-th power sum for \BZ{$n$}-qubit Haar random states.
\end{theorem}
It is expected that Eq.~\eqref{eq:HRCS_PS_spt} with $K=2$ reduces to the exact CP result of Eq.~\eqref{eq:HRCS_CP_spt} in Theorem~\ref{HRCS_CP_spt}, while for $t=1$, Eq.~\eqref{eq:HRCS_PS_spt} becomes the $K$-th \BZ{power sum} of Haar random states \BZ{$Z_{\rm H}^{(K)}(n)$ with $n$ qubits} (see proof in Appendix \ref{app:preliminary}).
Following the asymptotic form in Eq.~\eqref{eq:HRCS_PS_spt_asymp},
with the increase of temporal steps $t$, the $K$-th \BZ{power sum} of HRCS $Z_{\rm HRCS}^{(K)}(t)$ also deviates from the corresponding Haar value $Z_{\rm H}^{(K)}(n)$; however, the deviation is suppressed by the $\sim K^2/d_A$ factor similar to Eq.~\eqref{eq:HRCS_CP_spt_asymp}. Therefore, the asymptotic scaling for CP in Fig.~\ref{fig:concept}c also holds similarly for the $K$-th PS. In Fig.~\ref{fig:HRCS_addition}a, we verify our theory with numerical simulations (circles), and we also find the $K$-th \BZ{power sum} of HRCS closely aligns with Haar results with the corresponding dimension (dashed) \BZ{within the accessible time}.
More precisely, for $\epsilon$-approximation Haar of $K$-th \BZ{power sum} defined by $Z_{\rm HRCS}^{(K)}(t) \le (1+\epsilon)Z_{\rm H}^{(K)}(n)$, the maximum temporal step is
\be
    t \le \tau^{(K)} \coloneqq \frac{2d_A d_B \log(1+\epsilon)}{(d_B-1)K(K-1)} \BZ{\simeq 2 d_A \log(1+\epsilon)/K^2}.
    \label{eq:max_time_steps_K}
\ee
\BZ{We verify the threshold time of Eq.~\eqref{eq:max_time_steps_K} (lines) with numerical solutions (crosses) in Fig.~\ref{fig:HRCS_addition}b to confirm the exponential scaling with $N_A$. With increasing statistical order $K$, the threshold time is suppressed (light to dark).}

\BZ{
We can also extend the spacetime tradeoff analysis to the $K$-th power sums, and the question can be formulated as what are the space and time physical resources for generating $n$ classical bits satisfying the $\epsilon$-approximate $K$-th power sum, $Z^{(K)}_{\rm HRCS}(t) \le (1+\epsilon) Z^{(K)}_{\rm H}(t)$. When $K \lesssim 2^{n/2}$, we have the following approximate lower bound for number of physical qubits as (see derivations in Appendix~\ref{app:tradeoff})
\be
    N^{(K)} \ge \frac{n}{t} + \frac{t-1}{t}\log_2\left(\frac{K(K-1)t}{2\ln(1+\epsilon)}\right) + \calO\left(\frac{\ln n}{n}\right),
    \label{eq:N_t_tradeoff_K}
\ee
which is identical to the condition for CP in Eq.~\eqref{eq:N_t_tradeoff} but with an additional term of $\frac{t}{t-1}\log_2(K(K-1)/2)$. The minimum number of necessary physical qubits is
\be
    N_{\rm min}^{(K)} = N_{\rm min} + \log_2(K(K-1)/2),
    \label{eq:N_min_K}
\ee
where $N_{\rm min}$ is solved in Eq.~\eqref{eq:N_min}. The corresponding time is $t^{(K)}{}^\star = t^\star - \ln(K(K-1)/2)$. Note that the minimum bath size $N_{B,\rm min}^{(K)} = 1/\ln(2)$ is identical to the result for CP. On the other hand, when $K\gtrsim 2^{n/2}$, the number of qubits scales linearly with bit length $n$ as the corresponding time can only be a constant.
}

\BZ{
In Fig.~\ref{fig:HRCS_addition}c, the necessary number of physical qubits for achieving $\epsilon$-approximate $K$-th power sum exhibits similar behavior as the one for CP in Fig.~\ref{fig:spacetime}a, and our theory of Eq.~\eqref{eq:N_t_tradeoff_K} again characterizes the bottom envelop. For various choices of $K$, we verify the corresponding minimum bath size is always a constant independent of $K$ in Fig.~\ref{fig:spacetime}d while the minimum number of physical qubits grows logarithmically with $n$ and agrees with the theory of Eq.~\eqref{eq:N_min_K} (lines in Fig.~\ref{fig:spacetime}e). With increasing order of power sums, we require a larger number of physical qubits as expected. 
The corresponding temporal steps $t^{(K)}{}^\star$ thus scales linearly with $n$ shown in Fig.~\ref{fig:spacetime}f.
}

\begin{figure}
    \centering
    \includegraphics[width=0.45\textwidth]{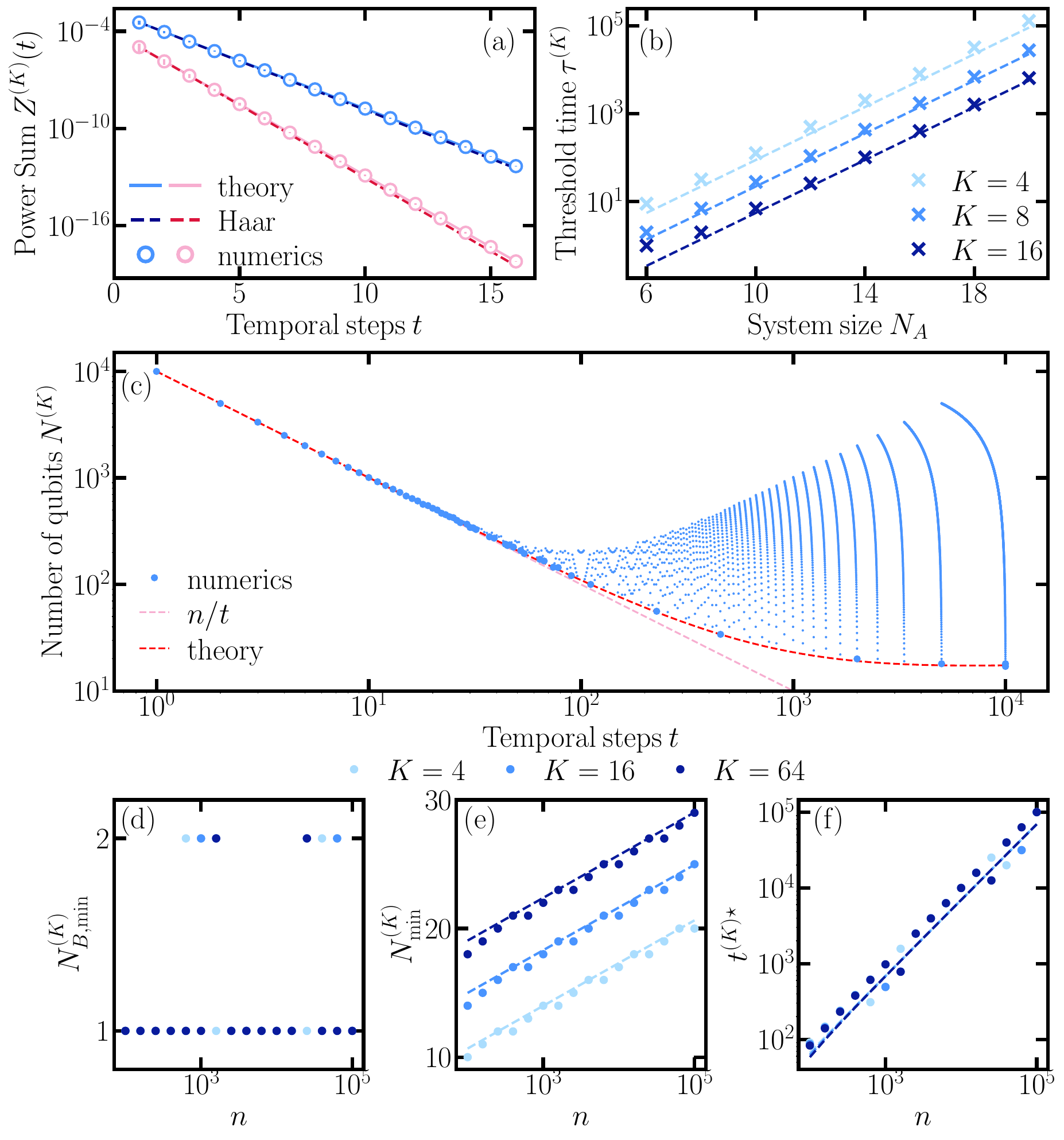}
    \caption{\BZ{{\bf Power sum for HRCS}. (a) Ensemble-averaged power sums (PS) $Z^{(K)}(t)$ for HRCS versus temporal steps in a system of $N_A = 6, N_B=1$ qubits with $K=3, 4$ (blue and pink circles). Colored solid lines represent theoretical results of Eq.~\eqref{eq:HRCS_PS_spt} in Theorem~\ref{HRCS_PS_spt}. Dark-colored dashed lines are PS for Haar random states $Z_{\rm H}^{(K)}(n)$ with $n=N_A + t N_B$. Each circle represents an average over $50$ random circuit instances. (b) Threshold time $\tau^{(K)}$ versus system size $N_A$ for $\epsilon$-approximate $K$-th Haar power sum. Crosses represent the numerical solution of $Z_{\rm HRCS}^{(K)}(t) = (1+\epsilon)Z_{\rm H}^{(K)}(n)$ with $K=4, 8, 16$ (light to dark). Dashed lines show the corresponding theoretical prediction of Eq.~\eqref{eq:max_time_steps_K}. Here we set $N_B = 4$ and $\epsilon = 1$. (c) Spacetime tradeoff for generating $n=10^4$ bits holding $\epsilon$-approximate $K=4$-th power sum. Blue dots show numerical solutions and red dashed line is the analytical result of Eq.~\eqref{eq:N_t_tradeoff_K}. Pink dashed line shows the asymptotic scaling of $n/t$. (d)-(f) Minimum bath size, number of physical qubits and corresponding temporal steps versus number of bits $n$ for different order $K$. Dashed lines in (e) and (f) represents theoretical results of Eq.~\eqref{eq:N_min_K} and $t^{(K)}{}^\star=\ln(2)n-\log_2(K(K-1)/2)$. 
    }}
    \label{fig:HRCS_addition}
\end{figure}

\ 

\subsection{Details of experiments}
We validate the XEB benchmark fidelity through experiments on the IBM quantum processor, Torino~\cite{contributors2023qiskit}. Below are additional details on the experiment.

In our experiment on real devices of $B$ circuit instances with $M$ shots for each instance, the \BZ{vanilla XEB} is estimated by
\be
    \BZ{\chi(t) = \frac{2^{n}}{BM}\sum_{j=1}^B\sum_{i=1}^M P_{C_j}(x_i) -1,}
    \label{F_XEB_exp}
\ee
where $P_{C_j}(x_i)$ is the ideal probability of obtaining bit string $x_i$ from HRCS circuit instance $C_j$. Here we simulate the ideal HRCS circuit via \texttt{TensorCircuit}~\cite{zhang2023tensorcircuit} on an Nvidia A100 GPU.

In each step of HRCS experiment, to implement a random circuit mimicking a Haar random unitary, we adopt the one-dimensional hardware-efficient ansatz (HEA). An $N$-qubit, $L$-layer HEA circuit can be represented by
\be
    U_{\rm HEA}= \prod_{\ell=1}^{L} W \otimes_{i=1}^N e^{-i\phi_{\ell,i} \hat{\sigma}^z_i/2} e^{-i\theta_{\ell,i} \hat{\sigma}^x_i/2},
\ee
where $W = \otimes_{i=1}^{\lfloor N/2\rfloor} {\rm CX}_{2i+1,2i+2} \otimes_{i=1}^{\lfloor N/2\rfloor} {\rm CX}_{2i,2i+1}$ is a fixed layer of CNOT gates on nearest neighbors in a brickwork style to build up entanglement among different qubits. $\hat{\sigma}^x_i, \hat{\sigma}^z_i$ are the Pauli-X operator and Pauli-Z operator on $i$-th qubit, respectively, and the rotation angles $\{\theta_{\ell,i}\}_{\ell,i}, \{\phi_{\ell,i}\}_{\ell,i}$ are uniformly distributed within $[0, 4\pi)$. Note that all of our gates are local to facilitate the experimental implementation. For the IBM Quantum Torino device we used, the naive gates are square root of X (SX), rotation-Z (RZ) and controlled-Z (CZ) gates, and circuit transpilation is implemented with highest level of optimization.

For the experiment in Fig.~\ref{fig:experiment}a, the $L=8$-layer HEA circuit on $N_A=5, N_B=5$ qubits involves $160$ free parameters which is much less compared to the degree of freedom for the $2^{10}$-dimensional unitary, and $880$ gates in total before transpilation. The number of gates in our transpiled circuits is presented in Appendix \ref{app:experiment}. 
For the maximum of $t=16$ in this experiment, we use $5049$ SX, $3684$ RZ and $1152$ CZ gates.
Due to the automatic computing resource allocation from IBM Quantum service, for experiments with temporal steps $t\le 10$, we utilize qubits in IBMQ Torino shown in dark grey circles in inset of Fig.~\ref{fig:experiment}a, and for $t=13, 16$, we utilize slightly different qubits (see Supplementary Fig.~\ref{fig:circuit_5_5}). In Table~\ref{tab:ibm_error}, we list the average error rates of all gates in our experiment and the readout \BZ{(measurement)} errors for $t\le 10$. The other case can be found in Appendix \ref{app:experiment}.

\begin{table}[t]
    \centering
    \begin{tabular}{||c||c||}
    \hline
    Error type &  Average error rate\\
    \hline
    \hline
    SX gate &  $(3.07 \pm 1.27)\times 10^{-4}$ \\
    CZ gate &  $(2.64 \pm 0.672)\times \BZ{10^{-3}}$ \\
    Readout & $0.0214 \pm 0.0148$ \\
    \hline
    \end{tabular}
    \caption{Average error rates of the operations involved in our experiment on IBM Quantum Torino. As IBM Quantum implements RZ gate virtually in hardware via frame changes, it is error free. Calibration data are provided by IBM Quantum.}
    \label{tab:ibm_error}
\end{table}

\BZ{In the larger-scale experiment of Fig.~\ref{fig:experiment}b with 20 physical qubits, we consider the patch technique widely adopted in RCS experiments~\cite{arute2019quantum} to mitigate the device noise}. Specifically, we adopt two patches of quantum circuits, each one can be classically simulated. In each patch, \BZ{we use $N_A^\prime = 5$ and $N_B^\prime = 5$ qubits for the system and bath}; In each step, the unitary is implemented via an $8$-layer one-dimensional hardware-efficient ansatz, which is identical to the setup in \BZ{the prior experiment of Fig.~\ref{fig:experiment}a}. Therefore, we have twice the number of gates compared to Fig.~\ref{fig:experiment}a. For the maximum of $t= 19$ in this experiment, we use $8748$ SX, $11988$ RZ and $2736$ CZ gates with the number of gates for all circuits presented in Appendix \ref{app:experiment}. Due to the allocation of qubit resources from Qiskit, the used qubits in each circuit instance of this experiments are slightly different from each other, however, we still expect a similar average error rate as shown in Table~\ref{tab:ibm_error}. As we use the patch circuit method, both the ideal and noisy theory of XEB follows the single patch result, as detailed in Appendix \ref{app:noisy_theory}. 

\

{\noindent\bfseries \Large Acknowledgments.} \\
\BZ{B.Z. thanks Xun Gao and Zhi-Yuan Wei for helpful discussions.}
Q.Z. and B.Z. acknowledge support from NSF (CCF-2240641, OMA-2326746, 2350153), ONR N00014-23-1-2296, AFOSR MURI FA9550-24-1-0349 and DARPA (HR0011-24-9-0362, HR00112490453, D24AC00153-02). The experiment was conducted using IBM Quantum Systems provided
through USC’s IBM Quantum Innovation Center.

%\bibliography{myref}
%apsrev4-2.bst 2019-01-14 (MD) hand-edited version of apsrev4-1.bst
%Control: key (0)
%Control: author (72) initials jnrlst
%Control: editor formatted (1) identically to author
%Control: production of article title (-1) disabled
%Control: page (0) single
%Control: year (1) truncated
%Control: production of eprint (0) enabled
%

\

\clearpage

\appendix

\begin{widetext}

\section{Quantum circuit sampling and anticoncentration}

In this section, we introduce the circuit setup of holographic random circuit sampling (HRCS) and review the anticoncentration in quantum circuit sampling.

\subsection{Setup of holographic random circuit sampling}

We begin with the conventional random circuit sampling (RCS) that has been adopted to demonstrate quantum advantage~\cite{arute2019quantum, wu2021strong} (see Fig.~\ref{fig:concept}a in the main text).
Given a unitary quantum circuit $U \in \calU(2^N)$, and a trivial $N$-qubit initial state $\ket{\bm 0} = \ket{0}^{\otimes N}$, we can generally write out its output state from the circuit as
\be
    \ket{\psi} = U \ket{\bm 0}.
\ee
One then performs a measurement on the output state $\ket{\psi}$ in a fixed set of basis (i.e. computational basis), leading to the probability distribution
\be
p_U(x) = |\braket{x|U|\bm 0}|^2.
\label{pU:RCS}
\ee
We review the analyses of conventional RCS in Appendix~\ref{app:preliminary}.

In HRCS, a quantum system of $N_A$ qubits (in initial state $\ket{\bm 0}_A$) repetitively interacts with a bath with $N_B$ qubits (in initial state $\ket{\bm 0}_B$) via unitary circuits $U_1, U_2,\cdots, U_t$. After the action of each unitary $U_k$, one performs a computational basis measurement on the bath, leading to the outcome $z_k$. For simplicity of description and theoretical analysis, we take the reset of the bath qubits back to $\ket{\bm 0}_B$ states before the next round of unitary interaction $U_{k+1}$, although our main results hold regardless reset or not (see proof in \ref{app:sampling_HRCS}). After the final unitary $U_t$, one measures the $N_A$ system qubits in the computational basis to obtain the result $x(t)$, besides the bath measurement outcome $z_t$. Collecting all temporal measurement results $\bmz(t)= (z_1,\cdots,z_t)$  and the final `spatial' measurement results $x(t)$, the joint probability of the spatiotemporal sampling
\be
p_{\bm U}\left[\bm z\left(t\right), x\left(t\right)\right]=
|\bra{z_1}_{B_1}\cdots \bra{z_t}_{B_t}\bra{x(t)}_{A} U_t\cdots U_1 \ket{\bm 0}_A \ket{\bm 0}_{B_1}\cdots\ket{\bm 0}_{B_t}|^2,
\label{pU:HRCS}
\ee
where we denote the unitary sequence in HRCS as ${\bm U} = (U_1, \cdots, U_t)$ and the bath system after reset in the $k$-th time step as $B_k$.
Comparing the statistics in Eq.~\eqref{pU:RCS} and Eq.~\eqref{pU:HRCS}, the conventional RCS can be regarded as the $t=1$ special case of HRCS. Indeed, the length of the measurement outcome in HRCS, $\bm z\left(t\right)$ and $x\left(t\right)$, increases with the time steps $t$ as $n(t) \coloneqq tN_B+N_A$ due to the reuse of the bath system. Although the probability of HRCS in Eq.~\eqref{pU:HRCS} seems to be a generalization of the probability for conventional RCS in Eq.~\eqref{pU:RCS}, it is indeed a non-unitary circuit due to mid-circuit measurements, shown in Fig.~\ref{fig:concept}b in the main text. We present our analyses of HRCS in \ref{app:sampling_HRCS}.

As a comparison, one can also consider sampling from the marginal distributions, as we detail in \ref{app:marginal_sampling}. In the case of spatial sampling, one samples $x(t)$ from the marginal distribution $p_{\bm U}\left[ x\left(t\right)\right]$; In the case of temporal sampling, one samples only the mid-circuit measurement outcome $\bmz(t)$ following the marginal distribution $p_{\bm U}[\bmz(t)]$.

\subsection{Anticoncentration}
\label{app:AC}

The random circuit sampling problem asks how hard it is to classically simulate a distribution $p_C(\cdot)$ generated from a quantum device $C$.
For a typical state, when the distribution $p_C(x)$ is widely supported over all computational basis $\ket{x}$, but not yet uniform over all basis, the sampling is expected to be a hard task. On the other hand, if the state is only supported on a limited number of basis, classical simulation could become efficient. Therefore, to characterize the spreading of a given discrete distribution, we consider the {\em anticoncentration} (AC) property, which is quantified by the collision probability (CP)
\be
    Z_C = \sum_x p_C(x)^2.
\ee
As a standard metric in the context of statistics, CP has already been widely utilized in the study of RCS, such as Refs.~\cite{dalzell2022random, dalzell2024random, fefferman2024effect}. When $Z_C$ is exponentially small in system size, the distribution is closer to uniform and therefore anticoncentrated; on the other hand, when $Z_C=1$, the distribution is entirely concentrated on a single outcome. Indeed, CP also posts both lower and upper bounds on the maximum probability in the distribution by $Z_C \le \max_{x} p_C(x) \le \sqrt{Z_C}$. 
Note that CP is equivalent to the variance given a fixed mean, and it can also be connected to the R\'enyi-2 entropy $H_2[p_C(\cdot)]$ of the distribution via $Z_C = e^{-H_2[p_C(\cdot)]}$. In this regard, we also introduce higher-order statistical moments to characterize the distribution---the higher-order power sums (PS)
\be 
Z^{(K)}_{C} = \sum_{x} p_C(x)^K.
\label{eq:PS_def}
\ee
In principle, the full information about the discrete-variable distribution $p_C(\cdot)$ on $D$-dimensional support can be obtained from the collection $\left\{Z_C^{(K)}\right\}_{K=1}^D$ via the Newton's identities up to a permutation on its supports.

% In principle, the full information about the distribution $p_C(x)$ can be obtained from all orders of power sum $Z_C^{(K)}$.

In an RCS experiment, there is no preference on the choice of each quantum gate in the circuit. Therefore, we focus on the average-case performance of AC in this work, which is characterized by the ensemble-averaged collision probability as
\be
    Z = \E_{C \in \calC}[Z_C],
\ee
where $\calC$ represents the ensemble of quantum circuit with a fixed architecture. Similarly, we also consider the ensemble-averaged PS $Z^{(K)}=\E_{C \in \calC}[Z^{(K)}_{C}]$. When 
the dimension is large, we expect a small variance in $Z_C$ among different circuit realizations, and one can also bound the CP in a typical circuit by the Markov inequality $\Prob(Z_C \ge \alpha Z) \le 1/\alpha$.
% Furthermore, the AC of ensemble-averaged CP also imply that the sampling in a typical realization of circuit is also anticoncentrated $\Prob(Z_C \ge \alpha Z) \le 1/\alpha$ by the Markov inequality. 

For any discrete distribution $p(\cdot)$ supported on a $D$-dimensional space, the CP is lower bounded by the value of the uniform distribution, $Z \ge Z_{\rm uni} = 1/D$, which can also be regarded as the most anti-concentrated case, despite easy to sample classically. The distribution is regarded as ``anticoncentrated'' when $Z/Z_{\rm uni} \le c$ with $c$ to be a constant independent of the dimension $D$. In contrast, if $Z/Z_{\rm uni} = f(D)$ is an unbounded function depending on the dimension, it becomes ``non-anticoncentrated'' or lack of anticoncentration. The above criteria is also adopted in the study of RCS~\cite{dalzell2022random, dalzell2024random}.
Furthermore, the constraint on the CP can also be converted into the pointwise constraint by the Paley-Zygmund inequality, where AC with respect to $Z$ also implies the AC of probability for each measurement outcome $\Prob[p(\bm x) \ge \alpha/D] \ge (1-\alpha)^2/(DZ)$ with $\alpha \in [0, 1]$, indicating a sufficient strong condition~\cite{fefferman2024effect}.

%Anticoncentration is a necessary ingredient in various hardness arguments for random quantum circuit simulation~\cite{}. However, anticoncentration is also an important ingredient in efficienct classical simulation of certain quantum circuits~\cite{}.

%2-design is sufficient for anticoncentration, however, it is not necessary. Only $\log(n)$ depth of local quantum circuits is needed for anticoncentration~\cite{dalzell2022random}.

%In this part, we introduce the concept of quantum circuit sampling and the collision probability (CP) as a metric to quantify its anticoncentration property.

%If the measurement results are concentrated on a few most likely outcomes, then the problem is easy; on the other hand, if the measurement statistics is too close to uniform distribution, sampling becomes easy again. 

\section{Preliminary}
\label{app:preliminary}

In this section, we introduce the notations in Haar random unitary integration used in derivations of this work and take Haar random states in conventional RCS as a warm-up exercise.

\subsection{Identities for Haar integral}

We first introduce some important identities utilized in the following derivations of CP and PS in both Haar random states and HRCS.
Let us begin with the doubled Hilbert space representation of operators. For an arbitrary operator $\hat{A}$, we define the doubled Hilbert space representation by 
\be
    \hat{A} = \sum_{i,j} \ketbra{i}{j} \braket{i|\hat{A}|j} \to \kett{\hat{A}} = \sum_{i,j} \braket{i|\hat{A}|j} \ket{i} \ket{j}.
\ee
Therefore, when operator $\hat{A}$ undergoing twirling by unitary $U$, we can write it in the doubled Hilbert space representation as
\be
    U \hat{A} U^\dagger \rightarrow \left[U \otimes U^*\right] \kett{\hat{A}},
    \label{eq:twirling_dual}
\ee
where $U^*$ denotes the complex conjugate of $U$. This approach has also be utilized in prior works~\cite{zhang2025holographic, zhang2025scaling}. With this notion, we can write the Hilbert-Schmidt inner product of two operators $\hat{A}$ and $\hat{B}$ simply as
\be
    \tr(\hat{A}^\dagger \hat{B}) = \bbrakett{\hat{A}}{\hat{B}}.
\ee
With the doubled Hilbert space representation, we can now introduce the identity for Haar unitary twirling as
\be
    \E_{\rm Haar}\left[\left(U \otimes U^*\right)^{\otimes K}\right] = \sum_{\sigma, \pi \in S_K} {\rm Wg}_{d}(\sigma^{-1}\pi) \kett{\hat{\sigma}}\bbra{\hat{\pi}}, \label{eq:haar_twirling}
\ee
where $\sigma, \pi$ are permutations and $S_K$ is the permutation group of $K$ entries. Here $ {\rm Wg}_d(\sigma^{-1}\pi)$ is the Weingarten function, defined as ${\rm Wg}_d(\sigma^{-1}\pi) = \left(d^{|\sigma^{-1}\pi|}\right)^{-1}$, where $|\sigma^{-1}\pi|$ denotes the number of cycles of the permutation $\sigma^{-1}\pi$. Specifically, in the case of $K=2$, we have $S_2 = \{e, \tau\}$ consisting of identity $e$ and swap $\tau$. The above twirling identity can be explicitly written out as
\be
    \E_{\rm Haar}\left[\left(U \otimes U^*\right)^{\otimes 2}\right] = \frac{1}{d^2-1}\left(\kett{\hat{e}}\bbra{\hat{e}}+\kett{\hat{\tau}}\bbra{\hat{\tau}}\right) - \frac{1}{d(d^2-1)} \left(\kett{\hat{e}}\bbra{\hat{\tau}}+\kett{\hat{\tau}}\bbra{\hat{e}}\right).
    \label{eq:haar_twirling_K2}
\ee

In the end, we introduce another identity to be utilized in the evaluation of PS.
\be
    \sum_{\sigma \in S_K} {\rm Wg}_d(\sigma) = \frac{(d-1)!}{(K+d-1)!}.
    \label{eq:W_sum}
\ee
We also append a proof of Eq.~\eqref{eq:W_sum} here. 

\begin{proof}
    We first utilize the Haar twirling identity in Eq.~\eqref{eq:haar_twirling} and sandwitch it by $\bbra{\bm 0}$ and $\kett{\bm 0}$ to obtain
    \begin{align}
        \bbra{\bm 0}\E_{\rm Haar}\left[\left(U \otimes U^*\right)^{\otimes K}\right]\kett{\bm 0} &= \sum_{\sigma, \pi \in S_K} {\rm Wg}_{d}(\sigma^{-1}\pi) \bbrakett{\bm 0}{\hat{\sigma}}\bbrakett{\hat{\pi}}{\bm 0} \\
        &= \sum_{\sigma, \pi \in S_K} {\rm Wg}_{d}(\sigma^{-1}\pi)\\
        &= K! \sum_{\sigma \in S_K} {\rm Wg}_{d}(\sigma),
        \label{eq:W_sum_1}
    \end{align}
    where in the last line we utilize the property that $S_K$ is closed and thus rename the dummy variable. Recall that we can also write $\E_{\rm Haar}\left[\left(U \otimes U^*\right)^{\otimes K}\right]\kett{\bm 0}$ as the $K$-th moment of Haar random states ensemble,
    \begin{align}
        \bbra{\bm 0}\E_{\rm Haar}\left[\left(U \otimes U^*\right)^{\otimes K}\right]\kett{\bm 0} &= \tr\left(\ketbra{\bm 0}{\bm  0}^{\otimes K} \int_{\rm Haar} \diff U \left(U\ketbra{\bm 0}{\bm 0}U^\dagger\right)^{\otimes K}\right)\\
        &= \frac{(d-1)!}{(K+d-1)!}\sum_{\pi \in S_K} \tr\left(\ketbra{\bm 0}{\bm  0}^{\otimes K} \hat{\pi}\right)\\
        &=  \frac{K!(d-1)!}{(K+d-1)!},
        \label{eq:W_sum_2}
    \end{align}
    where the second line follows the definition of $K$-th moment of the Haar ensemble (see Refs.~\cite{cotler2023emergent, ho2022exact, zhang2025holographic}). Comparing Eqs.~\eqref{eq:W_sum_1} and~\eqref{eq:W_sum_2}, we can easily find
    \be
        \sum_{\sigma \in S_K} {\rm Wg}_d(\sigma) = \frac{(d-1)!}{(K+d-1)!}.
    \ee
\end{proof}

\subsection{Sampling from Haar random states}

In this section, we review the CP of Haar random states, studied in Ref.~\cite{boixo2018characterizing, dalzell2022random}, and extend to the $K$-th PS.
\begin{lemma}
\label{lemma:Haar_CP}
    The ensemble-averaged collision probability for sampling in $N$-qubit Haar random states is
    \be
        Z_{\rm H}(N) = \frac{2}{2^N + 1}.
        \label{eq:CP_haar}
    \ee
\end{lemma}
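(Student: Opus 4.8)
The plan is to compute the second moment of a Haar random state directly, since the collision probability $Z_{\rm H}(N) = \E_{\rm Haar}\sum_x |\braket{x|\psi}|^4$ is exactly a second-moment quantity. Writing $\ket{\psi} = U\ket{\bm 0}$, I would express the collision probability using the doubled Hilbert space representation introduced in Eq.~\eqref{eq:twirling_dual}. The key observation is that $\sum_x p(x)^2 = \sum_x |\braket{x|\psi}|^4 = \tr\!\left[(\state{\psi})^{\otimes 2}\, \textstyle\sum_x \state{x}^{\otimes 2}\right]$, so after taking the Haar average I only need the $K=2$ twirling identity of Eq.~\eqref{eq:haar_twirling_K2}.

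First I would apply $\E_{\rm Haar}$ to the two-copy operator $(U\state{\bm 0}U^\dagger)^{\otimes 2}$, which in the doubled-space language is $\E_{\rm Haar}[(U\otimes U^*)^{\otimes 2}]\kett{\state{\bm 0}}$ acting appropriately. Using Eq.~\eqref{eq:haar_twirling_K2}, this average collapses onto the identity and swap permutation operators $\hat e, \hat\tau$ with the stated Weingarten coefficients. Then I would contract against the measurement operator $\sum_x \state{x}^{\otimes 2}$, which in permutation language is exactly $\kett{\hat e}$-type and $\kett{\hat\tau}$-type overlaps. The two relevant traces are elementary: $\tr[\state{\bm 0}^{\otimes 2}\hat e] = \tr(\state{\bm 0})^2 = 1$ and $\tr[\state{\bm 0}^{\otimes 2}\hat\tau] = \tr(\state{\bm 0}) = 1$, and similarly the overlaps with $\sum_x\state{x}^{\otimes 2}$ give $\tr(\mathbb I)=d$ for the identity channel and $\tr(\mathbb I)=d$ for the swap (since the computational basis sum of $\state{x}^{\otimes 2}$ contracts both permutations to $d$).

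Collecting the four terms with coefficients $\tfrac{1}{d^2-1}$ (for the $\hat e\hat e$ and $\hat\tau\hat\tau$ channels) and $-\tfrac{1}{d(d^2-1)}$ (for the cross terms), I would obtain
\be
Z_{\rm H}(N) = \frac{d + d}{d^2-1} - \frac{d + d}{d(d^2-1)} = \frac{2d-2}{d^2-1} = \frac{2}{d+1},
\ee
which is precisely $2/(2^N+1)$ after substituting $d=2^N$.

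I do not expect a genuine obstacle here, since all the ingredients are supplied by the preliminary section. The only point requiring care is bookkeeping: correctly identifying which of the four permutation channels contracts to $d$ versus $1$ when sandwiched between $\kett{\state{\bm 0}}$ on one side and the measurement resolution $\sum_x\state{x}^{\otimes 2}$ on the other. A clean alternative, which I would mention as a cross-check, is to invoke Eq.~\eqref{eq:W_sum} at $K=2$: the second moment of the Haar ensemble gives $\E_{\rm Haar}\state{\psi}^{\otimes 2} = \tfrac{(d-1)!}{(d+1)!}(\hat e + \hat\tau) = \tfrac{\hat e + \hat\tau}{d(d+1)}$, and then $Z_{\rm H}(N) = \tr[(\sum_x\state{x}^{\otimes 2})\,\E\state{\psi}^{\otimes 2}] = \tfrac{d + d}{d(d+1)} = \tfrac{2}{d+1}$, reproducing the result immediately and confirming the coefficient.
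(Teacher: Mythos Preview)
Your proposal is correct and follows essentially the same approach as the paper's first proof: both apply the $K=2$ Haar twirling identity of Eq.~\eqref{eq:haar_twirling_K2} and contract against the computational-basis projectors, with the only cosmetic difference that the paper evaluates $\E[p_U(x)^2]=2/d(d+1)$ for a single $x$ and then multiplies by $d$, whereas you carry the sum $\sum_x$ through from the start. Your cross-check via $\E\state{\psi}^{\otimes 2}=(\hat e+\hat\tau)/d(d+1)$ is equivalent; the paper additionally offers a second proof via the Porter--Thomas distribution, which you do not use but is not needed.
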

\begin{proof}
    Here we provide two different approaches to prove Lemma~\ref{lemma:Haar_CP} utilizing the Haar twirling identity and probability of probability (PoP) separately.
    
    The ensemble averaged probability of measurement outcome $x$ in Haar random states can be evaluated as
    \begin{align}
        &\E_{U\in \rm Haar}[p_U(x)^2] = \E_{U\in \rm Haar}[|\braket{x|U|\bm 0}|^4]\nonumber\\
        &= 
        \E_{U\in \rm Haar}\left[\tr(U^{\otimes 2}\ketbra{\bm 0}{\bm 0}^{\otimes 2} U^\dagger{}^{\otimes 2}\ketbra{x}{x}^{\otimes 2})\right]\\
        &= \frac{1}{d^2-1}\left(\bbrakett{x}{\hat{e}}\bbrakett{\hat{e}}{\bm 0}+\bbrakett{x}{\hat{\tau}}\bbrakett{\hat{\tau}}{\bm 0}\right) - \frac{1}{d(d^2-1)} \left(\bbrakett{x}{\hat{e}}\bbrakett{\hat{\tau}}{\bm 0}+\bbrakett{x}{\hat{\tau}}\bbrakett{\hat{e}}{\bm 0}\right)\\
        &= \frac{2}{d^2-1}- \frac{2}{d(d^2-1)}
        = \frac{2}{d(d+1)},
    \end{align}
    where we utilize the twirling identity of Eq.~\eqref{eq:haar_twirling_K2} in the third line and $d = 2^N$ is the Hilbert space dimension. Here $\kett{\bm 0}, \bbra{x}$ are the doubled Hilbert space representation of $\ketbra{\bm 0}{\bm 0}^{\otimes 2}$ and $\ketbra{x}{x}^{\otimes 2}$, and $\bbrakett{x}{\hat{\sigma}} = 1$ for arbitrary $x$ and $\sigma \in S_2$. Then following the definition, the ensemble-averaged collision probability is
    \be
        Z_{\rm H}(N) = \sum_x \E_{U\in \rm Haar}[p_U(x)^2] = \frac{2}{d+1},
    \ee
    as there are in total $d=2^N$ different measurement outcomes. 
    
    Another method for proof follows from the known fact that the PoP for Haar random states follows Porter-Thomas (PT) distribution~\cite{boixo2018characterizing}
    \be
        P\left(p_{\rm H}(x) = p\right) = (d-1)(1-p)^{d-2},
        \label{eq:PT_dist}
    \ee
    then the CP is simply the second moment of the PT distribution up to a constant multiplier as
    \be
        Z_{\rm H}(N) = d\E_{U\in \rm Haar}[p_U(\bm x)^2] = d\int_0^1 \diff p P\left(p_{\rm H}(x) = p\right)p^2 = \frac{2}{d+1}.
    \ee
\end{proof}

As $Z_{\rm H}(N) = 2/2^N + \calO(1/4^N)$ in the large limit of $N \gg 1$, the measurement distribution is clearly anticoncentrated. We also would like to point out that it indeed only requires unitary of $2$-design to achieve it. 

Next, we extend to the $K$-th PS of full sampling in Haar random states.
\begin{lemma}
\label{lemma:Haar_Kth}
    The ensemble-averaged $K$-th power sum in the sampling of $N$-qubit Haar random states is
    \be
        Z_{\rm H}^{(K)}(N) = \frac{K!d!}{(d+K-1)!},
        \label{eq:haar_PS}
    \ee
    where $d=2^N$ is the Hilbert space dimension.
\end{lemma}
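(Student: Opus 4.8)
The plan is to mirror the Weingarten-calculus route already used for the collision probability in Lemma~\ref{lemma:Haar_CP}, now carried out at general order $K$. First I would express the ensemble-averaged $K$-th power of a single outcome probability as a $K$-fold replica moment,
\be
\E_{U\in\rm Haar}[p_U(x)^K] = \E_{U\in\rm Haar}\left[\tr\left(U^{\otimes K}\ketbra{\bm 0}{\bm 0}^{\otimes K} \left(U^\dagger\right)^{\otimes K}\ketbra{x}{x}^{\otimes K}\right)\right],
\ee
and then rewrite the twirl in the doubled Hilbert space as $\bbra{x}\,\E_{\rm Haar}\!\left[(U\otimes U^*)^{\otimes K}\right]\,\kett{\bm 0}$, where $\bbra{x}$ and $\kett{\bm 0}$ are the doubled representations of $\ketbra{x}{x}^{\otimes K}$ and $\ketbra{\bm 0}{\bm 0}^{\otimes K}$ respectively.

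Next I would apply the Haar twirling identity Eq.~\eqref{eq:haar_twirling}, turning the expectation into $\sum_{\sigma,\pi\in S_K}{\rm Wg}_d(\sigma^{-1}\pi)\,\bbrakett{x}{\hat\sigma}\,\bbrakett{\hat\pi}{\bm 0}$. The key simplification---the $K$-th order analogue of the fact $\bbrakett{x}{\hat\sigma}=1$ used in the $K=2$ proof---is that for any permutation $\sigma\in S_K$ the overlap factorizes over the cycles of $\sigma$ into copies of $\tr(\ketbra{x}{x})=1$, so that $\bbrakett{x}{\hat\sigma}=\tr\!\left(\ketbra{x}{x}^{\otimes K}\hat\sigma\right)=1$, and likewise $\bbrakett{\hat\pi}{\bm 0}=1$, since both $\ket{x}$ and $\ket{\bm 0}$ are normalized product states. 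The double sum then collapses to $\sum_{\sigma,\pi\in S_K}{\rm Wg}_d(\sigma^{-1}\pi)$, and relabeling $\sigma^{-1}\pi\to\sigma$ using that $S_K$ is a group (exactly as in Eq.~\eqref{eq:W_sum_1}) gives $K!\sum_{\sigma\in S_K}{\rm Wg}_d(\sigma)$.

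At this point I would invoke the already-established Weingarten sum rule Eq.~\eqref{eq:W_sum}, namely $\sum_{\sigma\in S_K}{\rm Wg}_d(\sigma)=(d-1)!/(K+d-1)!$, yielding the $x$-independent value $\E_{U\in\rm Haar}[p_U(x)^K]=K!\,(d-1)!/(K+d-1)!$. Summing over the $d=2^N$ computational-basis outcomes multiplies by $d$ and produces $Z_{\rm H}^{(K)}=K!\,d!/(K+d-1)!$, which is the claim.

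I do not expect a serious obstacle: the heavy lifting is already packaged in Eq.~\eqref{eq:W_sum}, which is precisely the sum the expression reduces to, and the only point requiring care is verifying that the boundary overlaps $\bbrakett{x}{\hat\sigma}$ and $\bbrakett{\hat\pi}{\bm 0}$ are unity for every cycle type, which follows from normalization of the product states. As an independent cross-check I would reproduce the result via the Porter--Thomas route of Lemma~\ref{lemma:Haar_CP}: using the probability-of-probability in Eq.~\eqref{eq:PT_dist}, $\E[p_U(x)^K]=\int_0^1 (d-1)(1-p)^{d-2}p^K\,\diff p$ is a Beta integral equal to $(d-1)B(K+1,d-1)=K!\,(d-1)!/(K+d-1)!$, and multiplying by $d$ again gives $K!\,d!/(K+d-1)!$, confirming the Weingarten computation.
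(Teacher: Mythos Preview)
Your proposal is correct and essentially identical to the paper's own proof: the paper also reduces $\E_{U\in\rm Haar}[p_U(x)^K]$ via the Haar twirling identity to $\sum_{\sigma,\pi}{\rm Wg}_d(\sigma^{-1}\pi)$, invokes Eq.~\eqref{eq:W_sum} to obtain $K!(d-1)!/(d+K-1)!$, and sums over the $d$ outcomes. The paper likewise includes the Porter--Thomas cross-check via the Beta integral, so your two routes coincide with the two given in the paper.
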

The proof is similar to the one for CP.

\begin{proof}
    We can utilize the Haar twirling identity in Eq.~\eqref{eq:haar_twirling} to evaluate the $K$-th power of measurement probability
    \begin{align}
        &\E_{U\in \rm Haar}[p_U(x)^K] = \E_{U\in \rm Haar}\left[|\braket{x|U|\bm 0}|^{2K}\right] \nonumber \\
        &= 
        \E_{U\in \rm Haar}\left[\tr(U^{\otimes K}\ketbra{\bm 0}{\bm 0}^{\otimes K} U^\dagger{}^{\otimes K}\ketbra{x}{x}^{\otimes K})\right]\\
        &= \sum_{\sigma, \pi \in S_K} {\rm Wg}_{d}(\sigma^{-1}\pi) \bbrakett{x}{\hat{\sigma}}\bbrakett{\hat{\pi}}{\bm 0} \\
        &= \sum_{\sigma, \pi \in S_K} {\rm Wg}_d(\sigma^{-1}\pi)\\
        &= \frac{K!(d-1)!}{(d+K-1)!},
    \end{align}
    where we utilize the identity of sum of Weingarten functions in Eq.~\eqref{eq:W_sum}.
    The $K$-th PS then becomes
    \be
        Z_{\rm H}^{(K)}(N) = \sum_x \E_{U\in \rm Haar}[p_U(x)^K] = d \frac{K!(d-1)!}{(d+K-1)!} = \frac{K!d!}{(d+K-1)!}.
    \ee
    
    We can also prove it utilizing the PoP directly.
    \begin{align}
        Z_{\rm H}^{(K)}(N) = d\E_{U\in \rm Haar}[p_U(\bm x)^K] = d\int_0^1 \diff p P\left(p_{\rm H}(x) = p\right)p^K = \frac{K!d!}{(d+K-1)!}.
    \end{align}

    % The asymptotic analysis is presented as follows.
    % \begin{align}
    %     Z_{\rm H}^{(K)} &= \frac{K!}{(d+1)\cdots(d+K-1)}\\
    %     &= K!\exp\left[-\sum_{j=1}^{K-1}\log(d+j)\right]\\
    %     &= K! \exp\left[-(K-1)\log(d)- \sum_{j=1}^{K-1}\log(1+j/d)\right]\\
    %     &= K! d^{1-K} \exp\left[-\sum_{i=1}^{K-1}(j/d+\calO(1/d^2))\right]\\
    %     &= K!d^{1-K} \exp\left[-K(K-1)/(2d) + \calO(1/d^2)\right].
    % \end{align}
\end{proof}

\subsection{Subsystem sampling in Haar random states}

Finally, we show the sampling on a subsystem of Haar random states.

\begin{lemma}
\label{lemma:sub_dist_haar}
    For a Haar random quantum state $\ket{\psi}_{AB}$ in a bipartite system $\calH = \calH_A \otimes \calH_B$ with $N = N_A + N_B$ qubits, the marginal sampling distribution $p_{\rm Haar}(x_B) = |{}_B\braket{x_B|\psi}_{AB}|^2$ follows the Beta distribution 
    \be
        \Prob[p_{\rm H}(x_B) = p] = {\rm Beta}(d_A, (d_B-1)d_A),
        \label{eq:sub_dist_haar}
    \ee
    where $d_A = 2^{N_A}, d_B = 2^{N_B}$ are the dimension of corresponding subsystems.
\end{lemma}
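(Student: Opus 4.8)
\emph{Proof strategy.} The plan is to recognize $p_{\rm H}(x_B)$ as the weight a Haar-random pure state places on a fixed subspace, and then to pin down its law by computing all of its moments, in direct analogy with Lemmas~\ref{lemma:Haar_CP} and~\ref{lemma:Haar_Kth}. First I would rewrite the marginal probability as the expectation value of a projector: since ${}_B\braket{x_B|\psi}_{AB}$ is an (unnormalized) vector in $\calH_A$, its squared norm is
\be
p_{\rm H}(x_B) = \bra{\psi}\left(\bI_A \otimes \ketbra{x_B}{x_B}\right)\ket{\psi} = \tr\!\left(\Pi\,\ketbra{\psi}{\psi}\right),
\ee
where $\Pi \equiv \bI_A \otimes \ketbra{x_B}{x_B}$ is an orthogonal projector of rank $\tr\Pi = d_A$. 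By left-invariance of the Haar measure, the law of $p_{\rm H}(x_B)$ is independent of $x_B$, so it suffices to treat a single fixed outcome.

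Next I would compute the $K$-th moment using the $K$-th moment operator of the Haar ensemble, $\E_{\rm Haar}\!\big[\ketbra{\psi}{\psi}^{\otimes K}\big] = \tfrac{(d-1)!}{(d+K-1)!}\sum_{\pi\in S_K}\hat\pi$ with $d=d_A d_B$ (the normalized projector onto the symmetric subspace, as used around Eq.~\eqref{eq:W_sum_2}), so that
\be
\E_{\rm Haar}\!\big[p_{\rm H}(x_B)^K\big] = \frac{(d-1)!}{(d+K-1)!}\sum_{\pi\in S_K}\tr\!\big(\Pi^{\otimes K}\hat\pi\big).
\ee
The key simplification is that $\tr(\Pi^{\otimes K}\hat\pi) = \prod_{\text{cycles }c\text{ of }\pi}\tr(\Pi^{|c|}) = d_A^{|\pi|}$, because $\Pi$ is idempotent ($\Pi^m=\Pi$) so each cycle contributes a single factor $\tr\Pi = d_A$, where $|\pi|$ denotes the number of cycles. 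Summing over $S_K$ with the cycle-counting generating function $\sum_{\pi\in S_K} d_A^{|\pi|} = d_A(d_A+1)\cdots(d_A+K-1) = (d_A+K-1)!/(d_A-1)!$ (the rising factorial, equivalently the unsigned Stirling numbers of the first kind) yields
\be
\E_{\rm Haar}\!\big[p_{\rm H}(x_B)^K\big] = \frac{(d-1)!\,(d_A+K-1)!}{(d+K-1)!\,(d_A-1)!},
\ee
which for $d_A=1$ recovers the $K!(d-1)!/(d+K-1)!$ of Lemma~\ref{lemma:Haar_Kth} as a sanity check.

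Finally I would match these against the Beta moments. The claimed law $\mathrm{Beta}(\alpha,\beta)$ with $\alpha=d_A$, $\beta=(d_B-1)d_A$, so that $\alpha+\beta=d_A d_B=d$, has $K$-th moment $\alpha^{\overline K}/(\alpha+\beta)^{\overline K} = (d_A+K-1)!\,(d-1)!/[(d_A-1)!\,(d+K-1)!]$, identical to the expression above for every $K\ge 1$. Since $p_{\rm H}(x_B)$ is supported on the compact interval $[0,1]$, its moment sequence determines it uniquely, which completes the identification of Eq.~\eqref{eq:sub_dist_haar}.

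The step I expect to require the most care is not any single computation but the justification that matching all moments fixes the distribution; the compact support makes this a determinate (Hausdorff) moment problem, but it is the only nonelementary ingredient. If one prefers to avoid the moment problem entirely, an alternative is the Gaussian representation $\ket\psi = \sum_i z_i\ket i/\|z\|$ with i.i.d.\ standard complex Gaussians $z_i$: then $p_{\rm H}(x_B) = \big(\sum_{i\in S}|z_i|^2\big)/\big(\sum_i|z_i|^2\big)$ over the $d_A$-element index set $S$ selected by $x_B$, and the ratio of the independent Gamma variables $\mathrm{Gamma}(d_A,1)$ and $\mathrm{Gamma}((d_B-1)d_A,1)$ gives $\mathrm{Beta}(d_A,(d_B-1)d_A)$ directly, producing the density of Eq.~\eqref{eq:sub_dist_haar} without any appeal to moment determinacy.
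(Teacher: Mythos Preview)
Your proof is correct. Your primary route---computing all Haar moments $\E_{\rm Haar}[p_{\rm H}(x_B)^K]$ via the permutation-sum formula, exploiting the idempotence of $\Pi$ to get $\tr(\Pi^{\otimes K}\hat\pi)=d_A^{|\pi|}$, summing with the rising-factorial identity, and then invoking Hausdorff moment determinacy---is genuinely different from the paper's. The paper instead takes exactly your alternative: it represents the Haar state with i.i.d.\ complex Gaussian amplitudes (the complex Ginibre ensemble), notes that the diagonal entries of the reduced state are sums $s_i=\sum_{i_A}|X_{i_A,i}|^2\sim\mathrm{Gamma}(d_A,1)$, and then uses the standard Gamma-to-Dirichlet construction to conclude that the normalized diagonal follows $\mathrm{Dir}(d_A,\ldots,d_A)$, whence each coordinate is $\mathrm{Beta}(d_A,(d_B-1)d_A)$. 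Your moment-matching argument stays entirely within the Weingarten/permutation toolkit used throughout the rest of the appendix and is in that sense more self-contained; the paper's Gaussian/Dirichlet argument avoids the (mild) appeal to moment determinacy and, as a bonus, yields the joint law of all $d_B$ marginal probabilities simultaneously rather than a single coordinate.
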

Note that in the limit of full sampling $N_A\to 0$ thus $d_A \to 1$, the above Beta distribution converges to the Porter-Thomas distribution of $d = d_B$.

\begin{proof}
    A Haar random state $\ket{\psi}_{AB}$ can be written as $\ket{\psi}_{AB} \propto \sum_{i_A, i_B} X_{i_A, i_B} \ket{i_A}_A \ket{i_B}_B$, where the unnormalized amplitude $X_{i_A, i_B} \sim \calC\calN(0, \bI)$ follows a standard complex normal distribution, and thus the $d_A\times d_B$ random matrix $X$ follows complex Ginibre ensemble. After tracing out the subsystem $A$, we have the reduced state as
    \begin{align}
        \rho_A &= \sum_{k_A} {}_A\braket{k_A|\psi}\braket{\psi|k_A}_A \nonumber\\
        &\propto \sum_{i_B,j_B} \sum_{k_A} X_{k_A, i_B} X^*_{k_A,j_B} \ketbra{i_B}{j_B}_B = \frac{X^T X^*}{\tr(X^T X^*)},
    \end{align}
    which is a fixed traced Wishardt random matrix ensemble. Its diagonal elements form a normalized vector $\bm s= \left(s_1, \dots, s_{d_B}\right)^T$ where $s_i = \sum_{i_A} |X_{i_A,i}|^2 \sim {\rm Gamma}(d_A, 1)$ follows Gamma distribution. Then the normalized vector ${\bm s}/(\sum_{i} s_i)$ follows the Dirichlet distribution as
    \be
        \left(\frac{s_1}{\sum_i s_i}, \dots, \frac{s_{d_B}}{\sum_i s_i}\right)^T \sim {\rm Dir}(d_A, \cdots, d_A).
    \ee
    Note that the normalized vector corresponds to the diagonal entries of the reduced state $\rho_A$, and thus the probability of each sampling outcome. Therefore, the distribution of marginal sampling distribution is
    \be
        \Prob[p_{\rm H}(x_B) = p] = {\rm Beta}(d_A, (d_B-1)d_A).
    \ee
\end{proof}

\begin{corollary}
\label{corollary:haar_marginal}
    The ensemble-averaged collision probability of the marginal sampling distribution $p_{\rm H}(x_B)$ of Haar random states is
    \be
        Z_{{\rm H},B}(N_A, N_B) = \E_{U\in \rm Haar}\left[\sum_{x_B} p_{U}(x_B)^2\right] = \frac{d_A+1}{d_A d_B + 1}.
        \label{eq:CP_haar_sub}
    \ee
    % In the large limit of system size $d_B \gg 1$,
    % \be
    %     Z_{{\rm H},B} = \frac{1}{d_B} + \frac{1}{d_A d_B} + \calO\left(\frac{1}{ d_B^2}\right).
    %     \label{eq:CP_haar_sub_approx}
    % \ee
\end{corollary}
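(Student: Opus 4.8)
The plan is to recognize the marginal probability $p_U(x_B)=|{}_B\braket{x_B|\psi}_{AB}|^2$ as the $x_B$-th diagonal entry of the reduced density matrix $\rho_B=\tr_A(\state{\psi})$, and then to invoke Lemma~\ref{lemma:sub_dist_haar} directly. By that lemma each such diagonal entry is distributed as ${\rm Beta}(d_A,(d_B-1)d_A)$. Since the $d_B$ diagonal entries of $\rho_B$ arise as the components of a single Dirichlet vector ${\rm Dir}(d_A,\dots,d_A)$, they are exchangeable and hence identically distributed, so the ensemble-averaged collision probability factorizes as $Z_{{\rm H},B}(N_A,N_B)=\sum_{x_B}\E[p_U(x_B)^2]=d_B\,\E[p_U(x_B)^2]$.

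The second step is to evaluate the single second moment $\E[p_U(x_B)^2]$ using the standard moment formula for a Beta variable: for $Y\sim{\rm Beta}(\alpha,\beta)$ one has $\E[Y^2]=\alpha(\alpha+1)/[(\alpha+\beta)(\alpha+\beta+1)]$. Substituting $\alpha=d_A$ and $\beta=(d_B-1)d_A$, so that $\alpha+\beta=d_A d_B$, gives $\E[p_U(x_B)^2]=d_A(d_A+1)/[d_A d_B(d_A d_B+1)]$. Multiplying by $d_B$ and cancelling the common factor $d_A d_B$ then yields $(d_A+1)/(d_A d_B+1)$, which is the claimed result.

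As a cross-check that bypasses Lemma~\ref{lemma:sub_dist_haar}, I would rerun the $K=2$ twirling computation already used in Lemma~\ref{lemma:Haar_CP}, but with the rank-one projector $\state{x}$ replaced by the rank-$d_A$ projector $M_{x_B}=(\ket{x_B}\bra{x_B})_B\otimes I$. The only inputs that change are the overlaps $\bbrakett{M_{x_B}}{\hat e}=\tr(M_{x_B})^2=d_A^2$ and $\bbrakett{M_{x_B}}{\hat\tau}=\tr(M_{x_B}^2)=d_A$, while $\bbrakett{\hat\pi}{\bm 0}=1$ exactly as before. Feeding these into the twirling identity of Eq.~\eqref{eq:haar_twirling_K2} with total dimension $d=d_A d_B$ gives $\E[p_U(x_B)^2]=d_A(d_A+1)/[d(d+1)]$, and summing over the $d_B$ outcomes reproduces the same answer, confirming the two routes agree.

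I expect no genuine obstacle here: the computation is elementary once Lemma~\ref{lemma:sub_dist_haar} is available. The only points needing minor care are (i) justifying that all $d_B$ outcomes are identically distributed so the sum collapses to $d_B$ times a single moment, which follows from the permutation symmetry of the Dirichlet distribution, and (ii) correctly tracking the projector rank $d_A$ (rather than $1$) in the twirling approach — this is precisely the modification that distinguishes the marginal collision probability from the full-sampling value $Z_{\rm H}(N)$ of Lemma~\ref{lemma:Haar_CP}.
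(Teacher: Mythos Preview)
Your proposal is correct and matches the paper's own proof essentially line for line: the paper also gives both routes, first the $K=2$ twirling computation with the rank-$d_A$ projector $\bI_A\otimes\ketbra{x_B}{x_B}_B$ (obtaining the same overlaps $d_A^2$ and $d_A$ with $\hat{e}$ and $\hat{\tau}$), and then the Beta-moment computation via Lemma~\ref{lemma:sub_dist_haar}. You simply present them in the opposite order.
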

\begin{proof}
We can prove the CP by Haar twirling identity in Eq.~\eqref{eq:haar_twirling_K2} as
\begin{align}
    Z_{{\rm H},B}(N_A, N_B) &= d_B \E_{U\in \rm Haar}\left[p_U(x_A)^2\right]\\
    &= d_B\E_{U\in \rm Haar}\left[\tr(U^{\otimes 2}\ketbra{\bm 0}{\bm 0}^{\otimes 2} U^\dagger{}^{\otimes 2}\bI_A^{\otimes 2} \otimes \ketbra{x_B}{x_B}_B^{\otimes 2} )\right]\\
    &= \frac{d_B}{d^2-1}\left(\bbrakett{x_B}{\hat{e}}_B \bbrakett{\hat{e}}{\hat{e}}_A \bbrakett{\hat{e}}{\bm 0}+\bbrakett{x_B}{\hat{\tau}}_B\bbrakett{\hat{e}}{\hat{\tau}}_A \bbrakett{\hat{\tau}}{\bm 0}\right) \nonumber\\
    &\quad - \frac{d_B}{d(d^2-1)} \left(\bbrakett{x_B}{\hat{e}}_B \bbrakett{\hat{e}}{\hat{e}}_A \bbrakett{\hat{\tau}}{\bm 0}+\bbrakett{x_B}{\hat{\tau}}_B \bbrakett{\hat{e}}{\hat{\tau}}_A \bbrakett{\hat{e}}{\bm 0}\right)\\
    &= \frac{d_B}{d^2-1}\left(d_A^2 + d_A\right) - \frac{d_B}{d(d^2-1)} \left( d_A^2 + d_A\right)
    = \frac{d_A+1}{d_A d_B + 1}.
\end{align}

The CP can also be proved utilizing the PoP from Lemma~\ref{lemma:sub_dist_haar} as
\be
    Z_{{\rm H},B}(N_A, N_B) = d_B\int_{0}^1 \diff p P(p_{\rm H}(x_B)=p) p^2 = \frac{d_A+1}{d_A d_B + 1}.
\ee
\end{proof}

One can easily show that the CP in Eq.~\eqref{eq:CP_haar_sub} is strictly upper bounded by $Z_{\rm H}$ of dimension $d_B$ as $Z_{{\rm H}, B}(N_A, N_B) < Z_{\rm H}(N_B)$ given a nontrivial subsystem $A$ with $d_A>1$. In large system size limit of $d_A \gg 1$, the CP for marginal sampling $Z_{{\rm H},B}(N_A, N_B) = Z_{\rm uni}(N_B)(1+1/d_A)$ converges to the uniform case of dimension $2^{N_B}$, up to finite-size corrections. The convergence of the marginal sampling of a subsystem in Haar random states toward the uniform distribution also hints a necessary condition for the emergent state design in random states generated via projective measurements on another subsystem, known as {\em deep thermalization}~\cite{ho2022exact, cotler2023emergent}.

% Before closing this section, we also discuss the higher order power sums $Z^{(K)}_{\rm H} = \sum_{x} p_U(x)^K$ in the sampling of Haar random states. Similar to the CP of $K=2$, $Z^{(K)}_{\rm H}$ can be obtained from the Portter-Thomas distribution directly as
% $
% Z_{\rm H}^{(K)} = d\int_{0}^1 \diff p \Prob(p_{\rm H}(\bm x) = p)p^K,
% $
% which leads to the following result.

\section{Spatiotemporal sampling in HRCS}
\label{app:sampling_HRCS}
In this section, we provide the proof to the theorems in the main text. Additionally, we present some numerical simulation results to verify the theorems. Before we start the proof, we first restate the theorems of the main text here for reader's convenience. The proof relies on an equivalent expansion of bath system at different temporal steps into a joint enlarged bath and postpone all mid-circuit measurements to the end, illustrated in Supplementary Fig.~\ref{fig:proof}a. For simplicity, we begin with the case where qubit reset is performed following every mid-circuit measurement; in the end, we show that the results remain the same regardless of reset of not reset.

\subsection{Proof of Theorem~\ref{HRCS_CP_spt}}

\begin{theorem}
\label{HRCS_CP_spt_app}(Theorem~\ref{HRCS_CP_spt} in main text)
    For holographic random circuit sampling with each unitary $U_t$ in $2$-design, the ensemble-averaged collision probability at step $t\ge 1$ is
    \be
        Z_{\rm HRCS}(t) = \frac{2(d_A+1)^{t-1}}{ (1+d_A d_B)^t},
        \label{eq:HRCS_CP_spt_app}
    \ee
    where $d_A = 2^{N_A}, d_B = 2^{N_B}$ are Hilbert space dimensions of the system and bath. In the large-system limit $d_A \gg 1$,
    \begin{align}
        &Z_{\rm HRCS}(t) 
        = Z_{\rm H}(n)\exp\left[\frac{t\left(1-d_B^{-1}\right) + d_B^{-t}-1}{d_A} + \calO\left(\frac{1}{d_A^2}\right)\right],
        \label{eq:HRCS_CP_spt_asymp_app}
    \end{align}
    where $n = N_A + t N_B$ is the total number of sampling bits.
\end{theorem}

\begin{proof}
    From the schematic in Supplementary Fig.~\ref{fig:proof}a, we first write out the joint probability of sampling $\bmz(t), x(t)$ in the output state as
    \be
        P_U[\bmz(t), x(t)] = \tr\left(U\ketbra{\bm 0}{\bm 0}_{A\bm B} U^\dagger \ketbra{x(t), \bmz(t)}{x(t), \bmz(t)}_{A\bm B}\right),
        \label{eq:pr_spt}
    \ee
    where $\bm B = (B_1, \dots, B_t)$ is the equivalent expanded bath system and $\ket{\bm 0}_{A \bm B} = \ket{\bm 0_A, \bm 0_{B_1},\dots, \bm 0_{B_t}}$ is the initial state of the whole state. Here we denote $U = \prod_{i=1}^t U_i$ as the global unitary of the equivalent circuit. The ensemble average of the squared probability becomes
    \begin{align}
        \E_{U\in \rm Haar}\left[P_U[\bmz(t), x(t)]^2\right] &= \E_{U\in \rm Haar} \tr\left(U^{\otimes 2}\ketbra{\bm 0}{\bm 0}_{A\bm B}^{\otimes 2} U^\dagger{}^{\otimes 2} \ketbra{x(t), \bmz(t)}{x(t), \bmz(t)}_{A\bm B}^{\otimes 2}\right)\\
        &= \bbra{x(t),\bmz(t)}_{A \bm B} \E_{U\in \rm Haar}\left[U^{\otimes 2} \otimes U^*{}^{\otimes 2}\right] \kett{\bm 0}_{A\bm B},
        \label{eq:pr_spt_square}
    \end{align}
    where we utilize the doubled Hilbert space representation in the second line and omit the $(\cdot)^{\otimes 2}$ in the doubled bra and ket.
    Now we evaluate Eq.~\eqref{eq:pr_spt_square} step by step from $U_1$ to $U_t$.

    For the first step involving $U_1$, we have
    \begin{align}
        &\bbra{z_1(t)}_{B_1} \E_{U_1\in \rm Haar}\left[U_1^{\otimes 2} \otimes U_1^*{}^{\otimes 2}\right] \kett{\bm 0}_{AB_1}\nonumber\\
        &= \frac{1}{d^2-1}\left(\bbrakett{z_1(t)}{\hat{e}}_{B_1}\bbrakett{\hat{e}}{\bm 0}_{AB_1}\kett{\hat{e}}_A +\bbrakett{z_1(t)}{\hat{\tau}}_{B_1}\bbrakett{\hat{\tau}}{\bm 0}_{AB_1}\kett{\hat{\tau}}_{A}\right) \nonumber\\
        &\quad - \frac{1}{d(d^2-1)} \left(\bbrakett{z_1(t)}{\hat{e}}_{B_1}\bbrakett{\hat{\tau}}{\bm 0}_{AB_1}\kett{\hat{e}}_A +\bbrakett{z_1(t)}{\hat{\tau}}_{B_1}\bbrakett{\hat{e}}{\bm 0}_{AB_1} \kett{\hat{\tau}}_A \right)\\
        &= \frac{1}{d^2-1}\left(\kett{\hat{e}}_A +\kett{\hat{\tau}}_{A}\right) - \frac{1}{d(d^2-1)} \left(\kett{\hat{e}}_A +\kett{\hat{\tau}}_{A}\right)\\
        &= \frac{1}{d(d+1)} \left(\kett{\hat{e}}_A +\kett{\hat{\tau}}_{A}\right).
        \label{eq:pr_spt_square_eq0}
    \end{align}
    As both permutations $e, \tau$ from $S_2$ appears, we next focus on the evolution of each of them separately under Haar random unitary.
\begin{figure}[t]
    \centering
    \includegraphics[width=\textwidth]{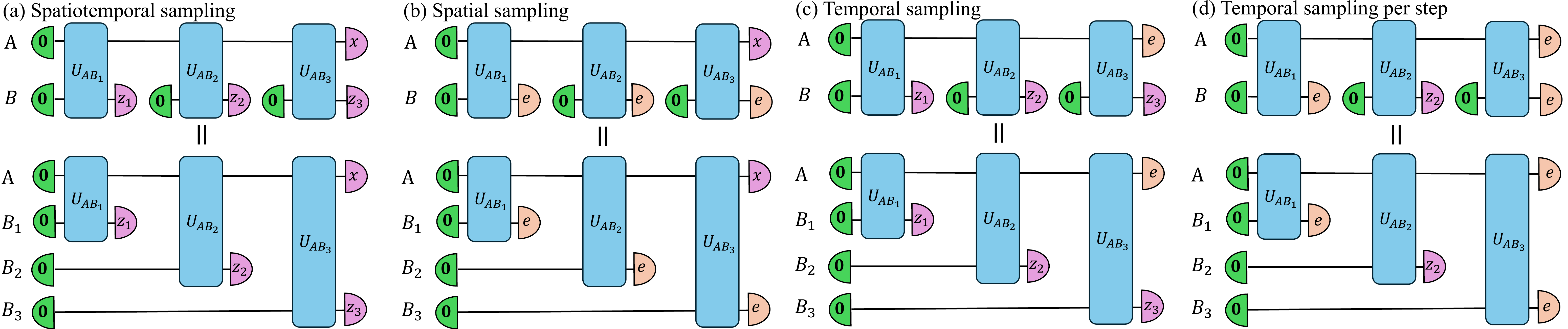}
    \caption{Schematic of the equivalent expansion of bath systems in HRCS for different sampling. Here we show an example of $t=3$. In (d), we show the schematic for temporal sampling at the second step, with distribution $p[z_2]$.}
    \label{fig:proof}
\end{figure}
    \begin{align}
        &\bbra{z_2(t)}_{B_2} \E_{U_2\in \rm Haar}\left[U_2^{\otimes 2} \otimes U_2^*{}^{\otimes 2}\right] \kett{\hat{e}}_A \kett{\bm 0}_{B_2} \nonumber \\
        &= \frac{1}{d^2-1}\left(\bbrakett{z_2(t)}{\hat{e}}_{B_2} \bbrakett{\hat{e}}{\hat{e}}_A \bbrakett{\hat{e}}{\bm 0}_{B_2}\kett{\hat{e}}_A + \bbrakett{z_2(t)}{\hat{\tau}}_{B_2} \bbrakett{\hat{\tau}}{\hat{e}}_{A}\bbrakett{\hat{\tau}}{\bm 0}_{B_2}\kett{\hat{\tau}}_{A}\right) \nonumber\\
        &\quad - \frac{1}{d(d^2-1)} \left(\bbrakett{z_2(t)}{\hat{e}}_{B_2}\bbrakett{\hat{\tau}}{\hat{e}}_{A}\bbrakett{\hat{\tau}}{\bm 0}_{B_2}\kett{\hat{e}}_A +\bbrakett{z_2(t)}{\hat{\tau}}_{B_2}\bbrakett{\hat{e}}{\hat{e}}_{A}\bbrakett{\hat{e}}{\bm 0}_{B_2} \kett{\hat{\tau}}_A \right)\\
        &= \frac{1}{d^2-1}\left( d_A^2 \kett{\hat{e}}_A + d_A\kett{\hat{\tau}}_{A}\right) - \frac{1}{d(d^2-1)} \left(d_A\kett{\hat{e}}_A +d_A^2\kett{\hat{\tau}}_A \right)\\
        &= \frac{d_A^2 d_B - 1}{d_B\left(d_A^2 d_B^2-1\right)} \kett{\hat{e}}_A + \frac{d_A (d_B-1)}{d_B\left(d_A^2 d_B^2-1\right)} \kett{\hat{\tau}}_A.
        \label{eq:pr_spt_square_eq1}
    \end{align}
    Similarly, for $\tau$, we have
    \begin{align}
        &\bbra{z_2(t)}_{B_2} \E_{U_2\in \rm Haar}\left[U_2^{\otimes 2} \otimes U_2^*{}^{\otimes 2}\right] \kett{\hat{\tau}}_A \kett{\bm 0}_{B_2} \nonumber \\
        &= \frac{1}{d^2-1}\left(\bbrakett{z_2(t)}{\hat{e}}_{B_2} \bbrakett{\hat{e}}{\hat{\tau}}_A \bbrakett{\hat{e}}{\bm 0}_{B_2}\kett{\hat{e}}_A + \bbrakett{z_2(t)}{\hat{\tau}}_{B_2} \bbrakett{\hat{\tau}}{\hat{\tau}}_{A}\bbrakett{\hat{\tau}}{\bm 0}_{B_2}\kett{\hat{\tau}}_{A}\right) \nonumber\\
        &\quad - \frac{1}{d(d^2-1)} \left(\bbrakett{z_2(t)}{\hat{e}}_{B_2}\bbrakett{\hat{\tau}}{\hat{\tau}}_{A}\bbrakett{\hat{\tau}}{\bm 0}_{B_2}\kett{\hat{e}}_A +\bbrakett{z_2(t)}{\hat{\tau}}_{B_2}\bbrakett{\hat{e}}{\hat{\tau}}_{A}\bbrakett{\hat{e}}{\bm 0}_{B_2} \kett{\hat{\tau}}_A \right)\\
        &= \frac{1}{d^2-1}\left( d_A \kett{\hat{e}}_A + d_A^2 \kett{\hat{\tau}}_{A}\right) - \frac{1}{d(d^2-1)} \left(d_A^2\kett{\hat{e}}_A +d_A\kett{\hat{\tau}}_A \right)\\
        &= \frac{d_A(d_B-1)}{d_B\left(d_A^2 d_B^2-1\right)} \kett{\hat{e}}_A + \frac{d_A^2 d_B - 1}{d_B\left(d_A^2 d_B^2-1\right)} \kett{\hat{\tau}}_A.
        \label{eq:pr_spt_square_eq2}
    \end{align}
    Until now, we have obtain all the mapping for the evolution of $e, \tau$ under Haar ensemble of unitaries. For notation convenience, we use a $2$-dimensional linear system to represent $\kett{\hat{e}}_A, \kett{\hat{\tau}}_A$ and denote them as $\ket{\underline{0}}\coloneqq \kett{\hat{e}}_A, \ket{\underline{1}}\coloneqq \kett{\hat{\tau}}_A$ though they do not form orthogonal bases. 
    For initial condition of $\kett{\bm 0}_A$, Eq.~\eqref{eq:pr_spt_square_eq0} leads to $\kett{\bm 0}_A \to (\ket{\underline{0}} + \ket{\underline{1}})/d(d+1)$, which is taken to be the initial condition for evolution of this $2$-dimensional linear system.
    The mapping in Eqs.~\eqref{eq:pr_spt_square_eq1} and~\eqref{eq:pr_spt_square_eq2} can be represented by a $2\times 2$ matrix as
    \be
        M = \frac{1}{d_B\left(d_A^2 d_B^2 - 1\right)}\begin{pmatrix}
            d_A^2 d_B-1 & d_A(d_B-1)\\
            d_A(d_B-1)  & d_A^2 d_B-1
        \end{pmatrix},
        \label{eq:M_def}
    \ee
    and the ensemble-averaged squared probability in Eq.~\eqref{eq:pr_spt_square} becomes
    \begin{align}
        \E_{U\in \rm Haar}\left[P_U[\bmz(t), x(t)]^2\right] &= \frac{1}{d(d+1)}\left[\left(\bra{\underline{0}}M^{t-1}\left(\ket{\underline{0}} + \ket{\underline{1}}\right)\right) \bbrakett{x(t)}{\hat{e}}_A + \left(\bra{\underline{1}}M^{t-1}\left(\ket{\underline{0}} + \ket{\underline{1}}\right)\right) \bbrakett{x(t)}{\hat{\tau}}_A\right]\\
        &= \frac{2}{d(d+1)} \left(\frac{1 + d_A}{d_A d_B^2 + d_B}\right)^{t-1}.
    \end{align}
    Therefore, the ensemble-averaged CP is
    \begin{align}
        Z_{\rm HRCS}(t) &= \sum_{x(t),\bmz(t)} \E_{U\in \rm Haar}\left[P_U[\bmz(t), x(t)]^2\right] \\
        &= \frac{2 d_A d_B^t}{d(d+1)} \left(\frac{1 + d_A}{d_A d_B^2 + d_B}\right)^{t-1}\\
        &= \frac{2(1+d_A)^{t-1}}{(d_A d_B+1)^t},
    \end{align}
    which is Eq.~\eqref{eq:HRCS_CP_spt_app} in Theorem~\ref{HRCS_CP_spt_app}.
    
    In the following, we derive the asymptotic expression of CP in HRCS in the large system limit of $d_A \gg 1$.
    Recall that the CP for Haar random states is $Z_{\rm H}(n) = 2/(d_A d_B^t+1)$ (see Eq.~\eqref{eq:CP_haar} with $N_{\rm eff} = N_A + tN_B$). The CP ratio between Eq.~\eqref{eq:HRCS_CP_spt_app} of HRCS and Haar one becomes
    \begin{align}
        \frac{Z_{\rm HRCS}(t)}{Z_{\rm H}(n)} &= \frac{2(d_A+1)^{t-1}}{(1+d_A d_B)^t} \frac{d_A d_B^t + 1}{2} \\
        &= \frac{d_A^{t-1}d_A d_B^t}{(d_A d_B)^t} \left(1 + \frac{1}{d_A}\right)^{t-1} \left(1+\frac{1}{d_A d_B^t}\right) \left(1+\frac{1}{d_A d_B}\right)^{-t}\\
        &=\exp\left[(t-1)\ln\left(1 + \frac{1}{d_A}\right)+ \ln\left(1+\frac{1}{d_A d_B^t}\right) -t\ln\left(1+\frac{1}{d_A d_B}\right)\right]\\
        &=\exp\left[\frac{t\left(1-d_B^{-1}\right) + d_B^{-t}-1}{d_A} + \calO\left(\frac{1}{d_A^2}\right)\right].
    \end{align}
    Therefore, we have the asymptotic form of $Z_{\rm HRCS}(t)$ as
    \be
        Z_{\rm HRCS}(t) = Z_{\rm H}(n) \exp\left[\frac{t\left(1-d_B^{-1}\right) + d_B^{-t}-1}{d_A} + \calO\left(\frac{1}{d_A^2}\right)\right].
        \label{eq:CP_asymp_app}
    \ee
\end{proof}

For $\epsilon$-approximate Haar AC in terms of CP, we can directly solve $Z_{\rm HRCS}(t) \le (1+\epsilon)Z_{\rm H}(n)$ to find the critical number of temporal steps. However, to obtain a intuitive and informative solution, we first upper bound the asymptotic form of Eq.~\eqref{eq:HRCS_CP_spt_asymp_app} by
\begin{align}
    Z_{\rm HRCS}(t) &=Z_{\rm H}(n)\exp\left[\frac{t\left(1-d_B^{-1}\right) + d_B^{-t}-1}{d_A} + \calO\left(\frac{1}{d_A^2}\right)\right] \nonumber\\
    &\le Z_{\rm H}(n) \exp\left[\frac{t\left(1-d_B^{-1}\right)}{d_A} + \calO\left(\frac{1}{d_A^2}\right)\right],
\end{align}
where we utilize $d_B^{-t} \le 1$, and the corresponding critical temporal steps can be simply solved as
\be
    t \le \tau \coloneqq \frac{d_A d_B}{d_B-1}\ln(1+\epsilon),
    \label{eq:HRCS_CP_tau_app}
\ee
which is Eq.~\eqref{eq:max_steps_overview} in the main text.

\subsection{Proof of Theorem~\ref{HRCS_PS_spt}}
\label{proof:theorem2}

\begin{theorem}
\label{HRCS_PS_spt_app}
(Theorem~\ref{HRCS_PS_spt} in the main text)
For holographic random circuit sampling with each unitary $U_t$ in $K$-design, the ensemble-averaged $K$-th power sum at step $t\ge 1$ is
    \begin{align}
       & Z_{\rm HRCS}^{(K)}(t) =
       K! d_A d_B^t \left(\frac{(d_A+K-1)!}{(d_A-1)!}\right)^{t-1}\left(\frac{(d_A d_B-1)!}{(d_A d_B + K-1)!}\right)^t,
        \label{eq:HRCS_PS_spt_app}
    \end{align}
    where $d_A=2^{N_A}, d_B=2^{N_B}$ are Hilbert space dimensions of the system and bath. 
    In the large-system limit $d_A \gg 1$,
    \begin{align}
        &Z_{\rm HRCS}^{(K)}(t) = Z_{\rm H}^{(K)}(n)\exp\left[K(K-1)\frac{t\left(1-d_B^{-1}\right)+d_B^{-t}-1}{2d_A} + \calO\left(\frac{1}{d_A^2}\right)\right].
        % \exp\left[\frac{K(K-1)\left(t\left(1-d_B^{-1}\right)-1\right)}{2d_A}\right],
        \label{eq:HRCS_PS_spt_asymp_app}
    \end{align}
    Here the $Z_{\rm H}^{(K)}(n)= K!\left(2^n\right)!/\left(2^n+K-1\right)!$ is the $K$-th power sum for $n$-qubit Haar random states.
\end{theorem}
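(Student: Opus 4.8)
The plan is to mirror the transfer-matrix argument used for Theorem~\ref{HRCS_CP_spt_app}, but now carried out on the $K!$-dimensional replica space spanned by the permutation states $\{\kett{\hat\sigma}_A\}_{\sigma\in S_K}$ rather than the two-dimensional space $\{\kett{\hat e}_A,\kett{\hat\tau}_A\}$. First I would write the $K$-th power sum as a $K$-replica overlap,
\begin{align}
Z_{\rm HRCS}^{(K)}(t)=\sum_{x(t),\bmz(t)} \bbra{x(t),\bmz(t)}_{A\bm B}\,\E_{U\in \rm Haar}\!\left[(U\otimes U^*)^{\otimes K}\right]\kett{\bm 0}_{A\bm B},
\end{align}
using the expanded-bath picture of Fig.~\ref{fig:proof}a, and then evaluate the Haar average over each $U_k$ independently, from $U_1$ to $U_t$, with the $K$-fold twirling identity Eq.~\eqref{eq:haar_twirling}.

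The first step produces a uniform vector in replica space: integrating $U_1$ on $AB_1$ and projecting the bath gives, since $\bbrakett{z_1}{\hat\sigma}_{B_1}=\bbrakett{\hat\pi}{\bm 0}_{AB_1}=1$, the combination $c_1\sum_{\sigma}\kett{\hat\sigma}_A$ with $c_1=\sum_{\pi}{\rm Wg}_{d_Ad_B}(\pi)=\frac{(d_Ad_B-1)!}{(d_Ad_B+K-1)!}$ by Eq.~\eqref{eq:W_sum}. For each later step $k\ge2$, feeding $\kett{\hat\sigma}_A\kett{\bm 0}_{B_k}$ through $U_k$ and projecting the bath yields $\sum_\alpha M_{\alpha\sigma}\kett{\hat\alpha}_A$, where
\begin{align}
M_{\alpha\sigma}=\sum_{\beta\in S_K}{\rm Wg}_{d_Ad_B}(\alpha^{-1}\beta)\,d_A^{\,|\beta^{-1}\sigma|},
\end{align}
having used $\bbrakett{\hat\beta}{\hat\sigma}_A=d_A^{\,|\beta^{-1}\sigma|}$ with $|\beta^{-1}\sigma|$ the number of cycles. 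Summing over each bath outcome $z_k$ and the final system outcome $x(t)$ contributes the prefactor $d_Ad_B^{t}$, so that $Z_{\rm HRCS}^{(K)}(t)=d_Ad_B^{t}\,c_1\,\mathbf v^{T}M^{t-1}\mathbf v$, where $\mathbf v$ is the all-ones vector $\sum_\sigma\kett{\hat\sigma}_A$.

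The key simplification, and the step that carries the real content, is that $M_{\alpha\sigma}$ depends only on $\alpha^{-1}\sigma$, so $M$ is a convolution operator on $S_K$ and the uniform vector $\mathbf v$ (the trivial representation) is an exact eigenvector. Evaluating $(M\mathbf v)_\alpha$ by relabeling $\beta\to\alpha\gamma$ and applying both Eq.~\eqref{eq:W_sum} and the cycle-counting identity $\sum_{\gamma\in S_K}d_A^{\,|\gamma|}=\frac{(d_A+K-1)!}{(d_A-1)!}$ gives the eigenvalue $\lambda=\frac{(d_Ad_B-1)!}{(d_Ad_B+K-1)!}\frac{(d_A+K-1)!}{(d_A-1)!}$, independent of $\alpha$. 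Because both the initial condition and the final contraction are proportional to $\mathbf v$, I never need the full spectrum of the $K!\times K!$ matrix: $\mathbf v^{T}M^{t-1}\mathbf v=\lambda^{t-1}K!$, and substituting $c_1$ and $\lambda$ reproduces Eq.~\eqref{eq:HRCS_PS_spt_app} directly, with the $t=1$ and $K=2$ specializations recovering $Z_{\rm H}^{(K)}(N_A+N_B)$ and Theorem~\ref{HRCS_CP_spt_app} respectively.

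Finally, for the asymptotic form Eq.~\eqref{eq:HRCS_PS_spt_asymp_app} I would write the ratio in terms of rising factorials $(x)_K\equiv(x+K-1)!/(x-1)!$ as $Z_{\rm HRCS}^{(K)}(t)/Z_{\rm H}^{(K)}(N_{\rm eff})=(d_A)_K^{t-1}(d_Ad_B^{t})_K/(d_Ad_B)_K^{t}$, expand each factor via $(x)_K=x^{K}\exp[\tfrac{K(K-1)}{2x}+\calO(x^{-2})]$, and collect the $\calO(1/d_A)$ terms; the leading powers of $d_A$ and $d_B$ cancel, leaving the stated exponent with prefactor $K(K-1)/2$. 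The main obstacle is purely organizational, namely confirming that the group-convolution structure of $M$ survives the bath projection and reset at every step so that $\mathbf v$ remains the relevant eigenvector throughout; once established, this reduces the apparently formidable $K!$-dimensional transfer problem to a single scalar eigenvalue.
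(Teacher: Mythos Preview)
Your proposal is correct and follows essentially the same route as the paper: both write the $K$-th power sum as a $K$-replica overlap, obtain $c_1\sum_\sigma\kett{\hat\sigma}_A$ after the first step via Eq.~\eqref{eq:W_sum}, observe that the uniform superposition $\sum_\sigma\kett{\hat\sigma}_A$ is reproduced by each subsequent step with the scalar factor $\lambda=\frac{(d_Ad_B-1)!}{(d_Ad_B+K-1)!}\frac{(d_A+K-1)!}{(d_A-1)!}$ (using the cycle-counting identity $\sum_{\gamma}d_A^{|\gamma|}=(d_A+K-1)!/(d_A-1)!$), and then expand the rising-factorial ratio for the asymptotic form. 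Your framing in terms of a group-convolution transfer matrix on $S_K$ with $\mathbf v$ as the trivial-representation eigenvector is a nice conceptual gloss on what the paper phrases more directly as ``the input entry $\sum_{\sigma}\kett{\hat\sigma}_A$ is exactly reproduced,'' but the computation is identical.
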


\begin{proof}
    The joint probability of sampling $\bmz(t)$ and $x(t)$ still follows Eq.~\eqref{eq:pr_spt}, and according to definition of $K$-th PS (Eq.~\eqref{eq:PS_def}), the ensemble-averaged $K$-th power of $p[\bmz(t), x(t)]$ becomes
    \begin{align}
        \E_{U\in \rm Haar}\left[P_U[\bmz(t), x(t)]^K\right] &= \E_{U\in \rm Haar} \tr\left(U^{\otimes K}\ketbra{\bm 0}{\bm 0}_{A\bm B}^{\otimes K} U^\dagger{}^{\otimes K} \ketbra{x(t), \bmz(t)}{x(t), \bmz(t)}_{A\bm B}^{\otimes K}\right)\\
        &= \bbra{x(t),\bmz(t)}_{A \bm B} \E_{U\in \rm Haar}\left[U^{\otimes K} \otimes U^*{}^{\otimes K}\right] \kett{\bm 0}_{A\bm B},
        \label{eq:pr_spt_Kpow}
    \end{align}
    where we omit the $(\cdot)^{\otimes K}$ in the doubled bra and ket. We then evaluate Eq.~\eqref{eq:pr_spt_Kpow} through all unitaries from $U_1$ to $U_t$.
    For the first step involving $U_1$, we have
    \begin{align}
        &\bbra{z_1(t)}_{B_1} \E_{U_1\in \rm Haar}\left[U_1^{\otimes K} \otimes U_1^*{}^{\otimes K}\right] \kett{\bm 0}_{AB_1}\nonumber\\
        &= \bbra{z_1(t)}_{B_1} \left[\sum_{\sigma, \pi \in S_K} {\rm Wg}_d\left(\sigma^{-1}\pi\right)\kett{\sigma}\bbra{\pi}\right] \kett{\bm 0}_{AB_1} \\
        &= \sum_{\sigma, \pi \in S_K} {\rm Wg}_d\left(\sigma^{-1}\pi\right) \kett{\sigma}_A \\
        &=  \sum_{\sigma \in S_K}  \kett{\sigma}_A \left(\sum_{\pi \in S_K} {\rm Wg}_d\left(\sigma^{-1}\pi\right)\right) = \frac{(d-1)!}{(K+d-1)!} \sum_{\sigma \in S_K}  \kett{\sigma}_A ,
        \label{eq:pr_spt_Kpow_eq0}
    \end{align}
    where in the second line we utilize Haar unitary twirling of Eq.~\eqref{eq:haar_twirling}, and in the last line, we utilize the identity in Eq.~\eqref{eq:W_sum}. Now we can regard the sum of all permutations $\sum_{\sigma \in S_K} \kett{\sigma}_A$ as a single entry and evaluate its evolution under the twirling of Haar random unitary $U_2$
    \begin{align}
        &\bbra{z_2(t)}_{B_2} \E_{U_2\in \rm Haar}\left[U_2^{\otimes K} \otimes U_2^*{}^{\otimes K}\right] \left(\sum_{\sigma \in S_K} \kett{\sigma}_A\right) \kett{\bm 0}_{B_2} \nonumber\\
        &= \bbra{z_2(t)}_{B_2} \left[\sum_{\pi, \pi' \in S_K} {\rm Wg}_d\left(\pi^{-1}\pi'\right)\kett{\pi}\bbra{\pi'}\right] \left(\sum_{\sigma \in S_K} \kett{\sigma}_A\right) \kett{\bm 0}_{B_2} \\
        &= \sum_{\pi, \pi' \in S_K} {\rm Wg}_d\left(\pi^{-1}\pi'\right) \kett{\pi}_{A} \bbrakett{z_2(t)}{\pi}_{B_2} \sum_{\sigma \in S_K} \bbrakett{\pi'}{\sigma}_A \bbrakett{\pi'}{\bm 0}_{B_2} \\
        &= \sum_{\pi, \pi'\in S_K}{\rm Wg}_d\left(\pi^{-1}\pi'\right) \kett{\pi}_{A} \left(\sum_{\sigma \in S_K} d_A^{|\pi'^{-1} \sigma|}\right) \\
        &= \sum_{\pi, \pi'\in S_K}{\rm Wg}_d\left(\pi^{-1}\pi'\right) \kett{\pi}_{A} \frac{(d_A+K-1)!}{(d_A-1)!} \\
        &= \frac{(d_A+K-1)! (d-1)!}{(d_A-1)! (d+K-1)!} \sum_{\sigma \in S_K} \kett{\sigma}_A,
        \label{eq:pr_spt_Kpow_eq1}
    \end{align}
    where the second to last line we utilize the definition of Stirling numbers of the first kind which leads to the falling factorial. From Eq.~\eqref{eq:pr_spt_Kpow_eq1}, we find that the input entry $\sum_{\sigma \in S_K} \kett{\sigma}_A$ is exactly reproduced, which is expected as the sum of all permutation elements remains invariant under twirling of Haar unitary. Therefore, we can evaluate Eq.~\eqref{eq:pr_spt_Kpow} utilizing Eq.~\eqref{eq:pr_spt_Kpow_eq0} and \eqref{eq:pr_spt_Kpow_eq1} as
    \begin{align}
        \E_{U\in \rm Haar}\left[P_U[\bmz(t), x(t)]^K\right] &= \left(\frac{(d_A+K-1)! (d-1)!}{(d_A-1)! (d+K-1)!}\right)^{t-1} \frac{(d-1)!}{(d+K-1)!} \bbra{x(t)}\sum_{\sigma \in S_K} \kett{\sigma}_A \\
        &= K! \left(\frac{(d_A+K-1)!}{(d_A-1)!}\right)^{t-1} \left(\frac{(d_A d_B-1)!}{(d_A d_B +K-1)!}\right)^t,
    \end{align}
    and the $K$-th PS becomes
    \be
        Z^{(K)}_{\rm HRCS}(t) = \sum_{x(t), \bmz(t)} \E_{U\in \rm Haar}\left[P_U[\bmz(t), x(t)]^K\right] = K! d_A d_B^t \left(\frac{(d_A+K-1)!}{(d_A-1)!}\right)^{t-1} \left(\frac{(d_A d_B-1)!}{(d_A d_B +K-1)!}\right)^t,
    \ee
    which completes the proof of Eq.~\eqref{eq:HRCS_PS_spt_app} in Theorem~\ref{HRCS_PS_spt_app}.

    In the large system limit of $d_A \gg 1$, recall that the $K$-th PS for Haar random states is $Z_{\rm H}^{(K)}(n) = K!(d_A d_B^t)!/(d_A d_B^t + K-1)!$ (see Eq.~\eqref{eq:haar_PS} with $N=n = N_A + tN_B$). The $K$-th PS ratio between Eq.~\eqref{eq:HRCS_PS_spt_app} of HRCS and Haar one becomes
    \begin{align}
        \frac{Z_{\rm HRCS}^{(K)}(t)}{Z_{\rm H}^{(K)}(n)} &= \frac{K!d_A d_B^t \left[(d_A+K-1)!\right]^{t-1} \left[(d_A d_B-1)!\right]^t \left(d_A d_B^t + K-1\right)!}{\left[(d_A-1)!\right]^{t-1} \left[(d_A d_B + K-1)!\right]^t K!(d_A d_B^t)!}  \\
        &= \frac{\left[(d_A+K-1)!\right]^{t-1} \left[(d_A d_B-1)!\right]^t \left(d_A d_B^t + K-1\right)!}{\left[(d_A-1)!\right]^{t-1} \left[(d_A d_B + K-1)!\right]^t (d_A d_B^t-1)!}\\
        &= \left[\prod_{j=0}^{K-1} (d_A+j)\right]^{t-1} \left[\prod_{j=0}^{K-1} (d_A d_B +j)\right]^{-t} \left[\prod_{j=0}^{K-1} (d_A d_B^t+j)\right] \\
        &= \exp\left[(t-1)\sum_{j=0}^{K-1}\ln(d_A+j) - t\sum_{j=0}^{K-1} \ln(d_A d_B + j) + \sum_{j=0}^{K-1} \ln\left(d_A d_B^t + j\right)\right] \\
        &= \exp\left[(t-1)\sum_{j=0}^{K-1}\left(\ln(d_A) + \ln\left(1+\frac{j}{d_A}\right)\right) - t\sum_{j=0}^{K-1} \left(\ln(d_A d_B) + \ln\left(1+\frac{j}{d_A d_B}\right)\right) \nonumber\right.\\
        &\left. \qquad \quad + \sum_{j=0}^{K-1} \left(\ln\left(d_A d_B^t\right)\ln\left(1 + \frac{j}{d_A d_B^t}\right)\right)\right]\\
        &= \exp\left[(t-1)\ln(d_A) - t\ln(d_A d_B) + \ln\left(d_A d_B^t\right)\right]\nonumber\\
        &\quad \times  \exp\left[\sum_{j=0}^{K-1} \left((t-1)\ln\left(1+\frac{j}{d_A}\right) - t\ln\left(1+\frac{j}{d_A d_B}\right) + \ln\left(1 + \frac{j}{d_A d_B^t}\right)\right)\right]\\
        &= \exp\left[\sum_{j=0}^{K-1} \left((t-1)\frac{j}{d_A} - t\frac{j}{d_A d_B} + \frac{j}{d_A d_B^t}\right) + \calO\left(\frac{1}{d_A^2}\right)\right] \\
        &= \exp\left[\frac{K(K-1)}{2d_A} \left((t-1) - \frac{t}{d_B} + \frac{1}{d_B^t}\right) + \calO\left(\frac{1}{d_A^2}\right)\right],
    \end{align}
    which the the asymptotic form in Eq.~\eqref{eq:HRCS_PS_spt_asymp_app}.
\end{proof}

For $\epsilon$-approximate $K$-th PS in HRCS defined by $Z_{\rm HRCS}^{(K)}(t) \le (1+\epsilon)Z_{\rm H}^{(K)}(n)$, similarly to the approach in CP, we again first introduce the upper bound to asymptotic form (Eq.~\eqref{eq:HRCS_PS_spt_asymp_app}) as
\begin{align}
    Z_{\rm HRCS}^{(K)}(t) &= Z_{\rm H}^{(K)}(n)\exp\left[K(K-1)\frac{t\left(1-d_B^{-1}\right)+d_B^{-t}-1}{2d_A} + \calO\left(\frac{1}{d_A^2}\right)\right] \le Z_{\rm H}^{(K)}(n)\exp\left[K(K-1)\frac{t(1-d_B^{-1})}{2d_A} + \calO\left(\frac{1}{d_A^2}\right)\right].
\end{align}
We instead require the upper bound above to satisfy the $\epsilon$-approximate PS and leads to
\be
    t \le \tau^{(K)} \coloneqq \frac{d_B}{d_B-1}\frac{2d_A \ln(1+\epsilon)}{K(K-1)},
\ee
which reduces to Eq.~\eqref{eq:HRCS_CP_tau_app} with $K=2$.
%which is the result in Eq.~\eqref{eq:HRCS_spt_tau_K} in Methods.

\subsection{Total variation distance}

In this section, we utilize total variation distance (TVD) to quantify the deviation of the statistics of HRCS sampling distribution from the Haar statistics. Unlike the noisy RCS study~\cite{dalzell2024random, fefferman2024effect} with the ensemble average over the random unitary circuit of RCS only, we instead consider the TVD averaged over both HRCS circuit instances $C$ over the unitaries $U_1,\cdots,U_t$ and the Haar random unitary $U$ in comparison as
\small
\be
{\rm TVD} \coloneqq \frac{1}{2} \E_{\{U_k\}\in \rm Haar} \E_{U\in \rm Haar} \|P_C[\bmz(t), x(t)]-P_U[\bmz(t), x(t)]\|_1,
\label{TVD_def}
\ee
\normalsize
where 
$P_C[\bmz(t), x(t)]$ is the joint distribution of HRCS circuit instance $C$ and $P_U[\bmz(t), x(t)]$ is the corresponding distribution of conventional Haar-random RCS with $n = N_A + t N_B$ qubits.
%probability for the corresponding outcome $\bmz(t), x(t)$ in a Haar random state $U\ket{\bm 0}$ from conventional RCS with the dimension $2^{N_{\rm eff}(t)}$. 
We have following inequality
\be 
{\rm TVD}\le \frac{1}{2}\sqrt{2^{n}Z_{\rm HRCS}(t)}\le \frac{1}{\sqrt{2}}\exp\left[\frac{t-1}{2d_A} + \calO\left(\frac{1}{d_A^2}\right)\right],
\label{TVD_UB}
\ee 
where the second inequality holds in the large-system limit similar to Eq.~\eqref{eq:HRCS_CP_spt_asymp_app}.
This upper bound shows the statistical distance is suppressed exponentially by the dimension of the system and supports the time scale of $\epsilon$-approximate Haar AC in Ineq.~\eqref{eq:max_steps_overview}. 
The proof is as follows.

\begin{proof}
    From the definition in Eq.~\eqref{TVD_def},
we have the upper bound by
\begin{align}
    {\rm TVD}&=\E_{C}\E_U\left[\frac{1}{2} \|P_C[\bm y]-P_U(\bm y)\|_1\right] \nonumber\\
    &\le 2^{n/2}\E_{C}\E_U\left[\frac{1}{2} \|P_C(\bm y)-P_{U}(\bm y)\|_2\right] \\
    &\le \frac{1}{2} \sqrt{2^{n}\E_C\E_U \|P_C(\bm y)-P_{U}(\bm y)\|_2^2},
    \label{TVD_middle_step}
\end{align}
where we denote $\bm y=(\bmz, x)$ for simplicity. $P_C[\bm y]$ is the sampling distribution in HRCS with circuit instance $C$ and $P_U[\bm y]$ is the sampling distribution from Haar random states with Haar unitary $U$.
The expectation value
\begin{align}
    &\E_C \E_U \|P_C(\bm y)-P_U(\bm y)\|_2^2\nonumber\\
    &= \E_C \E_U\sum_{\bm y}P_C(\bm y)^2 + P_{U}(\bm y)^2 -2P_C(\bm y) P_{U}(\bm y)\\
    &= Z_{\rm HRCS}(t)+Z_{\rm H}(n) - 2\sum_{\bm y}\E_{C}\left[P_C(\bm y)\right]\E_U[P_U(\bm y)]\\
    &= Z_{\rm HRCS}(t)+Z_{\rm H}(n) - \frac{2^{n+1}}{4^{n}}\\
    &= Z_{\rm HRCS}(t) - Z_{\rm H}(n)/2^{n}.
\end{align}
Inputting the above into Eq.~\eqref{TVD_middle_step}, we arrive at the upper bound for TVD
\begin{align}
    {\rm TVD} &\le \frac{1}{2}\sqrt{2^{n}Z_{\rm HRCS}(t) - Z_{\rm H}\left(n\right)}\\
    &\le \frac{1}{2}\sqrt{2^{n}Z_{\rm HRCS}(t)}\\
    &\le  \frac{1}{2}\sqrt{2^{n}Z_{\rm H}(n)\exp\left[\frac{t-1}{d_A} + \calO\left(\frac{1}{d_A^2}\right)\right]}\\
    &\le \frac{1}{\sqrt{2}}\exp\left[\frac{t-1}{2d_A} + \calO\left(\frac{1}{d_A^2}\right)\right],
\end{align}
where we utilize Eq.~\eqref{eq:CP_asymp_app} to obtain leading order result.
\end{proof}

\subsection{Reset in HRCS}

In this section, we show that whether adopting the reset strategy does not affect the CP and PS. The intuition is simple. Suppose we perform measurement on bath without following reset, we just need to add a few Pauli-X gate on corresponding bath qubits to flip their states from $\ket{1}$ to $\ket{0}$. Since the circuit unitary is assumed to be Haar random and both CP and PS are defined to be ensemble-averaged over circuit unitaries, the result will not change. We present the detailed proof as follows.

The joint probability is
\begin{align}
    P_U'[\bmz(t), x(t)] &= \tr\left(U \left(\ketbra{\bm 0}{\bm 0}_A \otimes \ketbra{\bm 0}{\bm 0}_{B_0}\otimes_{j=1}^{t-1}\ketbra{z_j}{z_j}_{B_j}\right) U^\dagger \ketbra{x(t)}{x(t)}_A \otimes \ketbra{\bmz(t)}{\bmz(t)}_{\bm B}\right)\\
    &= \tr\left(U \left(\ketbra{\bm 0}{\bm 0}_A \otimes \ketbra{\bm 0}{\bm 0}_{B_0}\otimes_{j=1}^{t-1}X^{z_j}\ketbra{\bm 0}{\bm 0}_{B_j}X^{z_j}\right) U^\dagger \ketbra{x(t)}{x(t)}_A \otimes \ketbra{\bmz(t)}{\bmz(t)}_{\bm B}\right)\\
    &= \tr\left(U_{\bmz} \left(\ketbra{\bm 0}{\bm 0}_A \otimes \ketbra{\bm 0}{\bm 0}_{B_0}\otimes_{j=1}^{t-1}\ketbra{\bm 0}{\bm 0}_{B_j}\right) U_{\bmz}^\dagger \ketbra{x(t)}{x(t)}_A \otimes \ketbra{\bmz(t)}{\bmz(t)}_{\bm B}\right),
\end{align}
where in last line we define $U_{\bmz} = U_t X^{z_{t-1}} U_{t-1} \cdots X^{z_1}U_1$. The ensemble-averaged CP then becomes
\begin{align}
    Z' &= \E_{U \in \rm Haar}\sum_{\bmz(t), x(t)} P_U[\bmz(t), x(t)]^2\\
    &= \sum_{\bmz(t), x(t)} \E_{U \in \rm Haar}\tr\left(U_{\bmz} \left(\ketbra{\bm 0}{\bm 0}_A \otimes \ketbra{\bm 0}{\bm 0}_{B_0}\otimes_{j=1}^{t-1}\ketbra{\bm 0}{\bm 0}_{B_j}\right) U_{\bmz}^\dagger \ketbra{x(t)}{x(t)}_A \otimes \ketbra{\bmz(t)}{\bmz(t)}_{\bm B}\right)\\
    &= \sum_{\bmz(t), x(t)} \E_{V \in \rm Haar}\tr\left(V \left(\ketbra{\bm 0}{\bm 0}_A \otimes \ketbra{\bm 0}{\bm 0}_{B_0}\otimes_{j=1}^{t-1}\ketbra{\bm 0}{\bm 0}_{B_j}\right) V^\dagger \ketbra{x(t)}{x(t)}_A \otimes \ketbra{\bmz(t)}{\bmz(t)}_{\bm B}\right)\\
    &= Z,
\end{align}
where in the second to last line we utilize the left and right invariance property of Haar random unitaries. The proof for PS can be completed in the same way by replacing the exponent from $2$ to $K$, and we don't repeat it here.

% In this section, we study the ensemble-averaged collision probability to unveil the anticoncentration property in HRCS. We will consider the circuit ensemble where each of the unitaries $U_1,\cdots, U_t$ is Haar random. We begin with spatiotemporal sampling, and then discuss about spatial and temporal sampling.

\section{Spacetime tradeoff}
\label{app:tradeoff}

\subsection{Tradeoff for collision probability}

In this section, we show the spacetime tradeoff in HRCS for holding $\epsilon$-approximate AC of Haar random states.  Specifically, we consider the tradeoff between number of temporal steps $t$ and number of physical qubits $N = N_A + N_B$ for generating $n$ bits. For analysis convenience, we alleviate the integer constraints of $N, N_A, N_B, t \in \mathbb{N}^+$ and focus on the asymptotic scaling.
Recall the asymptotic form of CP of HRCS in Eq.~\eqref{eq:CP_asymp_app}, we first derive a simpler form of time constraints for $\epsilon$-approximate AC as
\begin{align}
    Z_{\rm HRCS}(t) &= Z_{\rm H}(n)\exp\left[\frac{t(1-d_B^{-1})+d_B^{-t}-1}{d_A} + \calO\left(\frac{1}{d_A^2}\right)\right]\le Z_{\rm H}(n)\exp\left[\frac{t}{d_A} + \calO\left(\frac{1}{d_A^2}\right)\right] \le (1+\epsilon) Z_{\rm H}(n)\\
    t &\le 2^{N_A} \ln(1+\epsilon). 
\end{align}
By using the condition $N_A = n - N_B t$, we have
\begin{align}
    t &\le \frac{W_0\left(2^n N_B \ln(2) \ln(1+\epsilon)\right)}{N_B \ln(2)} \\
    &= \frac{1}{N_B}\left[\log_2\left(2^n N_B \ln(2) \ln(1+\epsilon)\right) - \log_2\left(\ln\left(2^n N_B \ln(2) \ln(1+\epsilon)\right)\right)\right] + \calO\left(\frac{\ln n}{n}\right)\\
    &= \frac{1}{N_B} \left[n+\log_2(N_B)+\log_2\left(\ln(2)\ln(1+\epsilon)\right)-\log_2\left(n\ln(2) +\log_2(N_B)+\log_2\left(\ln(2)\ln(1+\epsilon)\right)\right)\right] + \calO\left(\frac{\ln n}{n}\right)\\
    &= \frac{1}{N_B} \left[n-\log_2(n)+\log_2(N_B)+\log_2\left(\ln(1+\epsilon)\right)\right] + \calO\left(\frac{\ln n}{n}\right),
\end{align}
where $W_0(z)$ is the principal branch of the solution $We^W = z$, and can be approximated by $W_0(z) = \ln(z) - \ln(\ln(z)) + \calO(\ln \ln z/\ln z)$ in the asymptotic limit of $z \to +\infty$.
We can substitute bath size with $N_B = (n-N)/(t-1)$ and reduce the above inequality to
\begin{align}
    \frac{t(n-N)}{t-1}-\log_2\left(\frac{n-N}{t-1}\right) -n + \log_2(n)- \log_2\left(\ln(1+\epsilon)\right) \le 0,
\end{align}
which can be solved as
\begin{align}
    N &\ge n+\frac{t-1}{t\ln(2)}W_{-1}\left(-\frac{2^{-n} n t \ln(2)}{\ln (1+\epsilon)}\right),\\
    &=n+\frac{t-1}{t\ln(2)}\left[\ln\left(\frac{2^{-n}nt \ln(2)}{\ln(1+\epsilon)}\right) - \ln\left(-\ln\left(\frac{2^{-n}nt \ln(2)}{\ln(1+\epsilon)}\right)\right) + \calO\left(\frac{\ln n}{n}\right)\right]\\
    &= n+\frac{t-1}{t}\left[-n+\log_2(n)+\log_2(t) + \log_2(\ln(2)) -\log_2(\ln(1+\epsilon)) \right.\nonumber\\
    &\qquad \qquad \qquad \left.  - \log_2\left(n \ln(2)-\ln(n)-\ln(t) - \ln(\ln(2)) +\ln(\ln(1+\epsilon))\right) + \calO\left(\frac{\ln n}{n}\right) \right]\\
    &= \frac{n}{t} + \frac{t-1}{t}\log_2\left(\frac{t}{\ln(1+\epsilon)}\right) + \calO\left(\frac{\ln n}{n}\right),
    \label{eq:N_t_app}
\end{align}
where $W_{-1}(\cdot)$ is the lower branch of the lambert W function, and the second line is the leading order expansion of $W_{-1}(z)$ with $z \to 0^{-}$. When $t/n \ll 1$, Eq.~\eqref{eq:N_t_app} exhibits the scaling of $N \sim n/t$. We can further solve $\diff_t N = 0$ to identify the theoretical optimal time to minimize $N$ as
\begin{align}
    t^\star &= W_0\left(e \ln(1+\epsilon) 2^n\right)\\
    &= n \ln(2) + \ln(\ln(1+\epsilon))+1 - \ln\left[n \ln(2) + \ln(\ln(1+\epsilon))+1\right] + \calO\left(\frac{\ln n}{n}\right)\\
    &= \ln(2) n - \ln(n) + \ln(\log_2(1+\epsilon)) + 1 + \calO\left(\frac{\ln n}{n}\right) \label{eq:t_opt_supp},
\end{align}
which scales as $\ln(2) n$ in the leading order.
With the optimal time $t^\star$ in Eq.~\eqref{eq:t_opt_supp}, we can solve the minimum number of qubits as
\begin{align}
    N_{\rm min} &= \frac{n}{t^\star} + \frac{t^\star-1}{t^\star}\log_2\left(\frac{t^\star}{\ln(1+\epsilon)}\right)\\
    &= \frac{n}{\ln(2) n - \ln(n) + a} + \frac{\ln(2) n - \ln(n) + a-1}{\ln(2) n - \ln(n) + a}\log_2\left(\frac{\ln(2) n - \ln(n) + a}{\ln(1+\epsilon)}\right) \\
    &= \log_2\left(\frac{n}{\log_2(1+\epsilon)}\right) + \frac{1}{\ln(2)} +\calO\left(\frac{\ln n}{n}\right),
\end{align}
where in the second line we denote $a \coloneqq \ln(\log_2(1+\epsilon))+1$ for simplicity.
The minimum bath and system size are thus
\begin{align}
    N_{B, \rm min} &= \frac{1}{\ln(2)} + \calO\left(\frac{\ln n}{n}\right),\\
    N_{A,\rm min} &=  \log_2\left(\frac{n}{\log_2(1+\epsilon)}\right) +\calO\left(\frac{\ln n}{n}\right).
\end{align}

\subsection{Tradeoff for $K$-th power sum}

Similarly, we can extend the above analysis to $K$-th order power sums. We begin with the asymptotic form of $Z^{(K)}_{\rm HRCS}(t)$ in Eq.~\eqref{eq:HRCS_PS_spt_asymp} for $\epsilon$-approximate $K$-th PS as
\begin{align}
    Z_{\rm HRCS}^{(K)}(t) &\le Z_{\rm H}^{(K)}(n) \exp\left[K(K-1)\frac{t}{2d_A}\right] \le (1+\epsilon) Z_{\rm H}^{(K)}(n)\\
    t &\le \frac{2^{N_A-1} \ln(1+\epsilon)}{K(K+1)}.
\end{align}
Using the condition $N_A = n - N_B t$, we have
\begin{align}
    t & \le \frac{1}{N_B \ln(2)}W_0\left(\frac{2^{n+1}N_B \ln(2)\ln(1+\epsilon)}{K(K-1)}\right)\\
    &= \frac{1}{N_B}\left[n+1 + \log_2(N_B) + \log_2(\ln(1+\epsilon))-\log_2(K(K-1)) -\log_2(n)\right] + \calO\left(\frac{\ln n}{n}\right),
\end{align}
where the second line holds when $K\ll 2^{n/2}$. On the other hand, for $K \gtrsim 2^{n/2}$, then we can only have $t$ to be a constant. In particular, for matching $K=2^n$, HRCS reduces to conventional RCS. 

As a consequence, we focus on the approximation to relative lower orders in the following. By further substituting the bath with $N_B = (n-N)/(t-1)$, we can solve the necessary number of physical qubits as
\begin{align}
    N^{(K)} & \ge n + \frac{t-1}{t\ln(2)} W_{-1}\left( -\frac{2^{-n-1}n t K(K-1)\ln(2) }{\ln (1+\epsilon)}\right)  \\
    &= \frac{n}{t} + \frac{t-1}{t}\log_2\left(\frac{K(K-1)t}{2\ln(1+\epsilon)}\right) + \calO\left(\frac{\ln n}{n}\right).
\end{align}
The theoretical optimal time for minimizing $N$ is thus
\begin{align}
    t^{(K)}{}^\star &= W_0\left(\frac{e 2^n 2\ln(1+\epsilon)}{K(K-1)}\right) = \ln(2) n - \ln(n) + \ln\left(\frac{2\log_2(1+\epsilon)}{K(K-1)}\right) + 1 + \calO\left(\frac{\ln n}{n}\right) \label{eq:tK_opt_supp}.
\end{align}
We can also solve the minimum number of qubits as
\begin{align}
    N_{\rm min}^{(K)} &= \frac{n}{t^{(K)}{}^\star} + \frac{t^{(K)}{}^\star-1}{t^{(K)}{}^\star}\log_2\left(\frac{K(K-1)t^{(K)}{}^\star}{2\ln(1+\epsilon)}\right)= \log_2\left(\frac{K(K-1)n}{2\log_2(1+\epsilon)}\right) + \frac{1}{\ln 2} + \calO\left(\frac{\ln n}{n}\right),
\end{align}
where in the second line we denote $a \coloneqq \ln(2\log_2(1+\epsilon)/K(K-1))$ for simplicity. Compare to the result of collision probability, the minimum number of qubits also scales as logarithmically with number of bits but with an addition of order of power sum as $\sim \log_2 n + 2\log_2 K$.
The minimum bath and system size are thus
\begin{align}
    N_{B, \rm min}^{(K)} &= \frac{1}{\ln(2)} + \calO\left(\frac{\ln n}{n}\right),\\
    N_{A,\rm min}^{(K)} &=  \log_2\left(\frac{K(K-1)n}{2\log_2(1+\epsilon)}\right) +\calO\left(\frac{\ln n}{n}\right).
\end{align}

\section{Collision probability for HRCS with local brickwork circuits}
\label{app:brickwork}

In this section, we derive the collision probability of HRCS with unitary brickwork circuit each step consisting of local 2-qubit gate on nearest neighbor. For simplicity, we consider open boundary conditions and even number of physical qubits and layer of gates in each step, and our results also hold for other cases.
We restate the lemma here for convenience.
\begin{lemma}
\label{lemma:HRCS_CP_app}
(Lemma~\ref{lemma_HRCS_CP} in the main text)
    For holographic random circuit sampling with each unitary $U_t$ to be $L$ layers brickwork circuit where every $2$-qubit unitary satisfies $2$-design, the ensemble-averaged collision probability at step $t \ge 1$ is upper bounded by
    \be
        Z(t) \le \left[(2t-1)\exp\left(\frac{N}{2} \left(\frac{4}{5}\right)^L\right) - (2t-2)\right]Z_{\rm HRCS}(t)
        \label{eq:HRCS_CP_brickwork_app}.
    \ee
\end{lemma}

\begin{proof}
Our proof follows the same technique as conventional RCS results~\cite{barak2020spoofing, dalzell2022random}.

The ensemble average of collision probability of brickwork HRCS is
\be
    Z(t) = \E\sum_{\bmz, x}P[\bmz, x]^2 = 2^n \E P[\bm 0, 0]^2 = 2^n \E \tr(U^{\otimes 2} \ketbra{0}{0}_{A\bm B}^{\otimes 2} U^\dagger{}^{\otimes 2} \ketbra{0}{0}_{A\bm B}^{\otimes 2}),
\ee
where $\E$ represents the ensemble average over every 2-qubit gate. According to the twirling of 2-copy Haar unitaries in Eq.~\eqref{eq:haar_twirling_K2}, we can effectively rewrite the above average collision probability as the partition function of a hexagon spin lattice, where each effective spin can be either identity or swap operator.

\be
Z(t) = 2^n \sum_{\scalebox{0.5}{\tikzfig{tikz/perm-red}, \tikzfig{tikz/perm-blue}} \in S_2}\scalebox{0.6}{\tikzfig{tikz/1d_honeycomb}},
\ee
where $S_2 \coloneqq \{e, \tau\}$ is the permutation group of two elements. Note that we omit the inner product between permutation operators and computational basis which is always $1$ due to permutation invariance of $\ketbra{0}{0}_{A\bm B}^{\otimes 2}$.
In the above lattice diagram we take an example of $t=2$ steps and $L=4$ layers in each step in a system of $N_A = 6$ and $N_B = 2$.

Next, we implement the reduction by summing over all blue spins, developed in Ref.~\cite{hunter2019unitary}. We first quote the results here, which can be also be evaluated from the twirling result in Eq.~\eqref{eq:haar_twirling_K2}.
\begin{align}
    J_{\sigma_1\sigma_2\sigma_3} &= \sum_{\scalebox{0.5}{\tikzfig{tikz/perm-blue}}\in S_2} \scalebox{0.8}{\tikzfig{tikz/plauette}} = \scalebox{0.8}{\tikzfig{tikz/plauette_reduce}} = \sum_{\pi \in S_2} {\rm Wg}(\sigma_1^{-1}\pi) 2^{|\pi^{-1}\sigma_2| + |\pi^{-1}\sigma_3|} \\
    &= \begin{cases}
        1 & \sigma_1=\sigma_2=\sigma_3, \\
        2/5 & \sigma_1=\sigma_2, \sigma_1 \neq \sigma_3,\\
        2/5 & \sigma_1=\sigma_3, \sigma_1 \neq \sigma_2,\\
        0 & {\rm otherwise}.
    \end{cases}.
    \label{eq:plauette}
\end{align}
Similarly, we also have a reduction over the boundary involving two red spins and a central connected blue spin as
\be
    G_{\sigma_1\sigma_2} = \sum_{\scalebox{0.5}{\tikzfig{tikz/perm-blue}}\in S_2} \scalebox{0.8}{\tikzfig{tikz/plauette_boundary}} = \scalebox{0.8}{\tikzfig{tikz/plauette_boundary_2}} = \sum_{\pi \in S_2} {\rm Wg}(\sigma_1^{-1}\pi) 2^{|\pi^{-1}\sigma_2|} \\
    = \begin{cases}
        7/30 & \sigma_1 = \sigma_2,\\
        1/15 & \sigma_1 \neq \sigma_2.
    \end{cases}.
    \label{eq:CP_brickwork_hex}
\ee
Utilizing the above two identities, we can reduce the hexagon lattice to a triangular spin lattice as
\be
    Z(t) = 2^n \left(\frac{1}{20}\right)^{N/2-1} \sum_{\scalebox{0.5}{\tikzfig{tikz/perm-red}}\in S_2} \scalebox{0.6}{\tikzfig{tikz/plauette_lattice}}
    \label{eq:CP_brickwork_tri},
\ee
where the constant factor $1/20$ follows from summation of the last column of blue spins in Eq.~\eqref{eq:CP_brickwork_hex}.
Since the domain wall pattern has one-to-one correspondence to the spin configurations, we can rewrite the collision probability as
\be
    Z(t) = 2^n \left(\frac{1}{20}\right)^{N/2-1} \sum_{{\rm dw} \in \scalebox{0.1}{\tikzfig{tikz/plauette_lattice}}} w({\rm dw}),
\ee
where $w(\rm dw)$ is the weight of domain wall. According to the Eq.~\eqref{eq:plauette}, there are three types of domain wall that are allowed in the triangular lattice in Eq.~\eqref{eq:CP_brickwork_tri}, listed as below.
\begin{itemize}
    \item Type I: A path starting from the left boundary of a unitary and ending at the right boundary of a unitary.
    \item Type II: A path starting from and ending at the right boundary of the same unitary.
    \item Type III: A path starting from the right boundary of a unitary and ending at the right boundary of another following unitary.
\end{itemize}
Note that Type III domain wall only appears in HRCS while others also appear in conventional RCS. Taking the three types of domain wall into account, we can upper bound the summation of domain wall weights as
\begin{align}
     \sum_{{\rm dw} \in \scalebox{0.1}{\tikzfig{tikz/plauette_lattice}}} w({\rm dw}) \le \left(1 + \sum_{{\rm Type\:I\: dw}} w({\rm dw}) + \sum_{{\rm Type\:III}} w({\rm dw})\right)\left(1 + \sum_{{\rm Type\:II\:dw}} w({\rm dw})\right).
\end{align}

For Type I domain wall, as the weight scales as $(2/5)^\ell$ where $\ell$ is the total number of gate layers that the domain wall passes, the maximum weight for each domain wall is upper bounded by $(2/5)^L$ where $L$ is the number of gate layers in each step. The total weight of Type I domain wall is
\begin{align}
    \sum_{\text{Type I}} w(\rm dw) &\le \sum_{\text{Type I:}\ell({\rm dw})=L} w(\rm dw)\\
    &= t\sum_{m=1}^{N/2} \binom{N/2}{m} \left(\frac{2}{5}\right)^{L m} 2^{Lm}\\
    &= t \sum_{m=1}^{N/2} \binom{N/2}{m} \left(\frac{4}{5}\right)^{L m} \\
    &= t \left[\left(1 + \left(\frac{4}{5}\right)^L\right)^{N/2} - 1\right]\\
    &\le t\left[\exp\left(\frac{N}{2} \left(\frac{4}{5}\right)^L\right) - 1\right].
    \label{eq:weight_1}
\end{align}
For Type III domain wall, the path length is strictly larger than Type I domain wall when passing through the same number of steps, and thus we can also bound it by
\be
    \sum_{\text{Type III}} w({\rm dw}) \le (t-1)\left[\exp\left(\frac{N}{2} \left(\frac{4}{5}\right)^L\right) - 1\right].
    \label{eq:weight_3}
\ee
For Type II domain wall, we instead consider the infinite layer limit of $L\to \infty$, and only Type II domain walls can survive and contribute to the collision probability. We can thus bound it as
\be
    1 + \sum_{\text{Type II dw}} w({\rm dw}) \le \frac{20^{N/2-1}}{2^n} Z_{\rm HRCS}(t),
    \label{eq:weight_2}
\ee
where $Z_{\rm HRCS}(t)$ is the collision probability of HRCS with Haar random unitaries in Eq.~\eqref{eq:HRCS_CP_spt_app}. Finally, we have the bound for the brickwork HRCS collision probability as
\begin{align}
    Z(t) &\le 2^n \left(\frac{1}{20}\right)^{N/2-1} \left[1+t\exp\left(\frac{N}{2} \left(\frac{4}{5}\right)^L\right) - t + (t-1)\exp\left(\frac{N}{2} \left(\frac{4}{5}\right)^L\right) - (t-1)\right]\frac{20^{N/2-1}}{2^n} Z_{\rm HRCS}(t)\\
    &= \left[(2t-1)\exp\left(\frac{N}{2} \left(\frac{4}{5}\right)^L\right) - (2t-2)\right]Z_{\rm HRCS}(t),
\end{align}
which is the result in Lemma~\ref{lemma:HRCS_CP_app}.
A simple sanity check of $t=1$ recovers the result of conventional RCS in Ref.~\cite{barak2020spoofing}.
\end{proof}

At the end of this section, we introduce another estimation for the circuit depth utilizing the shallow random unitaries. We begin with the analysis on CP. Suppose each unitary $U$ in HRCS is independently sampled from ensemble $\calE$ to be $\varepsilon$-approximate $2$-design, we can formally define it by the relative error
\be
    (1-\varepsilon)\Phi_{\rm H}(\cdot) \preceq \Phi_\calE (\cdot) \preceq (1+\varepsilon)\Phi_{\rm H}(\cdot),
\ee
where $\Phi(\cdot):= \E_{U\sim \calE} U^{\otimes 2} (\cdot) U^\dagger{}^{\otimes 2}$ is the two-fold twirling channel, and $\Phi_{\rm H}(\cdot)$ is the $2$-fold twirling with Haar random unitaries. Here $\Phi^\prime \preceq \Phi$ denotes $\Phi-\Phi'$ is a completely-positive map. Then we can upper bound the ensemble-averaged CP as 
\begin{align}
    Z(t) &= 2^n \E \tr(U_t \cdots U_1 \ketbra{0}{0}_{A\bm B} U_1^\dagger \cdots U_t^\dagger \ketbra{0}{0}_{A \bm B})^2\\
    &= 2^n \tr(\Phi_t \circ \cdots \Phi_1(|\ketbra{0}{0}_{A\bm B}^{\otimes 2})\ketbra{0}{0}_{A \bm B}^{\otimes 2})\\
    &\le (1+\varepsilon)^t Z_{\rm HRCS}(t)\\
    &\le \exp\left[t\left(\frac{1}{d_A} + \varepsilon\right)\right] Z_{\rm H}(n).
\end{align}
In order to have $\epsilon$-approximate Haar value CP, we require the relative error of unitary $2$-design in each step to be
\be
    \varepsilon \le \ln(1+\epsilon)/t - 1/d_A.
\ee
We next utilize the construction of shallow random unitary in Ref.~\cite{schuster2025random}, where for $\varepsilon$-approximate $K$-design on $N$ qubits, it requires the unitary depth to be $L\sim K{\rm poly}(\ln K)\ln(N/\varepsilon)$. For minimum number of physical qubits, we require the relative error to be $\varepsilon \sim 1/n$ and thus $L^\star \sim \ln(n)$. Similarly, for $K$-th order power sums, suppose every unitary is $\varepsilon$-approximate $K$-design, we have $Z^{(K)}(t) \le (1+\varepsilon)^t Z_{\rm HRCS}^{(K)}(t) \le \exp\left[t\varepsilon + tK(K-1)/(2d_A)\right]Z_{\rm H}^{(K)}(n)$. The relative error is thus required to be $\varepsilon \le \ln(1+\epsilon)/t - K(K-1)/(2d_A)$, which is similar to the case of CP in spite of a reduction due to higher order $K$. Therefore, the unitary depth can still scale logarithmically with number of classical bits $\sim \ln(n)$ but with polynomial amplification factor of ${\rm poly}(\ln K)$.

\section{Marginal sampling in HRCS}
\label{app:marginal_sampling}

In this section, we study the spatial and temporal marginal sampling in HRCS, referring to the distribution of $P[x(t)]$ and $P[\bmz(t)]$, and we show that both of them are nearly uniform in terms of CP.

The main results are the following, with detailed proofs in later sections.
\begin{theorem}
\label{HRCS_CP_sp_te}
For holographic random circuit sampling with each unitary $U_t$ satisfying $2$-design, the ensemble-averaged collision probability for marginal spatial sampling with $P[x(t)]$ 
    \begin{align}
        Z_{\rm Sp}(t) 
        &= \frac{d_A d_B+1}{d_A^2 d_B+1} + \frac{(d_A-1) (d_A d_B-1)}{(d_A+1) \left(d_A^2 d_B+1\right)} \left(\frac{\left(d_A^2-1\right) d_B}{d_A^2 d_B^2-1}\right)^t  \label{eq:HRCS_CP_sp} \\
        &= Z_{\rm uni}\left(N_A\right)\left(1 + \frac{1}{d_A d_B} + \frac{1}{d_B^t}-\frac{2+d_B^{-1}}{d_Ad_B^t}\right) +  \calO\left(\frac{1}{d_A^3}\right), \label{eq:HRCS_CP_sp_asymp}
    \end{align}
    where $d_A, d_B = 2^{N_A}, 2^{N_B}$ are the Hilbert space dimensions of system and bath, and the second line holds for large system limit $d_A \gg 1$.
    For marginal temporal sampling with $P[\bmz(t)]$, we have
    \begin{align}
        Z_{\rm Te}(t) &= \left(\frac{d_A+1}{d_A d_B + 1}\right)^t      \label{eq:HRCS_CP_te} \\
        &= Z_{\rm uni}\left(tN_B\right)\exp\left[\frac{t(d_B-1)}{d_A d_B} + \calO\left(\frac{1}{d_A^2}\right)\right].
        \label{eq:HRCS_CP_te_asymp}
    \end{align}
\end{theorem}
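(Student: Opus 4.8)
The plan is to reuse the doubled--Hilbert--space transfer--matrix machinery developed for Theorem~\ref{HRCS_CP_spt}, changing only the boundary conditions to reflect marginalization. In the expanded--bath picture of Fig.~\ref{fig:proof}, the spatial marginal $P_U[x(t)] = \tr\!\bigl(U\state{\bm 0}_{A\bm B}U^\dagger (\state{x(t)}_A \otimes \bI_{\bm B})\bigr)$ traces out every bath register, whereas the temporal marginal $P_U[\bmz(t)] = \tr\!\bigl(U\state{\bm 0}_{A\bm B}U^\dagger(\bI_A \otimes \state{\bmz(t)}_{\bm B})\bigr)$ traces out the final system register while retaining all bath projectors. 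In both cases I would square, average with the $K=2$ twirl \eqref{eq:haar_twirling_K2}, and propagate a two--component vector in $\mathrm{span}\{\kett{\hat e}_A,\kett{\hat\tau}_A\}$ step by step exactly as in \eqref{eq:pr_spt_square_eq0}--\eqref{eq:M_def}. The only new ingredient is the per--step $2\times2$ propagator, which is fixed by whether each bath register is closed by a projector (overlaps $\bbrakett{z}{\hat e}=\bbrakett{z}{\hat\tau}=1$, as before) or by the identity (overlaps $\tr\hat e_B = d_B^2$, $\tr\hat\tau_B = d_B$).

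For the temporal marginal the bath is still measured, so the propagator is literally the matrix $M$ of \eqref{eq:M_def}; only the terminal contraction changes. I would use that $\bigl(\begin{smallmatrix}1\\1\end{smallmatrix}\bigr)$ is a right eigenvector of $M$ with eigenvalue $\lambda_+ = (d_A+1)/[d_B(d_Ad_B+1)]$, so $M^{t-1}$ acts by a pure scalar, and replace the final sum $\sum_{x(t)}$ (which gave the covector $(d_A,d_A)$ in the joint case) by the trace over $A$, i.e.\ the covector $(\tr\hat e_A,\tr\hat\tau_A)=(d_A^2,d_A)$. Collecting the $d_B^t$ from $\sum_{\bmz(t)}$ and the first--step factor $1/[d(d+1)]$ then gives $Z_{\rm Te}(t) = d_B^t\,(d_A^2,d_A)\,M^{t-1}\,\tfrac{1}{d(d+1)}\bigl(\begin{smallmatrix}1\\1\end{smallmatrix}\bigr)$, which collapses to $[(d_A+1)/(d_Ad_B+1)]^t$, i.e.\ Eq.~\eqref{eq:HRCS_CP_te}. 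The asymptotic form \eqref{eq:HRCS_CP_te_asymp} then follows by writing $(d_A+1)/(d_Ad_B+1)=d_B^{-1}(1+1/d_A)(1+1/(d_Ad_B))^{-1}$ and expanding the logarithm to order $1/d_A$.

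The spatial marginal is the genuinely new computation, and the main obstacle. Here every bath is traced, so the iteration is a repeated--interaction channel on $A$ and the relevant propagator is a new matrix $M_{\rm Sp} = (d_A^2d_B^2-1)^{-1}\bigl(\begin{smallmatrix} d_B(d_A^2d_B-1) & d_Ad_B(d_B-1)\\ d_A(d_B-1) & d_A^2d_B-1\end{smallmatrix}\bigr)$, obtained by following \eqref{eq:pr_spt_square_eq1}--\eqref{eq:pr_spt_square_eq2} but closing the bath with the identity instead of $\bbra{z}$. Unlike $M$, this matrix is \emph{not} symmetric, so I would diagonalize it through its (distinct) left and right eigenvectors. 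Its eigenvalues are $\lambda_+=1$ and $\lambda_-=d_B(d_A^2-1)/(d_A^2d_B^2-1)$, most easily verified from $\det M_{\rm Sp}=\lambda_-$ together with $\tr M_{\rm Sp}=1+\lambda_-$; the subleading eigenvalue $\lambda_-$ is precisely the geometric ratio appearing in Eq.~\eqref{eq:HRCS_CP_sp}. Propagating the first--step vector $v_1 = \tfrac{1}{d(d+1)}(d_B^2\kett{\hat e}_A + d_B\kett{\hat\tau}_A)$ through $M_{\rm Sp}^{t-1}$ and contracting with the summed--measurement covector $(d_A,d_A)$, i.e.\ $Z_{\rm Sp}(t)=(d_A,d_A)\,M_{\rm Sp}^{t-1}\,v_1$, yields a stationary ($\lambda_+^t=1$) piece plus a decaying ($\lambda_-^t$) piece; the former should reproduce the first term $(d_Ad_B+1)/(d_A^2d_B+1)$ of Eq.~\eqref{eq:HRCS_CP_sp} and the latter the second.

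The delicate part is the bookkeeping of the eigenvector amplitudes so that both prefactors come out exactly, and then carrying the $1/d_A$ expansion to the order $\calO(1/d_A^3)$ quoted in Eq.~\eqref{eq:HRCS_CP_sp_asymp}, where one must retain cross terms between $\lambda_-^t$ and its $1/d_A$--corrected amplitude. As a consistency check I would confirm that at $t=1$ the assembly collapses to $(d_B+1)/(d_Ad_B+1)$, which is exactly the Haar subsystem collision probability $Z_{{\rm H},A}$ obtained from Corollary~\ref{corollary:haar_marginal} and Eq.~\eqref{eq:CP_haar_sub} with the roles of $A$ and $B$ interchanged; this pins down any stray normalization and fixes the overall structure before the asymptotic expansion.
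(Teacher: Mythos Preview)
Your proposal is correct and follows essentially the same route as the paper: both split the theorem into its spatial and temporal halves, work in the doubled--Hilbert--space two--component representation $\{\kett{\hat e}_A,\kett{\hat\tau}_A\}$, and only change the bath/system boundary contractions relative to the joint case. Your matrix $M_{\rm Sp}$ coincides with the paper's $M_2$ of Eq.~\eqref{eq:M2_def}, and your first--step vector $v_1$ and terminal covectors reproduce the paper's verbatim; your explicit eigenvalue identification $\lambda_+=1$, $\lambda_-=d_B(d_A^2-1)/(d_A^2d_B^2-1)$ and the eigenvector shortcut $M\bigl(\begin{smallmatrix}1\\1\end{smallmatrix}\bigr)=\lambda_+\bigl(\begin{smallmatrix}1\\1\end{smallmatrix}\bigr)$ for the temporal piece are mild streamlinings of what the paper carries out by direct matrix--power evaluation.
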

Firstly, a simple sanity check shows that at $t=1$, $Z_{\rm Sp}(t=1) = (1+d_B)/(1+d_A d_B)$ and $Z_{\rm Te}(t=1)=(1+d_A)/(1+d_A d_B)$, which equals the CP of sampling only on the system or bath of Haar random states (see proof in~\ref{app:preliminary}). 
With increasing number of temporal steps $t$, the leading order approximation in Eq.~\eqref{eq:HRCS_CP_sp_asymp} indicates an exponential convergence of $2^{-tN_B}$ toward the uniform distribution with a finite-size correction of $1/(d_A d_B)$. The scaling of convergence time $\sim 1/N_B$ is identical to the convergence time for first-order state design in holographic deep thermalization~\cite{zhang2025holographic} and quantum information lifetime in unmonitored dynamics~\cite{zhang2025scaling}. 
On the other hand, the temporal sampling in HRCS is also close to uniform distribution on $tN_B$ qubits but with a dynamical deviation of $\exp(t/d_A)$ shown by the asymptotic result of Eq.~\eqref{eq:HRCS_CP_te_asymp}. Therefore we expect the approximate to uniform distribution can hold till exponential temporal steps. The CP of temporal sampling $Z_{\rm Te}(t)$ may also serve as a benchmark for mid-circuit measurements. 

We can also formally introduce $\epsilon$-approximate uniform distribution for both spatial and temporal sampling.
For $\epsilon$-approximate uniform CP in spatial sampling as $Z_{\rm Sp}(t)\le (1+\epsilon)Z_{\rm uni}(N_A)$, we require the temporal steps to be at least 
\be
    t \ge \tau_{\rm Sp} \coloneqq \log_2(1/\epsilon)/N_B + \calO\left(2^{-N_A}\right).
    \label{eq:tau_sp}
\ee
For temporal sampling to satisfy $\epsilon$-approximate uniform CP, the maximum temporal steps is
\be
    t \le \tau_{\rm Te} \coloneqq \frac{d_A d_B}{d_B -1}\ln(1+\epsilon) \simeq 2^{N_A}\ln(1+\epsilon).
    \label{eq:tau_te}
\ee
The approximation to uniform CP within an exponential number of temporal steps provides evidence to support the generation of random states with nearly uniform probability in holographic deep thermalization~\cite{zhang2025holographic}. Meanwhile, beyond the threshold time step $\tau_{\rm Te}$, the CP of temporal sampling quickly becomes lack of AC.

\subsection{Proof of Theorem~\ref{HRCS_CP_sp_te}}

We will divide Theorem~\ref{HRCS_CP_sp_te} into two parts, including the spatial part (Theorem~\ref{HRCS_CP_sp_app}) and and temporal part (Theorem~\ref{HRCS_CP_te_app}).

The proofs still rely on the equivalent expansion though with different boundary conditions (see Supplementary Fig.~\ref{fig:proof}(b) and (c)).
\begin{theorem}
\label{HRCS_CP_sp_app}
(Theorem~\ref{HRCS_CP_sp_te}, spatial sampling part)
For holographic random circuit sampling with each unitary $U_t$ satisfying $2$-design, the ensemble-averaged collision probability for marginal spatial sampling at step $t\ge 1$  is
    \begin{align}
        Z_{\rm Sp}(t) &= \frac{d_A d_B+1}{d_A^2 d_B+1} + \frac{(d_A-1) (d_A d_B-1)}{(d_A+1) \left(d_A^2 d_B+1\right)} \left(\frac{\left(d_A^2-1\right) d_B}{d_A^2 d_B^2-1}\right)^t  \label{eq:HRCS_CP_sp_app} \\
        &= Z_{\rm uni}\left(N_A\right)\left(1 + \frac{1}{d_A d_B} + \frac{1}{d_B^t}-\frac{2+d_B^{-1}}{d_Ad_B^t}\right) +  \calO\left(\frac{1}{d_A^3}\right), \label{eq:HRCS_CP_sp_asymp_app}
    \end{align}
    where $d_A, d_B = 2^{N_A}, 2^{N_B}$ are the Hilbert space dimension of system and bath, and the second line holds for large system limit $d_A \gg 1$.
\end{theorem}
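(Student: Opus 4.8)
The plan is to reuse the doubled-Hilbert-space transfer-matrix method of Theorem~\ref{HRCS_CP_spt_app}, with the single structural change that the marginal spatial distribution amounts to tracing out every bath register. Concretely, $P_U[x(t)] = \sum_{\bmz(t)} P_U[\bmz(t),x(t)] = \tr\!\left(U\ketbra{\bm 0}{\bm 0}_{A\bm B}U^\dagger\, \ketbra{x(t)}{x(t)}_A\otimes \bI_{\bm B}\right)$, using $\sum_{z_k}\ketbra{z_k}{z_k}_{B_k}=\bI_{B_k}$ (cf.\ the schematic in Fig.~\ref{fig:proof}b). Squaring and averaging, I would write $\E_{U}\!\left[P_U[x(t)]^2\right] = \bbra{x(t)}_A\bbra{\bI}_{\bm B}\,\E_U\!\left[U^{\otimes 2}\otimes U^*{}^{\otimes 2}\right]\,\kett{\bm 0}_{A\bm B}$ and then twirl $U_1,\dots,U_t$ one at a time using the $K=2$ identity Eq.~\eqref{eq:haar_twirling_K2}, so that only a $2$-design is required. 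The one essential departure from Theorem~\ref{HRCS_CP_spt_app} is that each bath is now contracted against the identity rather than a fixed outcome, so the relevant overlaps become $\bbrakett{\bI}{\hat e}_{B_k}=d_B^2$ and $\bbrakett{\bI}{\hat\tau}_{B_k}=d_B$ in place of the value $1$.

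Working on the two-dimensional span of $\{\kett{\hat e}_A,\kett{\hat\tau}_A\}$, I would show that the first step ($U_1$ twirl followed by the identity contraction on $B_1$) produces the initial vector $v_1 = \tfrac{d_B}{d(d+1)}\,(d_B,\,1)^{T}$ with $d=d_Ad_B$, while each later step ($k=2,\dots,t$) acts by the single transfer matrix
\begin{equation}
\tilde M = \frac{1}{d^2-1}\begin{pmatrix} d_B(d_A^2 d_B-1) & d_A d_B(d_B-1)\\ d_A(d_B-1) & d_A^2 d_B-1\end{pmatrix}.
\end{equation}
Contracting the final system register against $\ketbra{x(t)}{x(t)}_A$ (overlaps $\bbrakett{x}{\hat\sigma}_A=1$) and summing over the $d_A$ outcomes then collapses the whole computation to $Z_{\rm Sp}(t) = d_A\,(1,1)\,\tilde M^{\,t-1}\,v_1$. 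The $t=1$ case reproduces $(d_B+1)/(d_Ad_B+1)$, matching Corollary~\ref{corollary:haar_marginal} with $A$ and $B$ interchanged, which is a good consistency check before proceeding.

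The crux is diagonalizing $\tilde M$. I would compute its trace and determinant and observe that the discriminant of the characteristic polynomial collapses to the perfect square $(d_B-1)^2(d_A^2 d_B+1)^2$, giving the clean eigenvalues $\mu_+=1$ and $\mu_-=(d_A^2-1)d_B/(d_A^2 d_B^2-1)$ — precisely the geometric base of Eq.~\eqref{eq:HRCS_CP_sp_app}, and $\det\tilde M=\mu_+\mu_-=\mu_-$ as a further check. The eigenvectors are $e_+\propto(d_Ad_B,1)^T$ and $e_-\propto(1,-d_A)^T$; decomposing $(d_B,1)^T = p\,e_+ + q\,e_-$ with $p=(1+d_Ad_B)/(d_A^2 d_B+1)$ and $q=-d_B(d_A-1)/(d_A^2 d_B+1)$ and assembling $d_A(1,1)\tilde M^{t-1}v_1$ yields the constant term $(d_Ad_B+1)/(d_A^2 d_B+1)$ from $\mu_+^{t-1}=1$, plus a term $\propto\mu_-^{t-1}$ which, after absorbing one factor of $\mu_-$ into the coefficient, reproduces the stated $\mu_-^{t}$ form with prefactor $(d_A-1)(d_Ad_B-1)/[(d_A+1)(d_A^2 d_B+1)]$ exactly.

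Finally, Eq.~\eqref{eq:HRCS_CP_sp_asymp_app} follows by routine Taylor expansion in $1/d_A$: the constant piece is $\tfrac1{d_A}(1+\tfrac1{d_Ad_B})+\calO(d_A^{-3})$, while $\mu_-^{t}=d_B^{-t}\!\left(1+\calO(d_A^{-2})\right)$ and the prefactor expands as $\tfrac1{d_A}\!\left(1-\tfrac{2+d_B^{-1}}{d_A}\right)+\calO(d_A^{-3})$, recombining into $Z_{\rm uni}(N_A)\big(1+\tfrac1{d_Ad_B}+\tfrac1{d_B^t}-\tfrac{2+d_B^{-1}}{d_Ad_B^t}\big)$. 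I expect the main obstacle to be purely bookkeeping rather than conceptual: propagating the $d_A$ and $d_B$ factors correctly through the \emph{non-symmetric} transfer matrix (the identity contraction breaks the $\hat e\leftrightarrow\hat\tau$ symmetry that $M$ enjoyed in Theorem~\ref{HRCS_CP_spt_app}), and verifying the perfect-square collapse of the discriminant that makes both eigenvalues — and hence the two-term closed form — come out clean.
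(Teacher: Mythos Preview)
Your proposal is correct and follows essentially the same route as the paper's proof: the same doubled-Hilbert-space setup with $\bI_{\bm B}$ in place of the bath projectors, the same initial vector from the $U_1$ twirl, and the same $2\times2$ transfer matrix (the paper calls it $M_2$). The only cosmetic difference is that you explicitly diagonalize $\tilde M$ to obtain the closed form, whereas the paper simply states the result of $(\bra{\underline{0}}+\bra{\underline{1}})M_2^{t-1}(\cdots)$; your eigenvalue analysis ($\mu_+=1$, $\mu_-=(d_A^2-1)d_B/(d_A^2d_B^2-1)$) is correct and makes transparent why the two-term structure of Eq.~\eqref{eq:HRCS_CP_sp_app} emerges.
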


\begin{proof}
    Following the equivalent expansion in Supplementary Fig.~\ref{fig:proof}b, we first write out the marginal probability of sampling $x(t)$ in the output state as
    \be
        P_U[x(t)] = \tr\left(U\ketbra{\bm 0}{\bm 0}_{A\bm B} U^\dagger \ketbra{x(t)}{x(t)}_{A} \otimes \bI_{\bm B}\right).
        \label{eq:pr_sp}
    \ee
    The ensemble average of the square of the marginal probability becomes
    \begin{align}
        \E_{U\in \rm Haar}\left[p_U[x(t)]^2\right] &= \E_{U\in \rm Haar} \tr\left(U^{\otimes 2}\ketbra{\bm 0}{\bm 0}_{A\bm B}^{\otimes 2} U^\dagger{}^{\otimes 2} \ketbra{x(t)}{x(t)}_{A}^{\otimes 2} \otimes \bI_{\bm B}^{\otimes 2}\right)\\
        &= \bbra{x(t)}_{A}\bbra{\hat{e}}_{\bm B} \E_{U\in \rm Haar}\left[U^{\otimes 2} \otimes U^*{}^{\otimes 2}\right] \kett{\bm 0}_{A\bm B}.
        \label{eq:pr_sp_square}
    \end{align}
    Now we evaluate Eq.~\eqref{eq:pr_sp_square} step by step from $U_1$ to $U_t$. 

    For the first step involving $U_1$, we have
    \begin{align}
        &\bbra{\hat{e}}_{B_1} \E_{U_1\in \rm Haar}\left[U_1^{\otimes 2} \otimes U_1^*{}^{\otimes 2}\right] \kett{\bm 0}_{AB_1}\nonumber\\
        &= \frac{1}{d^2-1}\left(\bbrakett{\hat{e}}{\hat{e}}_{B_1}\bbrakett{\hat{e}}{\bm 0}_{AB_1}\kett{\hat{e}}_A +\bbrakett{\hat{e}}{\hat{\tau}}_{B_1}\bbrakett{\hat{\tau}}{\bm 0}_{AB_1}\kett{\hat{\tau}}_{A}\right) \nonumber\\
        &\quad - \frac{1}{d(d^2-1)} \left(\bbrakett{\hat{e}}{\hat{e}}_{B_1}\bbrakett{\hat{\tau}}{\bm 0}_{AB_1}\kett{\hat{e}}_A +\bbrakett{\hat{e}}{\hat{\tau}}_{B_1}\bbrakett{\hat{e}}{\bm 0}_{AB_1} \kett{\hat{\tau}}_A \right)\\
        &= \frac{1}{d^2-1}\left(d_B^2 \kett{\hat{e}}_A +d_B\kett{\hat{\tau}}_{A}\right) - \frac{1}{d(d^2-1)} \left(d_B^2\kett{\hat{e}}_A +d_B\kett{\hat{\tau}}_{A}\right)\\
        &= \frac{d_B}{d_A(d_A d_B +1)} \kett{\hat{e}}_A + \frac{1}{d_A(d_A d_B +1)}\kett{\hat{\tau}}_{A}.
        \label{eq:pr_sp_square_eq0}
    \end{align}
    We next focus on the evolution of each of them separately under Haar random unitaries.
    \begin{align}
        &\bbra{\hat{e}}_{B_2} \E_{U_2\in \rm Haar}\left[U_2^{\otimes 2} \otimes U_2^*{}^{\otimes 2}\right] \kett{\hat{e}}_A \kett{\bm 0}_{B_2} \nonumber \\
        &= \frac{1}{d^2-1}\left(\bbrakett{\hat{e}}{\hat{e}}_{B_2} \bbrakett{\hat{e}}{\hat{e}}_A \bbrakett{\hat{e}}{\bm 0}_{B_2}\kett{\hat{e}}_A + \bbrakett{\hat{e}}{\hat{\tau}}_{B_2} \bbrakett{\hat{\tau}}{\hat{e}}_{A}\bbrakett{\hat{\tau}}{\bm 0}_{B_2}\kett{\hat{\tau}}_{A}\right) \nonumber\\
        &\quad - \frac{1}{d(d^2-1)} \left(\bbrakett{\hat{e}}{\hat{e}}_{B_2}\bbrakett{\hat{\tau}}{\hat{e}}_{A}\bbrakett{\hat{\tau}}{\bm 0}_{B_2}\kett{\hat{e}}_A +\bbrakett{\hat{e}}{\hat{\tau}}_{B_2}\bbrakett{\hat{e}}{\hat{e}}_{A}\bbrakett{\hat{e}}{\bm 0}_{B_2} \kett{\hat{\tau}}_A \right)\\
        &= \frac{1}{d^2-1}\left( d_A^2  d_B^2 \kett{\hat{e}}_A + d_A d_B\kett{\hat{\tau}}_{A}\right) - \frac{1}{d(d^2-1)} \left(d_A d_B^2 \kett{\hat{e}}_A +d_A^2 d_B\kett{\hat{\tau}}_A \right)\\
        &= \frac{d_B\left(d_A^2 d_B - 1\right)}{d_A^2 d_B^2-1} \kett{\hat{e}}_A + \frac{d_A (d_B-1)}{d_A^2 d_B^2-1} \kett{\hat{\tau}}_A.
        \label{eq:pr_sp_square_eq1}
    \end{align}
    Similarly, for $\tau$, we have
    \begin{align}
        &\bbra{\hat{e}}_{B_2} \E_{U_2\in \rm Haar}\left[U_2^{\otimes 2} \otimes U_2^*{}^{\otimes 2}\right] \kett{\hat{\tau}}_A \kett{\bm 0}_{B_2} \nonumber \\
        &= \frac{1}{d^2-1}\left(\bbrakett{\hat{e}}{\hat{e}}_{B_2} \bbrakett{\hat{e}}{\hat{\tau}}_A \bbrakett{\hat{e}}{\bm 0}_{B_2}\kett{\hat{e}}_A + \bbrakett{\hat{e}}{\hat{\tau}}_{B_2} \bbrakett{\hat{\tau}}{\hat{\tau}}_{A}\bbrakett{\hat{\tau}}{\bm 0}_{B_2}\kett{\hat{\tau}}_{A}\right) \nonumber\\
        &\quad - \frac{1}{d(d^2-1)} \left(\bbrakett{\hat{e}}{\hat{e}}_{B_2}\bbrakett{\hat{\tau}}{\hat{\tau}}_{A}\bbrakett{\hat{\tau}}{\bm 0}_{B_2}\kett{\hat{e}}_A +\bbrakett{\hat{e}}{\hat{\tau}}_{B_2}\bbrakett{\hat{e}}{\hat{\tau}}_{A}\bbrakett{\hat{e}}{\bm 0}_{B_2} \kett{\hat{\tau}}_A \right)\\
        &= \frac{1}{d^2-1}\left( d_A d_B^2 \kett{\hat{e}}_A + d_A^2 d_B\kett{\hat{\tau}}_{A}\right) - \frac{1}{d(d^2-1)} \left(d_A^2 d_B^2 \kett{\hat{e}}_A +d_A d_B\kett{\hat{\tau}}_A \right)\\
        &= \frac{d_Ad_B(d_B-1)}{d_A^2 d_B^2-1} \kett{\hat{e}}_A + \frac{d_A^2 d_B - 1}{d_A^2 d_B^2-1} \kett{\hat{\tau}}_A.
        \label{eq:pr_sp_square_eq2}
    \end{align}

    We still utilize the $2$-dimensional linear system representation as $\ket{\underline{0}}\coloneqq \kett{\hat{e}}_A, \ket{\underline{1}}\coloneqq \kett{\hat{\tau}}_A$. 
    The mapping in Eq.~\eqref{eq:pr_sp_square_eq1} and Eq.~\eqref{eq:pr_sp_square_eq2} can be represented by a $2\times 2$ matrix as
    \be
        M_2 = \frac{1}{d_A^2 d_B^2 - 1}\begin{pmatrix}
            d_B\left(d_A^2 d_B-1\right) & d_A d_B(d_B-1)\\
            d_A(d_B-1)  & d_A^2 d_B-1
        \end{pmatrix},
        \label{eq:M2_def}
    \ee
    and the ensemble-averaged squared probability in Eq.~\eqref{eq:pr_sp_square} becomes
    \begin{align}
        \E_{U\in \rm Haar}\left[P_U[x(t)]^2\right] &= \frac{d_B}{d_A(d_A d_B+1)}\left(\bra{\underline{0}}+ \bra{\underline{1}}\right) M_2^{t-1}\ket{\underline{0}} + \frac{1}{d_A(d_A d_B+1)}\left(\bra{\underline{0}}+ \bra{\underline{1}}\right) M_2^{t-1}\ket{\underline{1}} \\
        &= \frac{d_A d_B+1}{d_A\left(d_A^2 d_B+1\right)} + \frac{(d_A-1)(d_A d_B-1)}{d_A(d_A + 1)\left(d_A^2 d_B+1\right)}\left(\frac{\left(d_A^2-1\right) d_B}{d_A^2 d_B^2 - 1}\right)^t.
    \end{align}
    Therefore, the ensemble-averaged CP is
    \begin{align}
        Z_{\rm Sp}(t) &= \sum_{x(t)} \E_{U\in \rm Haar}\left[P_U[ x(t)]^2\right] \\
        &= \frac{d_A d_B+1}{d_A^2 d_B+1} + \frac{(d_A-1)(d_A d_B-1)}{(d_A + 1)\left(d_A^2 d_B+1\right)}\left(\frac{\left(d_A^2-1\right) d_B}{d_A^2 d_B^2 - 1}\right)^t,
    \end{align}
    which proves Eq.~\eqref{eq:HRCS_CP_sp_app} in Theorem~\ref{HRCS_CP_sp_app}. In the large system limit of $d_A \gg 1$, we can directly have the asymptotic form in Eq.~\eqref{eq:HRCS_CP_sp_asymp_app} with $Z_{\rm uni}(N_A) = 2^{-N_A}$ to be the CP for uniform distribution.
\end{proof}

For the $\epsilon$-approximate uniform CP of spatial marginal sampling in HRCS, defined by $Z_{\rm Sp}(t) \le (1+\epsilon)Z_{\rm uni}(N_A)$. As $Z_{\rm Sp}(t)$ monotonically decays with $t$, we introduce an upper bound for simplification
\begin{align}
    Z_{\rm Sp}(t) &= Z_{\rm uni}\left(N_A\right)\left(1 + \frac{1}{d_A d_B} + \frac{1}{d_B^t}-\frac{2+d_B^{-1}}{d_Ad_B^t}\right) +  \calO\left(\frac{1}{d_A^3}\right) \nonumber\\
    &\le Z_{\rm uni}\left(N_A\right)\left(1 + \frac{1}{d_A d_B} + \frac{1}{d_B^t}\right) +  \calO\left(\frac{1}{d_A^3}\right),
\end{align}
which results in the critical number of temporal steps as
\begin{align}
    t \ge \tau_{\rm Sp} \coloneqq \frac{\ln\left(\frac{d_A d_B}{d_A d_B \epsilon -1}\right)}{\ln(d_B)} = \frac{\ln(1/\epsilon) +\calO(1/d_A)}{\ln(d_B)}.
    \label{eq:tau_sp_app}
\end{align}

\begin{theorem}
\label{HRCS_CP_te_app}
(Theorem~\ref{HRCS_CP_sp_te}, temporal sampling part)
For holographic random circuit sampling with each unitary $U_t$ satisfying $2$-design, the ensemble-averaged collision probability for marginal temporal sampling at step $t\ge 1$  is
\be
   Z_{\rm Te}(t) = \left(\frac{d_A+1}{d_A d_B + 1}\right)^t,      \label{eq:HRCS_CP_te_app} 
\ee
where $d_A=2^{N_A}, d_B=2^{N_B}$ are Hilbert space dimensions of system and bath. In the large system limit of $d_A \gg 1$, we have
\begin{align}
    Z_{\rm Te}(t) = Z_{\rm uni}\left(tN_B\right)\exp\left[\frac{t(d_B-1)}{d_A d_B} + \calO\left(\frac{1}{d_A^2}\right)\right].
    \label{eq:HRCS_CP_te_asymp_app}
\end{align}
\end{theorem}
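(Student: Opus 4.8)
The plan is to reuse, almost verbatim, the doubled-space transfer-matrix machinery developed for the joint distribution in Theorem~\ref{HRCS_CP_spt_app}, modifying only the boundary condition at the system $A$. Following the equivalent bath expansion of Fig.~\ref{fig:proof}c, I would first write the marginal temporal probability as
\be
    P_U[\bmz(t)] = \tr\left(U\ketbra{\bm 0}{\bm 0}_{A\bm B}U^\dagger\,\bI_A\otimes\ketbra{\bmz(t)}{\bmz(t)}_{\bm B}\right),
\ee
which is the joint-sampling object of Eq.~\eqref{eq:pr_spt} with the final system projector $\ketbra{x(t)}{x(t)}_A$ replaced by the identity $\bI_A$, i.e. the system is traced out while all $t$ bath registers are projected onto $\ket{z_k}$. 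Squaring, averaging over Haar unitaries, and passing to the doubled Hilbert space then gives
\be
    \E_{U\in\rm Haar}\!\left[P_U[\bmz(t)]^2\right] = \bbra{\hat e}_A\bbra{\bmz(t)}_{\bm B}\,\E_{U\in\rm Haar}\!\left[U^{\otimes 2}\otimes U^{*\otimes 2}\right]\kett{\bm 0}_{A\bm B}.
\ee

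Next I would evaluate this expectation step by step from $U_1$ to $U_t$ using the $K=2$ twirling identity of Eq.~\eqref{eq:haar_twirling_K2}. The crucial point is that each bath projection contributes $\bbrakett{z_k}{\hat\sigma}_{B_k}=1$ for both $\sigma\in\{e,\tau\}$, exactly as in the joint case; consequently the first step reproduces the initial vector $(\ket{\underline 0}+\ket{\underline 1})/d(d+1)$ of Eq.~\eqref{eq:pr_spt_square_eq0}, and steps $2,\dots,t$ are governed by the same transfer matrix $M$ of Eq.~\eqref{eq:M_def}. The only departure from Theorem~\ref{HRCS_CP_spt_app} is the final contraction on $A$: tracing out the system replaces the rule $\bbrakett{x(t)}{\hat\sigma}_A=1$ by $\bbrakett{\hat e}{\hat e}_A=d_A^2$ and $\bbrakett{\hat e}{\hat\tau}_A=d_A$, so the output covector becomes $(d_A^2,d_A)$ rather than $(1,1)$.

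The computation then collapses cleanly. Since $M$ is symmetric with $(1,1)^T$ an eigenvector of eigenvalue $\lambda_+ = (d_A+1)/[d_B(d_A d_B+1)]$ (obtained by summing the two entries of a row of Eq.~\eqref{eq:M_def} and cancelling the common factor $d_A d_B-1$), one has $M^{t-1}(\ket{\underline 0}+\ket{\underline 1})=\lambda_+^{t-1}(\ket{\underline 0}+\ket{\underline 1})$; contracting with $(d_A^2,d_A)$ and the prefactor $1/d(d+1)$ gives $\E_{U}[P_U[\bmz(t)]^2]=\lambda_+^t$, and summing over the $d_B^t$ temporal outcomes yields $Z_{\rm Te}(t)=d_B^t\lambda_+^t=\left((d_A+1)/(d_A d_B+1)\right)^t$, which is Eq.~\eqref{eq:HRCS_CP_te_app}. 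The asymptotic form Eq.~\eqref{eq:HRCS_CP_te_asymp_app} follows by writing $d_B^{-t}=Z_{\rm uni}(tN_B)$ and expanding $\log\!\big[(1+d_A^{-1})/(1+(d_A d_B)^{-1})\big]=(d_B-1)/(d_A d_B)+\calO(d_A^{-2})$ inside the $t$-th power.

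The step I expect to be least mechanical is justifying the collapse onto the symmetric eigenvector: I must verify that the trace-induced output covector $(d_A^2,d_A)$, although not proportional to $(1,1)$, never excites the decaying $(1,-1)$ mode of $M$, because the propagated state stays exactly along $(1,1)$ at every step. A useful consistency check and alternative route is to note that the per-step factor $d_B\lambda_+=(d_A+1)/(d_A d_B+1)$ is precisely the single-shot bath-marginal collision probability $Z_{{\rm H},B}(N_A,N_B)$ of Corollary~\ref{corollary:haar_marginal}; this explains the product structure and suggests an inductive proof in which each temporal step contributes one independent Haar bath-marginal factor. Finally, reset-independence of $Z_{\rm Te}(t)$ follows from the left/right-invariance argument already used for the joint case.
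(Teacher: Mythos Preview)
Your proposal is correct and follows essentially the same route as the paper: write the marginal probability with $\bI_A$ in place of the system projector, pass to the doubled space, obtain the same initial vector $(\ket{\underline 0}+\ket{\underline 1})/d(d+1)$ and the same transfer matrix $M$ of Eq.~\eqref{eq:M_def}, and close with the covector $(d_A^2,d_A)$ coming from $\bbra{\hat e}_A$. Your explicit identification of $(1,1)^T$ as an eigenvector of $M$ with eigenvalue $\lambda_+=(d_A+1)/[d_B(d_A d_B+1)]$ is a clean way to see why the answer is a pure $t$-th power---the paper simply states the final result $\lambda_+^t$ without naming the eigenvector---and your worry about the output covector is unfounded for exactly the reason you already note: the propagated vector is exactly along $(1,1)$, so the asymmetric final contraction never mixes in the $(1,-1)$ mode. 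The connection to $Z_{{\rm H},B}(N_A,N_B)$ of Corollary~\ref{corollary:haar_marginal} as the per-step factor is a nice extra observation not made in the paper.
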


%\QZ{**I am here**}

\begin{proof}
As illustrated by the bath expansion in Fig.~\ref{fig:proof}c, the marginal probability of sampling $\bmz(t)$ in the output state
    \be
        P_U[\bmz(t)] = \tr\left(U\ketbra{\bm 0}{\bm 0}_{A\bm B} U^\dagger \bI_{A}\otimes \ketbra{\bmz(t)}{\bmz(t)}_{\bm B} \right),
        \label{eq:pr_te}
    \ee
    where $\bmz(t) = (z_1, z_2, \dots, z_t)$.
    The ensemble average of the square of the marginal probability becomes
    \begin{align}
        \E_{U\in \rm Haar}\left[P_U[\bmz(t)]^2\right] &= \E_{U\in \rm Haar} \tr\left(U^{\otimes 2}\ketbra{\bm 0}{\bm 0}_{A\bm B}^{\otimes 2} U^\dagger{}^{\otimes 2} \bI_{A}^{\otimes 2} \ketbra{\bmz(t)}{\bmz(t)}_{\bm B}^{\otimes 2} \right)\\
        &= \bbra{\hat{e}}_{A}\bbra{\bmz(t)}_{\bm B} \E_{U\in \rm Haar}\left[U^{\otimes 2} \otimes U^*{}^{\otimes 2}\right] \kett{\bm 0}_{A\bm B}.
        \label{eq:pr_te_square}
    \end{align}
    Now we evaluate Eq.~\eqref{eq:pr_te_square} step by step from $U_1$ to $U_t$. 

    For the first step involving $U_1$, we have
    \begin{align}
        \bbra{z_1}_{B_1} \E_{U_1\in \rm Haar}\left[U_1^{\otimes 2} \otimes U_1^*{}^{\otimes 2}\right] \kett{\bm 0}_{AB_1} = \frac{1}{d(d +1)} \left(\kett{\hat{e}}_A + \kett{\hat{\tau}}_{A}\right),
        \label{eq:pr_te_square_eq0}
    \end{align}
    which is already derived in Eq.~\eqref{eq:pr_spt_square_eq0}.
    We next focus on the evolution of each of them separately under Haar random unitary (see derivation in Eqs.~\eqref{eq:pr_spt_square_eq1},~\eqref{eq:pr_spt_square_eq2}).
    \begin{align}
        \left\{ \begin{array}{ll}
        \bbra{z_2}_{B_2} \E_{U_2\in \rm Haar}\left[U_2^{\otimes 2} \otimes U_2^*{}^{\otimes 2}\right] \kett{\hat{e}}_A \kett{\bm 0}_{B_2} &= \frac{d_A^2 d_B - 1}{d_B\left(d_A^2 d_B^2-1\right)} \kett{\hat{e}}_A + \frac{d_A (d_B-1)}{d_B\left(d_A^2 d_B^2-1\right)} \kett{\hat{\tau}}_A,\\
        \bbra{z_2}_{B_2} \E_{U_2\in \rm Haar}\left[U_2^{\otimes 2} \otimes U_2^*{}^{\otimes 2}\right] \kett{\hat{\tau}}_A \kett{\bm 0}_{B_2} &=\frac{d_A (d_B-1)}{d_B\left(d_A^2 d_B^2-1\right)} \kett{\hat{e}}_A + \frac{d_A^2 d_B - 1}{d_B\left(d_A^2 d_B^2-1\right)}\kett{\hat{\tau}}_A
        .\end{array} \right.
        \label{eq:pr_te_square_eq}
    \end{align}

    With the $2$-dimensional linear system representation as $\ket{\underline{0}}\coloneqq \kett{\hat{e}}_A, \ket{\underline{1}}\coloneqq \kett{\hat{\tau}}_A$, the mapping of Eqs.~\eqref{eq:pr_te_square_eq} can be represented by the $2\times 2$ matrix in Eq.~\eqref{eq:M_def}
    and the ensemble-averaged squared probability in Eq.~\eqref{eq:pr_sp_square} becomes
    \begin{align}
        \E_{U\in \rm Haar}\left[P_U[\bmz(t)]^2\right] &= \frac{1}{d(d+1)}\left[\bra{\underline{0}}M^{t-1}\left(\ket{\underline{0}} + \ket{\underline{1}}\right) \bbrakett{\hat{e}}{\hat{e}}_A + \bra{\underline{1}} M^{t-1}\left(\kett{\underline{0}} + \ket{\underline{1}}\right) \bbrakett{\hat{e}}{\hat{\tau}}_A \right]\\
        &= \left(\frac{1 + d_A}{d_B(1+d_A d_B)}\right)^t.
    \end{align}
    Therefore, the ensemble-averaged CP is
    \begin{align}
        Z_{\rm Te}(t) &= \sum_{\bmz (t)} \E_{U\in \rm Haar}\left[P_U[ x(t)]^2\right] = \left(\frac{1 + d_A}{1+d_A d_B}\right)^t,
    \end{align}
    which proves Eq.~\eqref{eq:HRCS_CP_te_app} in Theorem~\ref{HRCS_CP_te_app}. 

    In the large system limit of $d_A \gg 1$, we have
    \begin{align}
        Z_{\rm Te}(t) &= \left(\frac{1 + d_A}{1+d_A d_B}\right)^t \nonumber\\
        &= \exp\left[t\ln\left(\frac{1+d_A}{1+d_A d_B}\right)\right]\\
        &= \exp\left[t\ln\left(\frac{1}{d_B}\right) + t\left(\frac{d_B-1}{d_A d_B}\right) + \calO\left(\frac{1}{d_A^2}\right)\right]\\
        &= Z_{\rm uni}(tN_B) \exp\left[t\left(\frac{d_B-1}{d_A d_B}\right) + \calO\left(\frac{1}{d_A^2}\right)\right],
    \end{align}
    which is the asymptotic result in Eq.~\eqref{eq:HRCS_CP_te_asymp_app}.
\end{proof}

For $\epsilon$-approximate uniform CP defined by $Z_{\rm Te}(t) \le (1+\epsilon)Z_{\rm uni}(tN_B)$, we require
\be
    t \le \tau_{\rm Te} \coloneqq \frac{d_B}{d_B - 1}d_A \ln(1+\epsilon).
\ee

\subsection{Temporal sampling per step in HRCS}

In this section, we focus on the marginal temporal sampling per step in HRCS, i.e., the distribution of $P[z_1], P[z_2], \dots, P[z_t]$, and we have the following theorem for corresponding collision probability.

\begin{theorem}
\label{HRCS_CP_te_per_step}
    {\rm (Temporal sampling per step in HRCS)}
    For holographic random circuit sampling with each unitary $U_t$ satisfying $2$-design, the ensemble-averaged collision probability for temporal sampling only at a single step $t\ge 1$  is
    \begin{align}
        Z_{{\rm Te},t}(t) = 
        \frac{d_A^2 + 1}{d_A^2 d_B + 1} + \frac{d_A (d_B - 1) (d_A d_B - 1)}{(d_A^2 d_B + 1)(d_A+1) d_B}\left(\frac{\left(d_A^2-1\right) d_B}{d_A^2 d_B^2-1}\right)^t,
        \label{eq:HRCS_CP_te_per_step}
    \end{align}
    where $d_A, d_B = 2^{N_A}, 2^{N_B}$ are the Hilbert space dimension of memory and bath system. In the large-system limit with $d_A \gg 1$, we have
    \begin{align}
        Z_{{\rm Te},t}(t) &= Z_{\rm uni}(N_B)\left[1 + \frac{1-d_B^{-1}}{d_A^2} + \frac{1-d_B^{-1}}{d_A d_B^{t-1}}\left(1-\frac{1+d_B^{-1}}{d_A}\right)\right] +\calO\left(\frac{1}{d_A^3}\right).
        \label{eq:HRCS_CP_te_per_step_asymp}
    \end{align}
\end{theorem}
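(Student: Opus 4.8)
The plan is to treat $Z_{{\rm Te},t}(t)$ exactly as the spatial and temporal marginals were handled in Theorems~\ref{HRCS_CP_sp_app} and~\ref{HRCS_CP_te_app}, reusing the doubled-Hilbert-space transfer-matrix machinery but with a \emph{mixed} set of boundary conditions. For single-step temporal sampling one marginalizes over the system $A$ and over every bath register except the last, so the observable in the analogue of Eq.~\eqref{eq:pr_te} carries $\bI_A$, $\bI_{B_1},\dots,\bI_{B_{t-1}}$ and a single projector $\ketbra{z_t}{z_t}_{B_t}$. Squaring and averaging over the $U_k$ in the doubled representation, each layer contributes the $K=2$ Haar twirl of Eq.~\eqref{eq:haar_twirling_K2}, and I would evaluate the resulting contraction layer by layer from $U_1$ to $U_t$.

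The first step is to identify which single-layer map acts where. Because $B_1,\dots,B_{t-1}$ are traced out, the first unitary reproduces the identity-bath boundary vector of Eq.~\eqref{eq:pr_sp_square_eq0}, which I denote $v_1$ (direction $d_B\ket{\underline 0}+\ket{\underline 1}$ in the $\{\kett{\hat e}_A,\kett{\hat\tau}_A\}$ basis); the layers $U_2,\dots,U_{t-1}$ each act by the spatial transfer matrix $M_2$ of Eq.~\eqref{eq:M2_def}; the final layer $U_t$, carrying the projector on $B_t$, acts by the temporal transfer matrix $M$ of Eq.~\eqref{eq:M_def}; and the leftover $\bI_A$ is closed off by $\bbra{\hat e}_A$, i.e. the row $(d_A^2,\,d_A)$. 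Hence, for $t\ge 2$,
\begin{equation}
Z_{{\rm Te},t}(t) = d_B\,(d_A^2,\,d_A)\,M\,M_2^{\,t-2}\,v_1,
\end{equation}
the overall $d_B$ coming from the sum over the $d_B$ equiprobable values of $z_t$. The case $t=1$ collapses to the temporal result $Z_{\rm Te}(1)=(d_A+1)/(d_Ad_B+1)$, which I would check agrees with the claimed formula at $t=1$ (it does, after factoring $(d_A^2-1)=(d_A-1)(d_A+1)$).

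The core is then linear algebra on two $2\times2$ matrices. I would diagonalize $M_2$, whose trace and determinant factor cleanly to give eigenpairs $\lambda_+=1$, $u_+=(d_Ad_B,1)$ and $\lambda_-=(d_A^2-1)d_B/(d_A^2d_B^2-1)$, $u_-=(1,-d_A)$ — note $\lambda_-$ is exactly the decaying ratio raised to a power in Eq.~\eqref{eq:HRCS_CP_te_per_step}. Expanding the direction of $v_1$ as $\alpha u_+ + \beta u_-$ fixes $\alpha=(1+d_Ad_B)/(d_A^2d_B+1)$ and $\beta=-d_B(d_A-1)/(d_A^2d_B+1)$, so $M_2^{t-2}v_1$ retains $u_+$ and carries $\lambda_-^{t-2}$ on $u_-$. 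The key simplification is that $M$ sends these $M_2$-eigenvectors to clean forms, $Mu_+=\tfrac1{d_B}(d_A,1)$ and $Mu_-=\tfrac{\lambda_-}{d_B^2}(1,-d_Ad_B)$; the extra $\lambda_-$ in the second identity promotes $\lambda_-^{t-2}$ to $\lambda_-^{t-1}$ (equivalently $\lambda_-^{t}$ after absorbing one factor into the prefactor, which is the form stated). Contracting with $(d_A^2,\,d_A)$ and multiplying by $d_B$ then produces the stationary piece $(d_A^2+1)/(d_A^2d_B+1)$ from $u_+$ and the decaying piece with the stated coefficient from $u_-$, completing Eq.~\eqref{eq:HRCS_CP_te_per_step}.

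Finally, the asymptotic Eq.~\eqref{eq:HRCS_CP_te_per_step_asymp} follows by routine expansion in $1/d_A$: writing $(d_A^2+1)/(d_A^2d_B+1)=Z_{\rm uni}(N_B)\bigl(1+(1-d_B^{-1})/d_A^2+\cdots\bigr)$ and expanding $\lambda_-^{t}$ together with its coefficient (whose leading part is $(d_B-1)/(d_Ad_B)$) reproduces the $1/(d_Ad_B^{t-1})$ correction. I expect the only genuine obstacle to be bookkeeping rather than anything conceptual: keeping the $t-1$ identity-bath layers (governed by $M_2$) cleanly separated from the single projector-bath layer (governed by $M$) and the trailing $\bbra{\hat e}_A$ closure, and in particular verifying the small but crucial identity $Mu_-=\tfrac{\lambda_-}{d_B^2}(1,-d_Ad_B)$, which is what makes the power of $\lambda_-$ in the final answer come out correctly. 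Checking the $t=1$ boundary against $Z_{\rm Te}(1)$ gives a useful independent consistency test.
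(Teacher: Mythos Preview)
Your proposal is correct and follows essentially the same transfer-matrix route as the paper: identity-bath boundary ($M_2$) for the first $t-1$ layers, projector-bath boundary ($M$) at layer $t$, and closure with $\bbra{\hat e}_A$. The only substantive difference is that the paper additionally carries post-sampling layers $U_{t+1},\dots,U_T$ and observes that they drop out (equivalently, $(d_A^2,d_A)$ is a left eigenvector of $M_2$ with eigenvalue $1$), thereby making the $T$-independence noted after the theorem explicit, whereas you silently take $T=t$; your explicit diagonalization of $M_2$ and the identities $Mu_+ = d_B^{-1}(d_A,1)$, $Mu_- = \lambda_- d_B^{-2}(1,-d_Ad_B)$ are a cleaner packaging of the same linear algebra the paper carries out by direct evaluation.
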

Note that the CP only depends on the step index $t$ but independent of the total number of steps in HRCS as expected. For $\epsilon$-approximate uniform CP, the minimum temporal step is
\begin{align}
    t \ge \tau_{{\rm Te},t} \coloneqq \frac{1}{N_B} \log_2 \left(\frac{(d_B-1) ((d_A-1) d_B-1)}{d_B \left(d_A^2 \epsilon-1\right)+1}\right).
\end{align}
In particular, for $N_B \le \log _2\left(\frac{d_A^2 \epsilon + d_A-1}{d_A-1}\right)$, the above lower bound is trivial and the marginal sampling at arbitrary temporal step satisfies $\epsilon$-approximate uniform CP.

\begin{proof}
    We prove it utilizing the expansion shown in Supplementary Fig.~\ref{fig:proof}d. Without loosing generality, we focus on the marginal temporal sampling only at step $t$ assuming there are in total $T$ steps in HRCS, and the sampling probability thus becomes
    \be
        P_U[z_t] = \tr\left(U\ketbra{\bm 0}{\bm 0}_{A\bm B} U^\dagger \bI_{A}\otimes \bI_{B_1\shortto B_{t-1}}\otimes \ketbra{z_t}{z_t}_{B_t} \otimes \bI_{B_{t+1} \shortto B_T} \right),
        \label{eq:pr_te_per_step}
    \ee
    where $\bI_{B_1\shortto B_{t-1}} = \otimes_{j=1}^{t-1} \bI_{B_j}$ and $\bI_{B_{t+1} \shortto B_T}$ is defined similarly. The ensemble-averaged squared probability becomes
    \begin{align}
        \E_{U\in \rm Haar}\left[P_U[z_t]^2\right] &= \E_{U\in \rm Haar} \tr\left(U^{\otimes 2}\ketbra{\bm 0}{\bm 0}_{A\bm B}^{\otimes 2} U^\dagger{}^{\otimes 2} \bI_{A}^{\otimes 2}\otimes \bI_{B_1\shortto B_{t-1}}^{\otimes 2}\otimes \ketbra{z_t}{z_t}_{B_t}^{\otimes 2} \otimes \bI_{B_{t+1} \shortto B_T}^{\otimes 2} \right)\\
        &= \bbra{\hat{e}}_A \bbra{\hat{e}}_{B_1\shortto B_{t-1}} \bbra{z_t}_{B_t} \bbra{\hat{e}}_{B_{t+1}\shortto B_T} \E_{U \in \rm Haar}\left[U^{\otimes 2} \otimes U^*{}^{\otimes 2}\right] \kett{\bm 0}_{A \bm B},
        \label{eq:pr_te_per_step_square}
    \end{align}
    where $\kett{\hat{e}}_{B_1 \to B_{t-1}} = \otimes_{j=1}^{t-1} \kett{\hat{e}}_{B_j}$.
    
    For the first step unitary twirling with $U_1$, we have
    \begin{align}
        \bbra{\hat{e}}_{B_1} \E_{U_1\in \rm Haar}\left[U_1^{\otimes 2} \otimes U_1^*{}^{\otimes 2}\right] \kett{\bm 0}_{AB_1} = \frac{d_B}{d_A(d_A d_B +1)} \kett{\hat{e}}_A + \frac{1}{d_A(d_A d_B +1)}\kett{\hat{\tau}}_{A},
        \label{eq:pr_te_per_step_square_eq0}
    \end{align}
    which has been derived in Eq.~\eqref{eq:pr_sp_square_eq0}. The unitary twirling for $\kett{\hat{e}}$ and $\kett{\hat{\tau}}$ are
    \begin{align}
        \left\{ \begin{array}{ll}
        \bbra{\hat{e}}_{B_2} \E_{U_2\in \rm Haar}\left[U_2^{\otimes 2} \otimes U_2^*{}^{\otimes 2}\right] \kett{\hat{e}}_A \kett{\bm 0}_{B_2} &= \frac{d_B\left(d_A^2 d_B - 1\right)}{d_A^2 d_B^2-1} \kett{\hat{e}}_A + \frac{d_A (d_B-1)}{d_A^2 d_B^2-1} \kett{\hat{\tau}}_A, \\
        \bbra{\hat{e}}_{B_2} \E_{U_2\in \rm Haar}\left[U_2^{\otimes 2} \otimes U_2^*{}^{\otimes 2}\right] \kett{\hat{\tau}}_A \kett{\bm 0}_{B_2} &= \frac{d_Ad_B(d_B-1)}{d_A^2 d_B^2-1} \kett{\hat{e}}_A + \frac{d_A^2 d_B - 1}{d_A^2 d_B^2-1} \kett{\hat{\tau}}_A
        .\end{array} \right.
        \label{eq:pr_te_per_step_square_eq1}
    \end{align}
    Note that the above twirling result can be represented by $M_2$ in Eq.~\eqref{eq:M2_def}. Through the first $t-1$ steps of evolution, we reach
    \begin{align}
        &\E_{U \in \rm Haar} \tr_{B_1\shortto B_{t-1}}\left(U_{1\shortto t-1}^{\otimes 2}\ketbra{\bm 0}{\bm 0}_{A,B_1 \shortto B_{t-1}}^{\otimes 2} U_{1\shortto t-1}^\dagger{}^{\otimes 2} \bI_{B_1\shortto B_{t-1}}^{\otimes 2}\right) \nonumber\\
        &= \left(\frac{d_B}{d_A(d_A d_B + 1)}\braket{\underline{0}|M_2^{t-2}|\underline{0}} + \frac{1}{d_A(d_A d_B + 1)}\braket{\underline{0}|M_2^{t-2}|\underline{1}}\right)\kett{\hat{e}}_A \nonumber\\
        &\quad + \left(\frac{d_B}{d_A(d_A d_B + 1)}\braket{\underline{1}|M_2^{t-2}|\underline{0}} + \frac{1}{d_A(d_A d_B + 1)}\braket{\underline{1}|M_2^{t-2}|\underline{1}}\right) \kett{\hat{\tau}}_A\\
        &= \left[\frac{d_B}{d_A^2 d_B + 1}-\frac{(d_A d_B-1)^2 (d_A d_B+1) \left(\frac{\left(d_A^2-1\right) d_B}{d_A^2 d_B^2-1}\right)^t}{d_A d_B(d_A-1)(d_A+1)^2 \left(d_A^2 d_B+1\right)}\right] \kett{\hat{e}}_A \nonumber\\ 
        &\quad + \left[\frac{1}{d_A\left(d_A^2 d_B + 1\right)} + \frac{(d_A d_B-1)^2 (d_A d_B+1) \left(\frac{\left(d_A^2-1\right) d_B}{d_A^2 d_B^2-1}\right)^t}{(d_A-1) (d_A+1)^2 d_B\left(d_A^2 d_B+1\right)}\right] \kett{\hat{\tau}}_A.
        \label{eq:pr_te_per_step_square2}
    \end{align}

    Next we do the unitary twirling with $U_t$ for different boundary condition.
    \begin{align}
        \left\{ \begin{array}{ll}
        \bbra{z_t}_{B_t} \E_{U_t\in \rm Haar}\left[U_t^{\otimes t} \otimes U_t^*{}^{\otimes 2}\right] \kett{\hat{e}}_A \kett{\bm 0}_{B_t} &= \frac{d_A^2 d_B - 1}{d_B\left(d_A^2 d_B^2-1\right)} \kett{\hat{e}}_A + \frac{d_A (d_B-1)}{d_B\left(d_A^2 d_B^2-1\right)} \kett{\hat{\tau}}_A,\\
        \bbra{z_t}_{B_t} \E_{U_t\in \rm Haar}\left[U_t^{\otimes 2} \otimes U_t^*{}^{\otimes 2}\right] \kett{\hat{\tau}}_A \kett{\bm 0}_{B_t} &=\frac{d_A (d_B-1)}{d_B\left(d_A^2 d_B^2-1\right)} \kett{\hat{e}}_A + \frac{d_A^2 d_B - 1}{d_B\left(d_A^2 d_B^2-1\right)}\kett{\hat{\tau}}_A,
        \end{array} \right.
        \label{eq:pr_te_per_step_square_eq2}
    \end{align}
    which is the same as Eqs.~\eqref{eq:pr_te_square_eq}. Applying it to Eq.~\eqref{eq:pr_te_per_step_square2}, we have
    \begin{align}
        &\E_{U \in \rm Haar} \tr_{B_1\shortto B_{t}}\left(U_{1\shortto t}^{\otimes 2}\ketbra{\bm 0}{\bm 0}_{A,B_1 \shortto B_{t}}^{\otimes 2} U_{1\shortto t}^\dagger{}^{\otimes 2} \bI_{B_1\shortto B_{t}}^{\otimes 2}\right) \nonumber\\
        &= \frac{(1-d_A d_B) \left(\frac{\left(d_A^2-1\right) d_B}{d_A^2 d_B^2-1}\right)^t+d_A (d_A+1) d_B}{d_A (d_A+1) d_B^2 \left(d_A^2 d_B+1\right)} \kett{\hat{e}}_A + \frac{d_A (d_A d_B-1) \left(\frac{\left(d_A^2-1\right) d_B}{d_A^2 d_B^2-1}\right)^t+d_A+1}{d_A (d_A+1) d_B \left(d_A^2 d_B+1\right)} \kett{\hat{\tau}}_A.
    \end{align}

    The following unitary twirling from $U_{t+1}$ to $U_T$ is still described by Eq.~\eqref{eq:pr_te_per_step_square_eq1}, and we can finally derive the full expression of Eq.~\eqref{eq:pr_te_per_step_square} as
    \begin{align}
        \E_{U\in \rm Haar}\left[P_U[z_t]^2\right] &= \frac{d_A (d_B-1) (d_A d_B-1) }{d_B^2\left(d_A+1\right) \left(d_A^2 d_B+1\right)}\left(\frac{\left(d_A^2-1\right) d_B}{d_A^2 d_B^2-1}\right)^t + \frac{d_A^2+1}{d_B \left(d_A^2 d_B+1\right)},
    \end{align}
    and the CP becomes
    \be
        Z_{{\rm Te}, t}(t) = \sum_{z_t}\E_{U\in \rm Haar}\left[P_U[z_t]^2\right] = \frac{d_A (d_B-1) (d_A d_B-1) }{d_B\left(d_A+1\right) \left(d_A^2 d_B+1\right)}\left(\frac{\left(d_A^2-1\right) d_B}{d_A^2 d_B^2-1}\right)^t + \frac{d_A^2+1}{d_A^2 d_B+1}.
    \ee
\end{proof}

\subsection{Additional numerical results on marginal sampling in HRCS}
\label{sec:sp_t}

We verify the decay of CP in spatial sampling in Supplementary Fig.~\ref{fig:marginal}a. Specifically, the relative deviation of CP with respect to uniform one $Z_{\rm Sp}(t)/Z_{\rm uni}(N_A)-1$ reveals an exponential decay in early stage of $2^{-t N_B}$ (dashed lines) and later convergence to the finite-size correction of $1/d_A^2 d_B$. The threshold temporal steps for $\epsilon$-approximate uniform CP in Eq.~\eqref{eq:tau_sp} aligns with the numerical solutions (crosses) in Supplementary Fig.~\ref{fig:marginal}b. Moreover, we plot the PoP of marginal spatial sampling distribution $P[x(t)]$ in Supplementary Fig.~\ref{fig:marginal}c, which concentrates around the uniform distribution with probability $2^{-N_A}$.
% Since the CP for {\em temporal} sampling of HRCS in Eq.~\eqref{eq:HRCS_CP_te} is simply a power function of the CP in a single step of subsystem sampling on Haar random states (see Eq.~\eqref{eq:CP_haar_sub}), suggesting that it is equivalent to the subsystem {\em spatial} sampling on $t$ copies of identical system with randomly chosen unitary for each one. 
Similar to the spatial sampling, we also see agreements between numerical results and theories for temporal sampling in Supplementary Fig.~\ref{fig:marginal}d-e. The PoP of marginal temporal sampling $P[\bmz(t)]$ also concentrates around the uniform distribution of dimension $t N_B$ though with finite spreading, shown in Supplementary Fig.~\ref{fig:marginal}f.

\begin{figure*}[t]
    \centering
    \includegraphics[width=\textwidth]{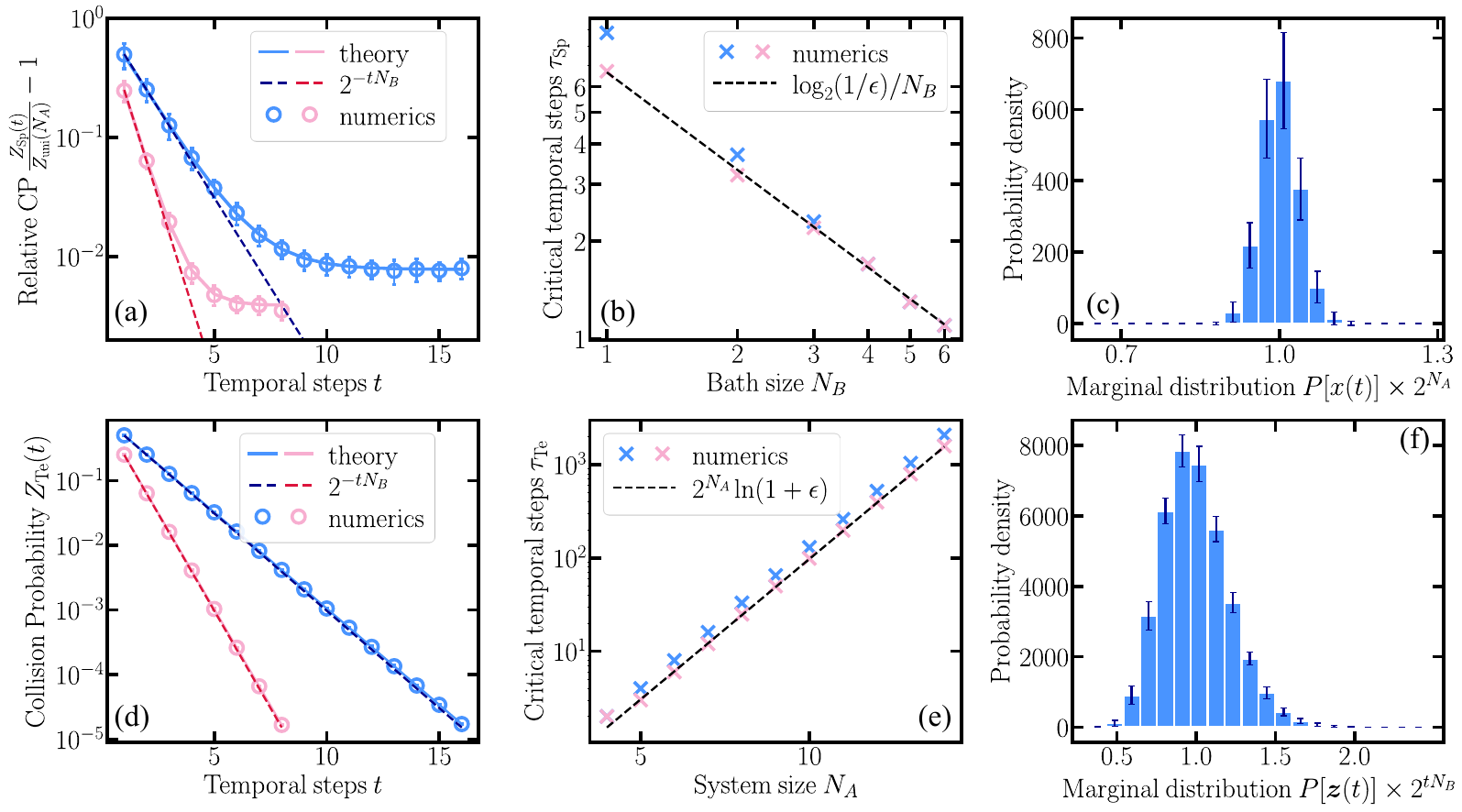}
    \caption{{\bf Marginal sampling in HRCS.} (a) Ensemble-averaged relative deviation of collision probability (CP) $Z_{\rm Sp}(t)/Z_{\rm uni}(N_A)-1$ for spatial sampling in HRCS versus temporal steps in a system of $N_A=6$, $N_B=1, 2$ (blue and orange dots) qubits. Solid lines represent theoretical result from Eq.~\eqref{eq:HRCS_CP_sp} in Theorem~\ref{HRCS_CP_sp_te}. Dark-colored dashed lines are universal convergence scaling of $2^{-tN_B}$.
    (b) Growth of critical temporal steps $\tau_{\rm Sp}$ versus bath size $N_B$. Blue and orange crosses represent numerical solutions for $Z_{\rm Sp}(t)=(1+\epsilon)Z_{\rm uni}(N_A)$ for $N_A=6, 12$ separately. The black line is Eq.~\eqref{eq:tau_sp}.
    (c) Ensemble-averaged probability density function of the spatial sampling distribution $P[x(t)]$ in HRCS of $N_A=6, N_B=4$ qubits at $t=3$. 
    (d) CP of temporal sampling $Z_{\rm Te}(t)$ versus temporal steps $t$. Light-colored solid lines are theory corresponding to Eq.~\eqref{eq:HRCS_CP_te} in Theorem~\ref{HRCS_CP_sp_te} and dark-colored dashed lines are CP for uniform distribution $Z_{\rm uni}(tN_B)$.
    (e) Critical temporal steps for temporal sampling $\tau_{\rm Te}$ versus system size $N_A$. Blue and orange crosses represent numerical solutions for $Z_{\rm Te}(t)=(1+\epsilon)Z_{\rm uni}(tN_B)$ for $N_B=2, 6$ separately. The black line is Eq.~\eqref{eq:tau_te}.
    (f) Ensemble-averaged probability density function of the temporal sampling distribution $P[\bmz(t)]$ in HRCS of $N_A=6, N_B=4$ qubits at $t=3$.
    Error bars in (a),(c),(d),(f) show the standard deviation over $50$ circuit instances. 
    }
    \label{fig:marginal}
\end{figure*}

\section{Noisy model of HRCS}
\label{app:noisy_theory}

\begin{figure}[t]
    \centering
    \includegraphics[width=0.5\textwidth]{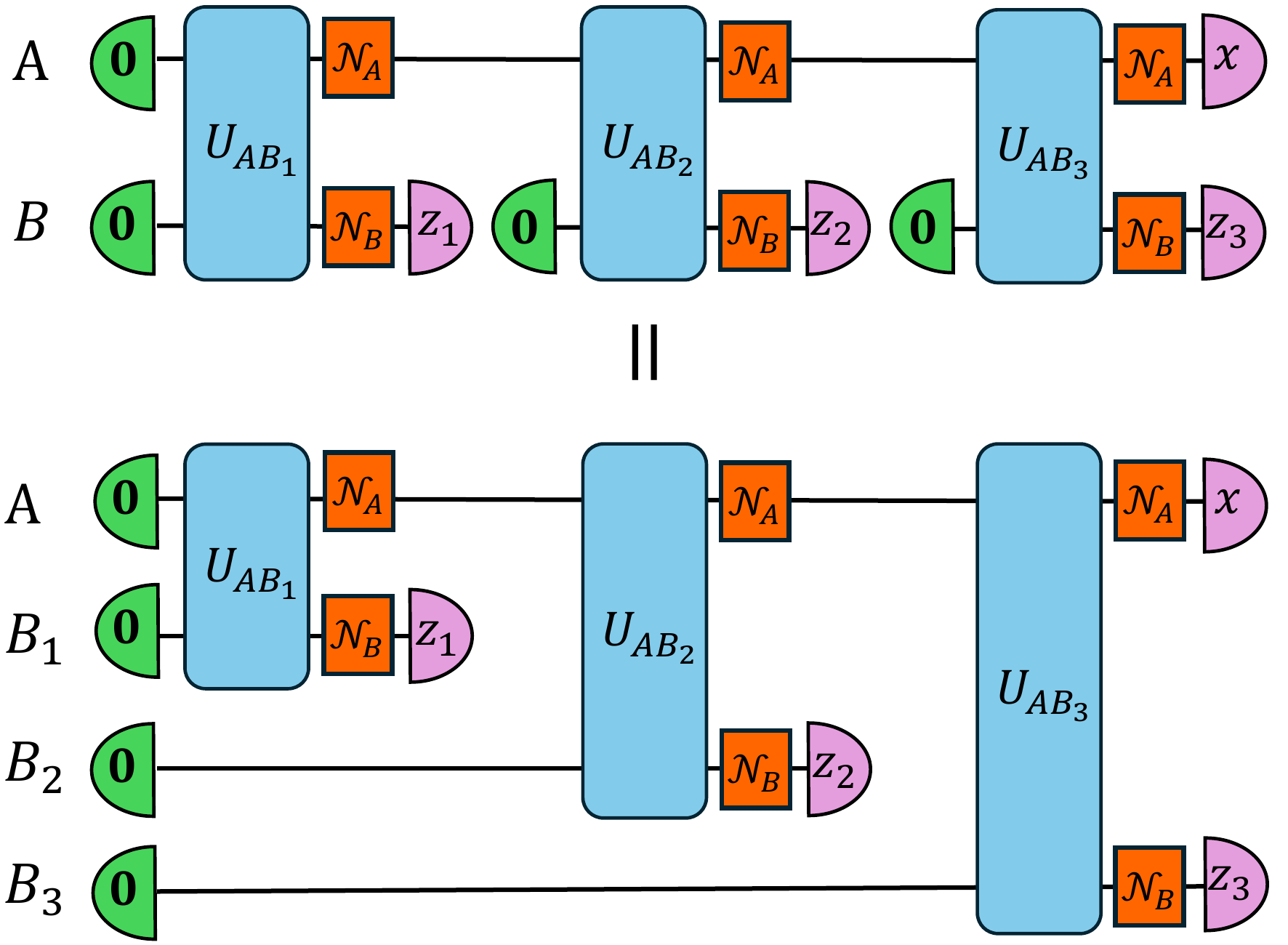}
    \caption{Circuit diagram for the expansion of bath system in noisy HRCS. $\calN_A, \calN_B$ represent effective depolarizing channel on system $A$ and bath $B$ with parameter $\gamma_A, \gamma_B$ separately. Here we show an example of $t=3$ steps.}
    \label{fig:noisy_circuit}
\end{figure}

In this section, we introduce a noisy model for HRCS to understand its performance in experiments.
To model the circuit noise in HRCS, we add a local depolarizing channel $\calN(\cdot)$ to the system $A$ and bath $B$ simultaneously following the circuit unitary $U_t$ in each step, as shown in Supplementary Fig.~\ref{fig:noisy_circuit} top.
Here, we consider the depolarizing channel on a $d$-dimensional system as
\be
    \calN(\rho) = \gamma \rho + (1-\gamma) \bI/d,
    \label{eq:depol_def}
\ee
where $\bI/d$ is the fully mixed state of dimension $d$ and $1-\gamma$ is depolarizing probability. Note that the effective parameter $\gamma$ can be different for system $A$ and bath $B$.

We begin with the analysis on linear cross entropy benchmarking (XEB). Similar to the noiseless derivation in \ref{app:sampling_HRCS}, we still expand the bath system in different steps, $B_1, B_2, \dots, B_t$, to an equivalent joint bath system $\bm B = (B_1, B_2, \dots, B_t)$, shown in Supplementary Fig.~\ref{fig:noisy_circuit} bottom. Therefore, we can write out the noisy sampling probability of $\bmz(t), x(t)$ as
\be
    \tilde{P}\left[\bmz(t), x(t)\right] = \tr\left(\tilde{\calU}\left[\ketbra{\bm 0}{\bm 0}_{A\bm B}\right] \ketbra{x(t), \bmz(t)}{x(t), \bmz(t)}_{A \bm B}\right),
    \label{eq:p_noisy}
\ee
where $\tilde{\calU}$ represents the noisy circuit channel, explicitly defined as
\be
    \tilde{\calU} \coloneqq \left(\calN_{AB_t} \circ \calU_t\right) \circ \cdots \circ \left(\calN_{AB_1} \circ \calU_1\right),
\ee
with $\calU_t(\cdot) \coloneqq U_t (\cdot) U_t^\dagger$ to be noiseless unitary channel. Here, $\calN_{AB_t} = \calN_A \otimes \calN_{B_t}$ is a composite of local channels on $A$ and $B_t$. Recall that the vanilla XEB~\cite{arute2019quantum} in noisy circuit becomes
\be
    \chi(t) = 2^{n} \left(\sum_{\bmz(t), x(t)} P\left[\bmz(t), x(t)\right] \tilde{P}\left[\bmz(t), x(t)\right]\right) -1,
    \label{eq:XEB_def_app}
\ee
where $n = N_A + tN_B$.
We evaluate the probability product as follows.
\begin{align}
    &P\left[\bmz(t), x(t)\right] \tilde{P}\left[\bmz(t), x(t)\right] \nonumber\\
    &=  \tr\left(\calU\left[\ketbra{\bm 0}{\bm 0}_{A\bm B}\right] \ketbra{x(t), \bmz(t)}{x(t), \bmz(t)}_{A \bm B}\right) \tr\left(\tilde{\calU}\left[\ketbra{\bm 0}{\bm 0}_{A\bm B}\right] \ketbra{x(t), \bmz(t)}{x(t), \bmz(t)}_{A \bm B}\right) \\
    &= \tr\left[\left(\calU\otimes\tilde{\calU}\right)\left[\ketbra{\bm 0}{\bm 0}_{A\bm B}^{\otimes 2}\right] \ketbra{x(t), \bmz(t)}{x(t), \bmz(t)}_{A \bm B}^{\otimes 2}\right]\\
    &= \tr\left[\left({\bm \calN}_A\circ\calU_t^{\otimes 2}\right)\circ\left({\bm \calN}_A\circ \calU_{t-1}^{\otimes 2}\right)\circ \cdots \left({\bm \calN}_A\circ \calU_{1}^{\otimes 2}\right)\left[\ketbra{\bm 0}{\bm 0}_{A\bm B}^{\otimes 2}\right]  \ketbra{x(t)}{x(t)}_A^{\otimes 2}\otimes_{j=1}^t {\bm \calN}_{B_j}^\dagger\left(\ketbra{z_j}{z_j}_{B_j}^{\otimes 2}\right)\right],
    \label{eq:noisy_p_sq}
\end{align}
where the two-folded channel $\calU\otimes \tilde{\calU}$ owns two copies of same unitary twirling but only one noisy channel on the second copy. As the local channels on bath systems $\calN_{B_1}, \cdots, \calN_{B_t}$ and last noisy channel on system $\calN_{A}$ are only followed by measurements without any further unitary operations, we adopt the adjoint channel map to isolate these channels from circuit dynamics, resulting in the last line. Here, we define ${\bm \calN}_A \coloneqq \calI \otimes \calN_A$ for simplicity and same definition holds for $\bm \calN_B$. 
As depolarizing channel is self-adjoint, we can immediately have
\be
   {\bm \calN}_B^\dagger\left(\ketbra{z}{z}_B^{\otimes 2}\right) = \ketbra{z}{z}_B \otimes \left(\gamma_B \ketbra{z}{z}_B + (1-\gamma_B)\frac{\bI}{d_B}\right) = \gamma_B \ketbra{z}{z}^{\otimes 2}_B + \frac{1-\gamma_B}{d_B} \ketbra{z}{z}_B \otimes \bI.
   \label{eq:noisy_proj}
\ee
Furthermore, the inner product between $\kett{\hat{e}}_B, \kett{\hat{\tau}}_B$ with this noisy projector is
\begin{align}
    &{\bm \calN}_B^\dagger\left(\bbra{z}\right)\kett{\hat{e}}_B =  \left(\gamma_B \bbra{z}_B + \frac{1-\gamma_B}{d_B} \bbra{z}_B\bbra{\hat{e}}_B\right) \kett{\hat{e}}_B = 1,\\
    &{\bm \calN}_B^\dagger\left(\bbra{z}\right)\kett{\hat{\tau}}_B =  \left(\gamma_B \bbra{z}_B + \frac{1-\gamma_B}{d_B} \bbra{z}_B\bbra{\hat{e}}_B\right) \kett{\hat{\tau}}_B = \gamma_B + \frac{1-\gamma_B}{d_B} \eqqcolon g_B,
\end{align}
where we define $g_B$ in the end for simplicity.

There are another two identities we need to introduce before the formal derivation, which is the noisy channel on identity and swap operators. For the identity operator, we simply have $\calN_A(\kett{\hat{e}}_A) = \kett{\hat{e}}_A$ since the depolarizing channel is unital. For the swap operator, we have
\begin{align}
    {\bm \calN}_A\left(\kett{\hat{\tau}}\right) &= (\calI \otimes \calN_A)\left[\sum_{i,j}\ketbra{i,j}{j,i}\right] = \sum_{i,j}\ketbra{i}{j}\otimes \calN_A(\ketbra{j}{i})\\
    &=\sum_{i,j}\ketbra{i}{j}\otimes \left(\gamma_A \ketbra{j}{i} + \frac{(1-\gamma_A)\delta_{ij}}{d}\bI\right)\\
    &= \gamma_A \kett{\hat{\tau}}_A + \frac{1-\gamma_A}{d_A}\kett{\hat{e}}_A.
\end{align}

We now consider the Haar ensemble average of each circuit unitary to derive analytical results. Starting from $t=1$, the unitary twirling of $U_1$ on $\ketbra{\bm 0}{\bm 0}_{AB_1}^{\otimes 2}$ followed by local noisy channels and bath measurements, we have
\begin{align}
    &{\bm \calN}_{B_1}^\dagger\left(\bbra{z_1}_{B_1}\right) {\bm \calN}_{A}\left[\E_{U_1\in \rm Haar}\calU_1^{\otimes 2} \left(\kett{\bm 0}_{AB_1}\right)\right] \nonumber\\
    &= \frac{1}{d_A d_B(d_A d_B + 1)}{\bm \calN}_{B_1}^\dagger\left(\bbra{z_1}_{B_1}\right) \left[{\bm \calN}_{A}\left(\kett{\hat{e}}_A\right) \kett{\hat{e}}_{B_1} + {\bm \calN}_{A}\left(\kett{\hat{\tau}}_A\right) \kett{\hat{\tau}}_{B_1}\right]\\
    &= \frac{1}{d_A d_B(d_A d_B + 1)} {\bm \calN}_{B_1}^\dagger\left(\bbra{z_1}_{B_1}\right) \left[\kett{\hat{e}}_A\kett{\hat{e}}_{B_1} + \gamma_A \kett{\hat{\tau}}_A\kett{\hat{\tau}}_{B_1} + \frac{1-\gamma_A}{d_A}\kett{\hat{e}}_A \kett{\hat{\tau}}_{B_1} \right]\\
    &= \frac{1}{d_A d_B(d_A d_B + 1)} \left[\left(1 + \frac{1-\gamma_A}{d_A}g_B\right)\kett{\hat{e}}_A + \gamma_A g_B\kett{\hat{\tau}}_A\right].
    \label{eq:noisy_p_sq_0}
\end{align}
As both the identity and swap operators appear, we next turn to the transition matrix under local depolarizing noise.
For twirling on identity, we have
\begin{align}
    &{\bm \calN}_{B_2}^\dagger\left(\bbra{z_2}_{B_2}\right) {\bm \calN}_{A}\left[\E_{U_2\in \rm Haar}\calU_2^{\otimes 2} \left(\kett{\hat{e}}_A\kett{\bm 0}_{B_2}\right)\right] \nonumber\\
    &= {\bm \calN}_{B_2}^\dagger\left(\bbra{z_2}_{B_2}\right) \left[\frac{d_A^2 d_B -1}{d_B\left(d_A^2 d_B^2-1\right)}  {\bm \calN}_{A}\left(\kett{\hat{e}}_A\right)\kett{\hat{e}}_{B_2} + \frac{d_A(d_B-1)}{d_B\left(d_A^2 d_B^2-1\right)} {\bm \calN}_{A}\left(\kett{\hat{\tau}}_A\right)\kett{\hat{\tau}}_{B_2}\right]\\
    &= \frac{d_A^2 d_B -1}{d_B\left(d_A^2 d_B^2-1\right)} \kett{\hat{e}}_A + \frac{d_A(d_B-1)}{d_B\left(d_A^2 d_B^2-1\right)} \left(\gamma_A \kett{\hat{\tau}}_A + \frac{1-\gamma_A}{d_A}\kett{\hat{e}}_A\right) g_B \\
    &= \left[\frac{d_A^2 d_B -1}{d_B\left(d_A^2 d_B^2-1\right)} + \frac{d_A(d_B-1)}{d_B\left(d_A^2 d_B^2-1\right)} \frac{1-\gamma_A}{d_A} g_B\right] \kett{\hat{e}}_A + \frac{d_A(d_B-1)}{d_B\left(d_A^2 d_B^2-1\right)} \gamma_A g_B  \kett{\hat{\tau}}_A\\
    &\coloneqq m_{00} \kett{\hat{e}}_A + m_{10} \kett{\hat{\tau}}_A.
    \label{eq:noisy_p_sq_1}
\end{align}
Similarly, we can derive the twirling on swap as
\begin{align}
    &{\bm \calN}_{B_2}^\dagger\left(\bbra{z_2}_{B_2}\right) {\bm \calN}_{A}\left[\E_{U_2\in \rm Haar}\calU_2^{\otimes 2} \left(\kett{\hat{\tau}}_A\kett{\bm 0}_{B_2}\right)\right] \nonumber\\
    &= {\bm \calN}_{B_2}^\dagger\left(\bbra{z_2}_{B_2}\right) \left[\frac{d_A(d_B-1)}{d_B\left(d_A^2 d_B^2-1\right)}  {\bm \calN}_{A}\left(\kett{\hat{e}}_A\right)\kett{\hat{e}}_{B_2} +  \frac{d_A^2 d_B -1}{d_B\left(d_A^2 d_B^2-1\right)} {\bm \calN}_{A}\left(\kett{\hat{\tau}}_A\right)\kett{\hat{\tau}}_{B_2}\right]\\
    &= \frac{d_A(d_B-1)}{d_B\left(d_A^2 d_B^2-1\right)} \kett{\hat{e}}_A + \frac{d_A^2 d_B -1}{d_B\left(d_A^2 d_B^2-1\right)} \left(\gamma_A \kett{\hat{\tau}}_A + \frac{1-\gamma_A}{d_A}\kett{\hat{e}}_A\right) g_B \\
    &= \left[\frac{d_A(d_B-1)}{d_B\left(d_A^2 d_B^2-1\right)} +\frac{d_A^2 d_B -1}{d_B\left(d_A^2 d_B^2-1\right)} \frac{1-\gamma_A}{d_A} g_B\right] \kett{\hat{e}}_A + \frac{d_A^2 d_B -1}{d_B\left(d_A^2 d_B^2-1\right)} \gamma_A g_B  \kett{\hat{\tau}}_A\\
    &\coloneqq m_{01} \kett{\hat{e}}_A + m_{11} \kett{\hat{\tau}}_A.
    \label{eq:noisy_p_sq_2}
\end{align}
Combining Eqs.~\eqref{eq:noisy_p_sq_1} and \eqref{eq:noisy_p_sq_2}, we can summarize the noisy transition matrix as
\begin{align}
    \tilde{M} = \begin{pmatrix}
        m_{00} & m_{01} \\
        m_{10} & m_{11}
    \end{pmatrix},
    \label{eq:noisy_M_def}
\end{align}
and when $\gamma_A = \gamma_B = 1$ for noiseless circuit, the transition matrix $\tilde{M}$ reduces to the transition matrix $M$ in Eq.~\eqref{eq:M_def} as expected.

The ensemble average of the sum in Eq.~\eqref{eq:noisy_p_sq} then becomes
\begin{align}
    &\E_{U\in \rm Haar}P\left[\bmz(t), x(t)\right] \tilde{P}\left[\bmz(t), x(t)\right] \nonumber\\
    &= \frac{1}{d_A d_B(d_A d_B+1)}\bra{\underline{0}}\tilde{M}^{t-1}\left[\left(1 + \frac{1-\gamma_A}{d_A}g_B\right)\ket{\underline{0}} + \gamma_A g_B\ket{\underline{1}}\right] \nonumber\\
    &\quad + \frac{1}{d_A d_B(d_A d_B+1)}\bra{\underline{1}}\tilde{M}^{t-1}\left[\left(1 + \frac{1-\gamma_A}{d_A}g_B\right)\ket{\underline{0}} + \gamma_A g_B\ket{\underline{1}}\right]\\
    &= \frac{1}{d_A d_B(d_A d_B+1)}\left(\bra{\underline{0}} + \bra{\underline{1}}\right) \tilde{M}^{t-1}\left[\left(1 + \frac{1-\gamma_A}{d_A}g_B\right)\ket{\underline{0}} + \gamma_A g_B\ket{\underline{1}}\right],
    \label{eq:noisy_p_sq_result}
\end{align}
where we still use $\ket{\underline{0}}, \ket{\underline{1}}$ to represent $\kett{\hat{e}}, \kett{\hat{\tau}}$ as in \ref{app:sampling_HRCS}. The closed form of Eq.~\eqref{eq:noisy_p_sq_result} is too lengthy to present here, but in principle, one can evaluate Eq.~\eqref{eq:noisy_p_sq_result} numerically as it is only a $2$-dimensional linear system. The vanilla linear XEB then becomes
\be
    \chi_{\rm HRCS}(t) = 2^n \frac{1}{d_A d_B(d_A d_B+1)}\left(\bra{\underline{0}} + \bra{\underline{1}}\right) \tilde{M}^{t-1}\left[\left(1 + \frac{1-\gamma_A}{d_A}g_B\right)\ket{\underline{0}} + \gamma_A g_B\ket{\underline{1}}\right] - 1.
    \label{eq:full_noisy_xeb_app}
\ee

Here we provide an approximate approach to simplify the calculation in the asymptotic limit of $d_A, d_B \gg 1$. Before evaluation of the matrix power $\tilde{M}^{t-1}$, we first keep the leading order of the matrix $\tilde{M}$ as
\begin{align}
    \tilde{M} \simeq \begin{pmatrix}
        1/d_B^2 & (1+\gamma_B - \gamma_A \gamma_B)/(d_A d_B^2) \\
        \gamma_A \gamma_B/(d_A d_B^2) & \gamma_A \gamma_B / d_B^2
    \end{pmatrix},
\end{align}
and
\be
    \frac{1}{d_A d_B(d_A d_B + 1)}\left[\left(1 + \frac{1-\gamma_A}{d_A}g_B\right)\kett{\hat{e}} + \gamma_A g_B\kett{\hat{\tau}}\right] \simeq \frac{1}{d_A^2 d_B^2} \kett{\hat{e}} + \frac{\gamma_A \gamma_B}{d_A^2 d_B^2} \kett{\hat{\tau}}.
\ee
We now can evaluate the linear XEB between ideal and noisy distribution as
\begin{align}
    &2^n \E_{U\in \rm Haar}P\left[\bmz(t), x(t)\right] \tilde{P}\left[\bmz(t), x(t)\right] = \left(\bra{\underline{0}} + \bra{\underline{1}}\right) \tilde{M}^{t-1}\left[\frac{1}{d_A^2 d_B^2}\ket{\underline{0}} + \frac{\gamma_A \gamma_B}{d_A^2 d_B^2}\ket{\underline{1}}\right] \\
    &\simeq \frac{d_B^{-t}}{d_A^3(\gamma_A \gamma_B-1)}\left[\gamma_A^t \gamma_B^t \left(d_A^2 (\gamma_A \gamma_B-1)-\gamma_A \gamma_B^2\right) + 1+\gamma_B+\gamma_A \gamma_B d_A ((\gamma_A-1) \gamma_B-2) + d_A^2(\gamma_A \gamma_B - 1)\right.\nonumber\\
    &\qquad \qquad \qquad \qquad \left. + \gamma_A \gamma_B^2 \left(\gamma_A \left(\gamma_A^t \gamma_B^t-1\right)+1\right)\right] \\
    &\simeq \frac{\gamma_A^t \gamma_B^t+1}{d_A d_B^t} + \frac{\gamma_A \gamma_B (2  + \gamma_B-\gamma_A \gamma_B)}{d_A^2 d_B^{t}(1-\gamma_A \gamma_B)}.
\end{align}
Therefore, we can have the approximate form of the vanilla linear XEB in the noisy HRCS circuit as
\begin{align}
    \chi_{\rm HRCS}(t) &= 2^n 2^n \E_{U\in \rm Haar} P\left[\bmz(t), x(t)\right] \tilde{P}\left[\bmz(t), x(t)\right] -1 \nonumber\\
    &= 2^n\left[\frac{\gamma_A^t \gamma_B^t+1}{d_A d_B^t} + \frac{\gamma_A \gamma_B (2  + \gamma_B-\gamma_A \gamma_B)}{d_A^2 d_B^{t}(1-\gamma_A \gamma_B)}\right] -1\\
    &= \gamma_A^t \gamma_B^t + \frac{\gamma_A \gamma_B (2  + \gamma_B -\gamma_A \gamma_B)}{d_A(1-\gamma_A \gamma_B)}.
    \label{eq:noisy_XEB_asymp_app}
\end{align}

In the main text, we plot the normalized XEB for fair comparison across different time. The normalized XEB theory for noisy circuit becomes
\begin{align}
    \tilde{\chi}_{\rm HRCS}(t) &= \frac{\chi_{\rm HRCS}(t)}{2^n Z_{\rm HRCS}(t) -1} \simeq \frac{1}{2\exp\left(\frac{t-1}{d_A}\right)-1} \left[\gamma_A^t \gamma_B^t + \frac{\gamma_A \gamma_B (2  + \gamma_B -\gamma_A \gamma_B)}{d_A(1-\gamma_A \gamma_B)}\right] \label{eq:XEBnorm_app}.
\end{align}
In the early time, the normalized XEB presents exponentially decay as $\sim \gamma_A^t \gamma_B^t$ as conventional RCS, while in the late time for $\epsilon$-approximate AC, the decay of normalized XEB is mainly dominated by the growth of $Z_{\rm HRCS}(t)$. We also notice that in Eq.~\eqref{eq:XEBnorm_app}, the product of noisy parameter $\gamma_A \gamma_B$ determines the leading-order scaling.

We next discuss the relation between noisy XEB and state fidelity.
XEB is known to become a proxy for the state fidelity in conventional RCS though with limitations~\cite{gao2024limitations}. Here we examine whether the XEB in HRCS can also work as a proxy for state fidelity. It is easy to see that the XEB does not characterize the fidelity of physical output conditional state $\ket{\psi_\bmz}_A$ as the support dimension does not match. Instead, we consider the fidelity of the equivalent joint state $U\ket{0}_A \ket{0}_{\bm B}$ supported on system and equivalent expanded baths (see Supplementary Fig.~\ref{fig:noisy_circuit} bottom). The fidelity between noisy state and ideal state is
\begin{align}
    F &= \braket{\psi|\rho|\psi} = \tr\left(\tilde{\calU}\left[\ketbra{\bm 0}{\bm 0}_{A\bm B}\right] \calU\left[\ketbra{\bm 0}{\bm 0}_{A\bm B}\right]\right)\\
    &=\tr\left(\left(\calU\otimes \tilde{\calU}\right)\left[\ketbra{\bm 0}{\bm 0}_{A\bm B}^{\otimes 2}\right] \tau_A \tau_{\bm B} \right)\\
    &= \tr\left[\left({\bm \calN}_A\circ\calU_t^{\otimes 2}\right) \circ \cdots \left({\bm \calN}_A\circ\calU_1^{\otimes 2}\right) \left[\ketbra{\bm 0}{\bm 0}_{A\bm B}^{\otimes 2}\right] \tau_A \otimes_{j=1}^t {\bm \calN_{B_j}}^\dagger\left(\tau_{B_j}\right) \right].
    \label{eq:noisy_fidelity_def}
\end{align}
Starting from $t=1$, the unitary twirling of $U_1$ on $\ketbra{\bm 0}{\bm 0}_{A B_1}^{\otimes 2}$ followed by local noisy channels, we have
\begin{align}
    &{\bm \calN}_{B_1}^\dagger\left(\bbra{\hat{\tau}}_{B_1}\right) {\bm \calN}_{A}\left[\E_{U_1\in \rm Haar}\calU_1^{\otimes 2} \left(\kett{\bm 0}_{AB_1}\right)\right] \nonumber\\
    &= \frac{1}{d_A d_B(d_A d_B + 1)}{\bm \calN}_{B_1}^\dagger\left(\bbra{\hat{\tau}}_{B_1}\right) \left[{\bm \calN}_{A}\left(\kett{\hat{e}}_A\right) \kett{\hat{e}}_{B_1} + {\bm \calN}_{A}\left(\kett{\hat{\tau}}_A\right) \kett{\hat{\tau}}_{B_1}\right]\\
    &= \frac{1}{d_A d_B(d_A d_B + 1)} \left(\gamma_B \bbra{\hat{\tau}}_{B_1} + \frac{1-\gamma_B}{d_B}\bbra{\hat{e}}_{B_1}\right) \left[\kett{\hat{e}}_A\kett{\hat{e}}_{B_1} + \gamma_A \kett{\hat{\tau}}_A\kett{\hat{\tau}}_{B_1} + \frac{1-\gamma_A}{d_A}\kett{\hat{e}}_A \kett{\hat{\tau}}_{B_1} \right]\\
    &= \frac{1}{d_A d_B(d_A d_B + 1)} \left[\left(d_B + \frac{(1-\gamma_A)(1-\gamma_B+\gamma_B d_B^2)}{d_A}\right)\kett{\hat{e}}_A + \gamma_A(1-\gamma_B +\gamma_B d_B^2) \kett{\hat{\tau}}_A\right].
    \label{eq:noisy_f_0}
\end{align}
We next calculate the noisy twirling on identity and swap operator separately.
\begin{align}
    &{\bm \calN}_{B_2}^\dagger\left(\bbra{\hat{\tau}}_{B_2}\right) {\bm \calN}_{A}\left[\E_{U_2\in \rm Haar}\calU_2^{\otimes 2} \left(\kett{\hat{e}}_A\kett{\bm 0}_{B_2}\right)\right] \nonumber\\
    &= \left(\gamma_B \bbra{\hat{\tau}}_{B_2} + \frac{1-\gamma_B}{d_B} \bbra{\hat{e}}_{B_2}\right) \left[\frac{d_A^2 d_B -1}{d_B\left(d_A^2 d_B^2-1\right)}  {\bm \calN}_{A}\left(\kett{\hat{e}}_A\right)\kett{\hat{e}}_{B_2} + \frac{d_A(d_B-1)}{d_B\left(d_A^2 d_B^2-1\right)} {\bm \calN}_{A}\left(\kett{\hat{\tau}}_A\right)\kett{\hat{\tau}}_{B_2}\right]\\
    &= \frac{d_A^2 d_B -1}{d_B\left(d_A^2 d_B^2-1\right)} \left(\gamma_B d_B + \frac{1-\gamma_B}{d_B} d_B^2\right)\kett{\hat{e}}_A + \frac{d_A(d_B-1)}{d_B\left(d_A^2 d_B^2-1\right)} \left(\gamma_B d_B^2 + \frac{1-\gamma_B}{d_B} d_B\right) \left(\gamma_A \kett{\hat{\tau}}_A + \frac{1-\gamma_A}{d_A}\kett{\hat{e}}_A\right) \\
    &= \frac{1}{d_A^2 d_B^2 -1}\left[d_A^2 d_B -1 + \frac{(d_B-1)(1-\gamma_A)}{d_B} \left(1-\gamma_B + \gamma_B d_B^2\right)\right] \kett{\hat{e}}_A + \frac{d_A(d_B-1)\gamma_A}{d_B\left(d_A^2 d_B^2-1\right)} \left(1-\gamma_B + \gamma_B d_B^2\right)   \kett{\hat{\tau}}_A\\
    &\coloneqq s_{00} \kett{\hat{e}}_A + s_{10} \kett{\hat{\tau}}_A.
    \label{eq:noisy_f_1}
\end{align}

\begin{align}
    &{\bm \calN}_{B_2}^\dagger\left(\bbra{\hat{\tau}}_{B_2}\right) {\bm \calN}_{A}\left[\E_{U_2\in \rm Haar}\calU_2^{\otimes 2} \left(\kett{\hat{\tau}}_A\kett{\bm 0}_{B_2}\right)\right] \nonumber\\
    &= \left(\gamma_B \bbra{\hat{\tau}}_{B_2} + \frac{1-\gamma_B}{d_B} \bbra{\hat{e}}_{B_2}\right) \left[\frac{d_A(d_B-1)}{d_B\left(d_A^2 d_B^2-1\right)}  {\bm \calN}_{A}\left(\kett{\hat{e}}_A\right)\kett{\hat{e}}_{B_2} +  \frac{d_A^2 d_B -1}{d_B\left(d_A^2 d_B^2-1\right)} {\bm \calN}_{A}\left(\kett{\hat{\tau}}_A\right)\kett{\hat{\tau}}_{B_2}\right]\\
    &= \frac{d_A(d_B-1)}{d_B\left(d_A^2 d_B^2-1\right)}\left(\gamma_B d_B + \frac{1-\gamma_B}{d_B} d_B^2\right)  \kett{\hat{e}}_A + \frac{d_A^2 d_B -1}{d_B\left(d_A^2 d_B^2-1\right)} \left(\gamma_B d_B^2 + \frac{1-\gamma_B}{d_B} d_B\right)  \left(\gamma_A \kett{\hat{\tau}}_A + \frac{1-\gamma_A}{d_A}\kett{\hat{e}}_A\right)  \\
    &= \frac{1}{d_A^2 d_B^2-1} \left[ d_A(d_B-1) + \frac{(d_A^2 d_B-1)(1-\gamma_A)}{d_A d_B}\left(1-\gamma_B + \gamma_B d_B^2\right)\right] \kett{\hat{e}}_A + \frac{\left(d_A^2 d_B -1\right)\gamma_A}{d_B\left(d_A^2 d_B^2-1\right)} \left(1-\gamma_B + \gamma_B d_B^2\right)  \kett{\hat{\tau}}_A\\
    &\coloneqq s_{01} \kett{\hat{e}}_A + s_{11} \kett{\hat{\tau}}_A.
    \label{eq:noisy_f_2}
\end{align}
The noisy transition matrix can also be formulated as a $2\times 2$ matrix
\be
    S = \begin{pmatrix}
        s_{00} & s_{01} \\
        s_{10} & s_{11}
    \end{pmatrix}.
\ee
The ensemble average fidelity in Eq.~\eqref{eq:noisy_fidelity_def} then becomes
\begin{align}
    F &= \frac{1}{d_A d_B(d_A d_B + 1)} \left(d_A\bra{\underline{0}} + d_A^2 \bra{\underline{1}}\right) S^{t-1}   \left[\left(d_B + \frac{(1-\gamma_A)(1-\gamma_B+\gamma_B d_B^2)}{d_A}\right) \ket{\underline{0}} + \gamma_A\left(1-\gamma_B + \gamma_B d_B^2\right) \ket{\underline{1}}\right] \label{eq:fid_exact_app} \\
    &\simeq (\gamma_A \gamma_B)^t,
    \label{eq:fid_asymp_app}
\end{align}
where we provide a simple asymptotic form of the fidelity. This asymptotic results can be interpreted as the counting of the number of depolarizing channels in the noisy HRCS model.

% In the large system and bath limit of $d_A \gg 1, d_B \gg 1$, we can simplify it to the form presented in the main text, and we also list it here for reference
% \be
%     \tilde{\calF}_{\rm XEB}(t) \simeq \gamma ^{2 t} \left(1    +\frac{1-\gamma}{\gamma d_B}t\right) + \frac{\gamma   }{  \left(1-\gamma \right) d_A }.
%     \label{eq:noisy_XEB_asymp_app}
% \ee

\begin{figure}[t]
    \centering
    \includegraphics[width=0.65\textwidth]{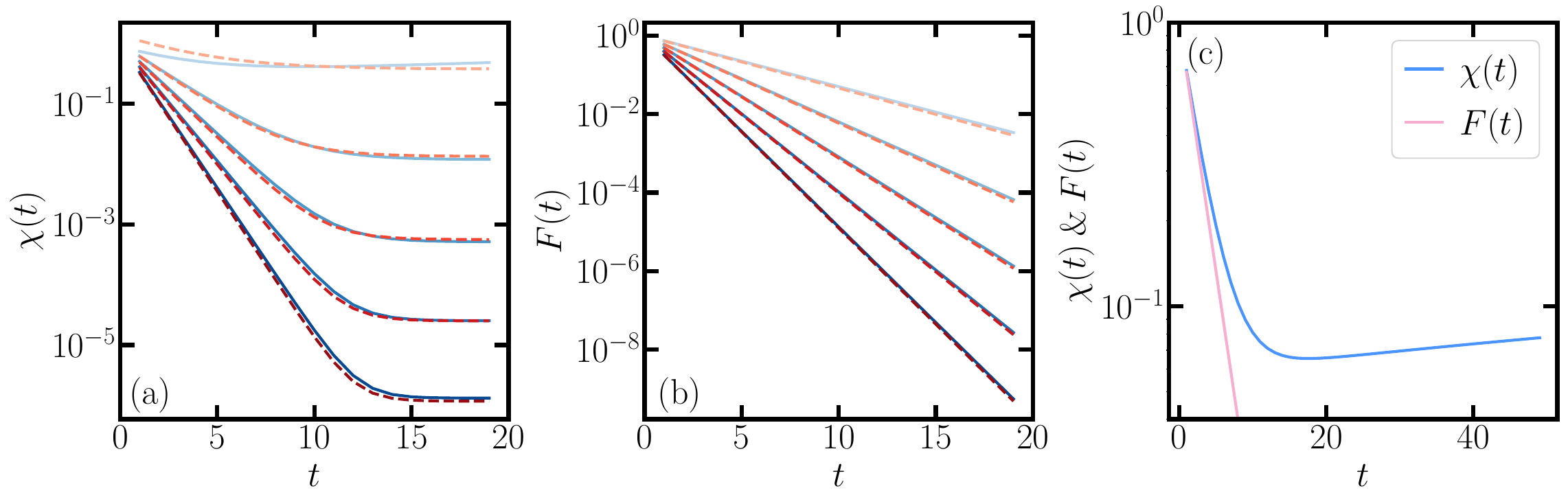}
    \caption{Noisy theory for XEB and fidelity. (a) Exact theory (blues solid lines) and asymptotic theory (red dashed lines) of vanilla XEB from Eq.~\eqref{eq:full_noisy_xeb_app} and Eq.~\eqref{eq:noisy_XEB_asymp_app}. (b) Exact theory (blues solid lines) and asymptotic theory (red dashed lines) of fidelity from Eq.~\eqref{eq:fid_exact_app} and Eq.~\eqref{eq:fid_asymp_app}. 
    In both cases, we choose $N_B = 2$ and various $N_A=4, 8, 12, 16, 20$ (top to bottom). The depolarizing channel parameters are set as $\gamma_A=q^{N_A}, \gamma_B = q^{N_B}$ with $q = 0.95$. In (c), we compare the vanilla noisy XEB and fidelity with $N_A = 6, N_B = 2, q=0.95$.}
    \label{fig:noisy_XEB}
\end{figure}

In Supplementary Fig.~\ref{fig:noisy_XEB}a, we plot the exact vanilla XEB from Eq.~\eqref{eq:full_noisy_xeb_app} in a system of $N_B = 2$ and different $N_A$ (blue solid lines). Here we model the depolarizing channel as $\gamma_A = q^{N_A}$ and $\gamma_B = q^{N_B}$ with $q$ to be a free parameter. 
With the increase of system size $N_A$, the exact theory and asymptotic one (red dashed lines) in Eq.~\eqref{eq:noisy_XEB_asymp_app} agrees well despite small discrepancy in late time.
In Supplementary Fig.~\ref{fig:noisy_XEB}b, the noisy theory of fidelity also agrees very well with the asymptotic theory of Eq.~\eqref{eq:fid_asymp_app}, which also aligns the physical intuition in RCS. Therefore, through a simple comparison in Fig.~\ref{fig:noisy_XEB}c, the vanilla XEB can be regarded as a fidelity proxy as $(\gamma_A \gamma_B)^t$ in the early time of $t \ll N_A$ but fails at the late time, due to the heavy accumulated noise in the circuit, which is also identified in Ref.~\cite{gao2024limitations}.

In the end of this section, we present the evaluation of XEB in the experiments of Fig.~\ref{fig:experiment}b with patch circuit.
As the two patches are disjoint circuits, the joint sampling distribution can be factorized as $P[\bmz_1(t), x_1(t),\bmz_2(t), x_2(t)] = P[\bmz_1(t), x_1(t)]P[\bmz_2(t), x_2(t)]$. Therefore, when each patch has $N_A'$ system and $N_B'$ bath qubits, the ideal XEB fidelity of the two-patch circuit becomes
\begin{align}
    \calF_{\rm XEB,patch}(t) = \left(d_{A^\prime} d_{B^\prime}^t\right)^2\left[Z_{\rm HRCS}(t)\right]^2 -1 = \frac{4d_{A^\prime}^2 d_{B^\prime}^{2t} (d_{A^\prime} + 1)^{2(t-1)}}{(1+d_{A^\prime} d_{B^\prime})^{2t}} -1,
\end{align}
where $Z_{\rm HRCS}(t)$ is the ensemble-averaged CP in HRCS with dimensions $d_{A^\prime} = 2^{N_A'}, d_{B^\prime}=2^{N_B'}$ (see Eq.~\eqref{eq:HRCS_CP_spt_app}). In the large system limit of $N_A'\gg 1$, the ideal XEB fidelity scales as $\sim 4\exp\left[\frac{2t\left(1-d_{B^\prime}^{-1}\right)+d_{B^\prime}^{-1}-1}{d_{A^\prime}}\right]-1$, which still exponentially grows with temporal steps $t$ but suppressed by the system dimension $d_{A^\prime}$. Similarly, the noisy theory of XEB for patch circuit becomes
\begin{align}
    \chi_{\rm 2-patch}(t) =  \left[1+\chi_{\rm HRCS}(t)\right]^2 -1,
\end{align}
where $\chi_{\rm HRCS}(t)$ is the noisy theory of XEB for single HRCS in Eq.~\eqref{eq:noisy_XEB_asymp_app} with dimenions $d_{A^\prime} = 2^{N_A'}$, $d_{B^\prime} =2^{N_B'}$.

\section{Additional details of experiments}
\label{app:experiment}
In this section, we provide additional details of the experiments on IBM Quantum devices.

In the first experiment of Fig.~\ref{fig:experiment}a in the main text, we implement a circuit of $N_A = 5, N_B = 5$ qubits on the IBMQ Torino and in each step, we apply a $L=8$ layer parameterized circuit. Due to the automatic qubit allocation from Qiskit, for temporal steps $t\le 10$, our circuit uses qubits shown in Supplementary Fig.~\ref{fig:circuit_5_5}a (blue circuits) while for $t=13, 16$, it uses qubits shown in Supplementary Fig.~\ref{fig:circuit_5_5}b. In both cases, the circuit is applied on a line of 1D qubits and thus the two-qubit gates are only applied on nearest neighbors as we requested. 
In Supplementary Fig.~\ref{fig:gates}a, we show the number of native gates of IBMQ Torino used in our experiments for different $t$ thus effective qubit number $N_{\rm eff}(t)$, and the number of gates increases linearly with $t$ as expected. The number of gates in each step is approximately the same, and for the largest temporal steps of $t=16$, we use $5049$ SX gates, $3684$ RZ gates and $1152$ CZ gates. For the larger-scale experiment of Fig.~\ref{fig:experiment}b in the main text, we show the number of native gates in the circuits in Supplementary Fig.~\ref{fig:gates}b. As we implement two patch circuits, the number of gates used in this experiment is roughly two times the one in Supplementary Fig.~\ref{fig:gates}a for the same $t$.

In Supplementary Fig.~\ref{fig:error_5_5}, we present the typical circuit operation error rates and qubit lifetime in our experiments of Fig.~\ref{fig:experiment}a in the main text, provided by IBM Quantum. The top and bottom panels correspond to the experiments with $t\le 10$ and $t =13, 16$. The detailed gate and readout error for each qubit can be found in the subplots, and we list the average error rate over used qubits, which are also presented in Methods. For $t\le 10$, the average SX gate error is $e_{\rm SX} = (3.07\pm 1.27) \times 10^{-4}$, CZ gate error is $e_{\rm CZ} = (2.64 \pm 0.673)\times 10^{-3}$ and readout error $e_{\rm read} = (2.14\pm 1.48)\times 10^{-2}$. For $t = 13, 16$, the average SX gate error is $e_{\rm SX} = (3.38\pm 1.86) \times 10^{-4}$, CZ gate error is $e_{\rm CZ} = (2.39 \pm 0.382)\times 10^{-3}$ and readout error $e_{\rm read} = (2.36\pm 1.55)\times 10^{-2}$. The average qubit relaxation time for $t\le 10$ is $T_1 = 146\pm 51.2 \mu{\rm s}$  and the average dephasing time is $T_2 = 120 \pm 48.4 \mu {\rm s}$. For $t = 13, 16$, they are $T_1 = 180\pm 25.4 \mu{\rm s}, T_2 = 131 \pm 49.9 \mu {\rm s}$.

Finally, we list the XEB results in conventional RCS experiments up to now in Table~\ref{tab:xeb_list} and our HRCS results in Table~\ref{tab:xeb_hrcs}. Note that Quantinuum utilizes a different method of mirror benchmarking (MB) to estimate the state fidelity. 
Here we do {\em not} aim to compare the performance in terms of state/device fidelity which is unfair across different platforms, circuit architectures and sampling algorithms. Instead, we report the experimental data here for reference in terms of sampling on the majority of classical distribution revealed by XEB.

\begin{figure*}[t]
    \centering
    \includegraphics[width=0.55\textwidth]{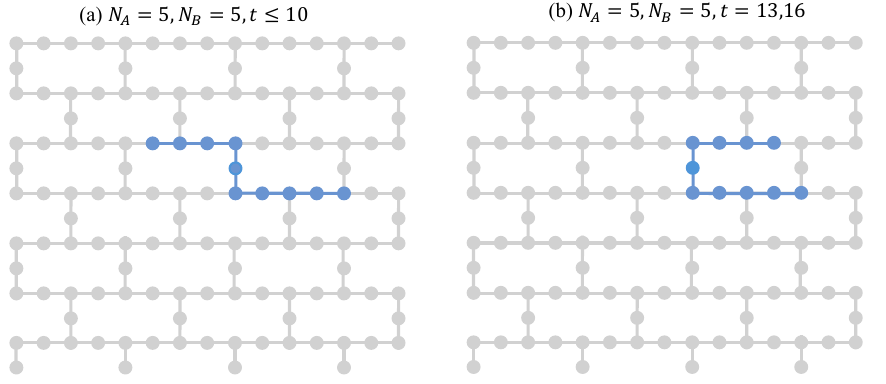}
    \caption{Circuit details in the experiment of Fig.~\ref{fig:experiment}. In (a) and (b), we plot the qubits that are used for the experiment with $t\le 10$ and $t=13,16$ on IBMQ Torino.}
    \label{fig:circuit_5_5}
\end{figure*}

\begin{figure*}[t]
    \centering
    \includegraphics[width=0.55\textwidth]{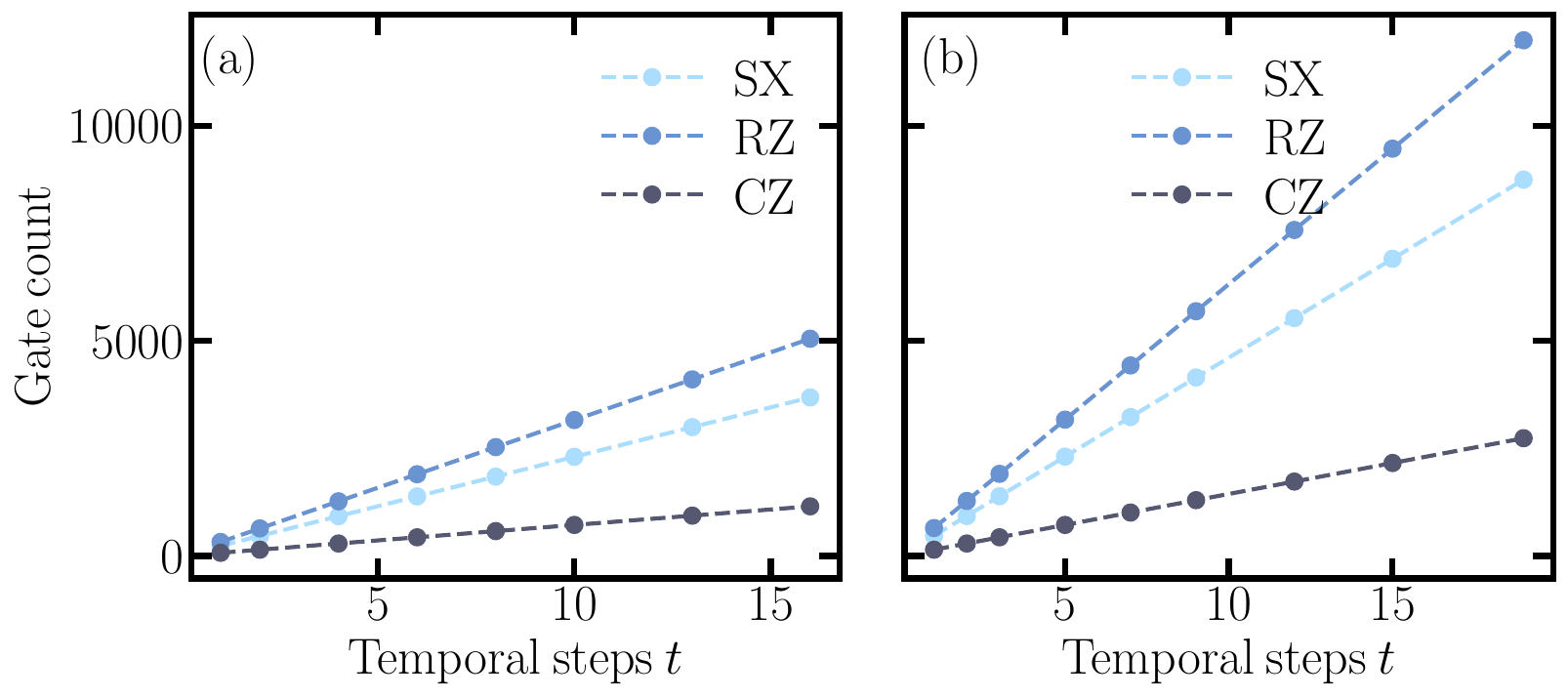}
    \caption{Number of SX, RZ and CZ gates used in experiments of Fig.~\ref{fig:experiment} (a) and (b) are shown in subplots a and b separately.}
    \label{fig:gates}
\end{figure*}

\begin{figure*}[t]
    \centering
    \includegraphics[width=\textwidth]{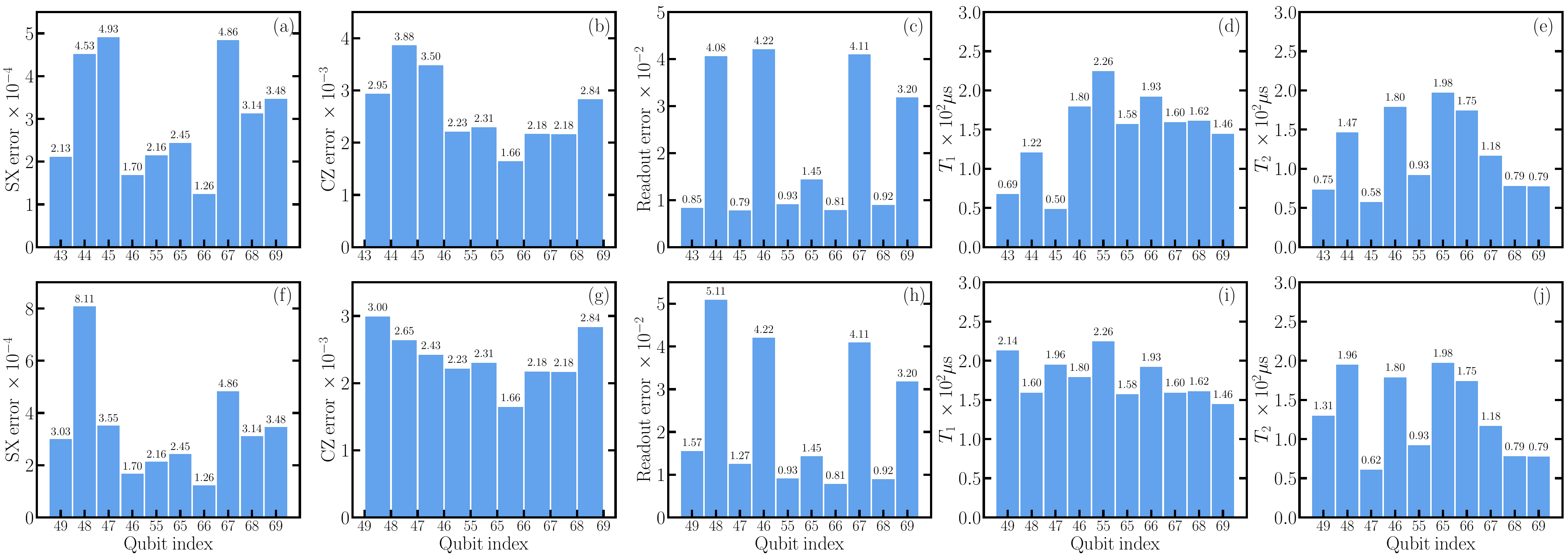}
    \caption{Circuit operation error rates and qubit lifetime in the experiment of Fig.~\ref{fig:experiment}a. We list the SX error, CZ error, readout error, relaxation time $T_1$ and dephasing time $T_2$ of qubits. (a)-(e) correspond to experiments with $t\le 10$ and (f)-(j) correspond to experiments with $t=13, 16$. In (a), (c), (f), (h), the bars show the error of the single qubit and in (b) and (g), each bar shows the error of the CZ gate connecting the qubit of left and right edges of the bar.}
    \label{fig:error_5_5}
\end{figure*}

\begin{table*}
    \centering
    \renewcommand{\arraystretch}{1.3} % more row spacing
    \begin{tabular}{|c|c|c|c|}
    \hline
    Author \& Platform  &  \# of qubits 
    & Circuit & Fidelity (estimator) \\
    \hline
    Google 2019 & \multirow{2}{*}{53} & 2D circuit (elided) & \multirow{2}{*}{0.00224 (XEB)}\\
    Sycamore & & 20 cycles &  \\
    \hline
    USTC 2021 & \multirow{2}{*}{56} & 2D circuit (elided) & \multirow{2}{*}{0.000662 (XEB)}\\
    {\em Zuchongzhi} & & 20 cycles & \\
    \hline
    USTC 2022 & \multirow{2}{*}{60} & 2D circuit (elided) & \multirow{2}{*}{0.000366 (XEB)}\\
    {\em Zuchongzhi} 2.1 & & 24 cycles & \\
    \hline
    Google 2024 & \multirow{2}{*}{70} & 2D phased matched circuit (2-patch) & \multirow{2}{*}{0.0006 (XEB)}\\
    SYC-70 & & 26 cycles & \\
    \hline
    USTC 2025 & \multirow{2}{*}{83} & 2D circuit (4-patch) & \multirow{2}{*}{0.0003 (XEB)}\\
    {\em Zuchongzhi} 3.0 & & 32 cycles & \\
    \hline
    Quantinuum 2025 & \multirow{2}{*}{56} & randomly connected circuit & \multirow{2}{*}{0.18 (MB)}\\
    H2 & & 20 cycles & \\
    \hline
    Google 2025 & \multirow{2}{*}{103} & 2D circuit (3-patch) & \multirow{2}{*}{0.001 (XEB)}\\
    Willow & & 40 cycles & \\
    \hline
    Quantinuum 2025 & \multirow{2}{*}{98} & randomly connected circuit & \multirow{2}{*}{0.0449 (MB)}\\
    Helios & & 26 cycles & \\
    \hline
    \end{tabular}
    \caption{List of conventional RCS experiments~\cite{arute2019quantum, wu2021strong, zhu2022quantum, morvan2024phase, deCross2025, gao2025establishing, abanin2025observation, ransford2025helios}. Google and USTC implemented circuits on their superconducting quantum chip of a two-dimensional grid. Quantinuum implemented their circuits on their trapped-ion chip with arbitrary geometric connectivity. Google and USTC utilize XEB to becnhamrk state fidelity while Quantinuum turns to mirror benchmarking (MB).}
    \label{tab:xeb_list}
\end{table*}

\begin{table*}
    \centering
    \renewcommand{\arraystretch}{1.3} % more row spacing
    \begin{tabular}{|c|c|c|c|}
    \hline
    \# of classical bits & circuit & vanilla XEB & normalized XEB \\
    \hline
    10 & full & 0.528 & 0.529\\
    \hline
    55 & full & 0.0721 & 0.0447\\
    \hline
    85 & full & 0.0443 & 0.0208 \\
    \hline
    \hline
    20 & 2-patch & 1.21 & 0.405 \\
    \hline
    100 & 2-patch & 0.118 & 0.0218 \\
    \hline
    200 & 2-patch & 0.0593 & 0.00556 \\
    \hline
    \end{tabular}
    \caption{Experimental result of our HRCS on IBM Quantum Torino. In the top panel of experiment in Fig.~\ref{fig:experiment}a, we implement one-dimensional circuit with $10$ qubits, and in the botttom panel of experiment in Fig.~\ref{fig:experiment}b, we implement two patch of one-dimensional circuit with a total of $20$ qubits.}
    \label{tab:xeb_hrcs}
\end{table*}

\end{widetext}

\end{document}